\newtheorem{theorem}{Theorem}
\newtheorem{lemma}[theorem]{Lemma}
\theoremstyle{remark}
\newtheorem*{remark}{Remark}
\newcommand{\myStrut}[1]{\parbox{0.11 pt}{\rule{0 ex}{#1 ex}}}
\newcommand{\te}[2]{\parbox[b]{#1 cm}{ \centering #2}}
\newcommand{\be}{\begin{equation}}
\newcommand{\ee}{\end{equation}}
\newcommand{\bc}{\begin{center}}
\newcommand{\ec}{\end{center}}
\newcommand{\bmt}{\begin{pmatrix}}
\newcommand{\emt}{\end{pmatrix}}
\newcommand{\bsmt}{\left(\begin{smallmatrix}}
\newcommand{\esmt}{\end{smallmatrix}\right)}
\newcommand{\vpu}[1]{^{\vphantom{#1}}}
\newcommand{\graw}{\mathcal{G}}
\newcommand{\gl}{\bar{\mathcal{G}}}
\newcommand{\gEl}{\mathcal{G}}
\newcommand{\glt}{\bar{\mathcal{G}}_{\mathbb{E}}}
\newcommand{\gc}{\bar{\mathcal{G}}_c}
\newcommand{\gEc}{\mathcal{G}_c}
\newcommand{\gZ}{\bar{\mathcal{G}}_0}
\newcommand{\gEZ}{\mathcal{G}_0}
\newcommand{\gZb}{\bar{\mathcal{G}}_1}
\newcommand{\gEZb}{\mathcal{G}_1}
\newcommand{\hgr}{K}
\newcommand{\eC}{\mathbb{E}_c}
\newcommand{\el}{\bar{\mathbb{E}}}
\newcommand{\elt}{\mathbb{E}}
\newcommand{\eZ}{\mathbb{E}_0}
\newcommand{\eZb}{\mathbb{E}_1}
\newcommand{\stbF}[1]{S_{#1}}
\newcommand{\stb}[1]{S^{\bar{\mathbb{E}}}_{#1}}
\newcommand{\tstb}[1]{\tilde{S}_{#1}}
\newcommand{\bstb}[1]{\bar{S}_{#1}}
\newcommand{\nbstb}[1]{\bar{N}_{#1}}
\newcommand{\cbstb}[1]{\bar{C}_{#1}}
\DeclareMathOperator{\re}{Re}
\newcommand{\qv}{\mathbf{q}}
\newcommand{\rv}{\mathbf{r}}
\DeclareMathOperator{\Tr}{Tr}
\DeclareMathOperator{\Det}{Det}
\DeclareMathOperator{\SL}{SL}
\DeclareMathOperator{\ESL}{ESL}
\DeclareMathOperator{\GL}{GL}
\DeclareMathOperator{\C}{CL}
\DeclareMathOperator{\EC}{ECL}
\newcommand{\ECb}{\EC_{\el}}
\newcommand{\spn}[2]{\multicolumn{1}{|r|}{\te{#1}{$\phantom{x}$} #2.}}
\newcommand{\simple}{simple }
\newcommand{\Simple}{Simple }
\newcommand{\lfl}{\mathcal{L}_{\mathbb{F}}}
\newcommand{\lgr}{\mathcal{L}_{G}}
\begin{document}
\begin{titlepage}
\begin{center}
\bfseries  Galois Automorphisms of a Symmetric Measurement
\end{center}
\vspace{0.3 cm}
\begin{center} D.M.~Appleby
\\
\emph{Perimeter Institute for Theoretical Physics,  Waterloo, Ontario N2L 2Y5, Canada} \\  \vspace{0.05 in} \emph{and}  \vspace{0.05 in}  \\
\emph{Stellenbosch Institute for Advanced Study (STIAS), Wallenberg Research Centre at Stellenbosch University, Marais Street, Stellenbosch 7600, South Africa}
 \end{center}

\vspace{0.2 cm}

\begin{center}
Hulya Yadsan-Appleby \\
 \emph{Dept. of  Physics and Astronomy,  University College London, Gower St., London WC1E 6BT, UK}
 \end{center}

\vspace{0.2 cm }

\begin{center}
Gerhard Zauner 
\\
\emph{ Zieglergasse 45/13-14, 1070 Vienna, Austria}
\end{center}
\vspace{0.5 cm}

\begin{center} \textbf{Abstract}

\vspace{0.1cm}

\parbox{12 cm }{
Symmetric Informationally Complete Positive Operator Valued Measures (usually referred to as SIC-POVMs or simply as SICS) have been constructed in every  dimension $\le 67$.    However, a proof that they exist in every finite dimension has yet to be constructed.  In this paper we examine the Galois group of SICs covariant with respect to the Weyl-Heisenberg group (or WH SICs as we  refer to them).  The great majority (though not all) of the known examples are of this type.    Scott and Grassl have noted that every known exact WH SIC is expressible in radicals (except for dimension $3$ which is exceptional in this and several other respects), which means that the corresponding Galois group is solvable.  They have also calculated the Galois group for most known exact examples.  The purpose of this paper is to take the analysis of Scott and Grassl further.  We first prove a number of theorems regarding the structure of the Galois group and the relation between it and the extended Clifford group.  We then examine the Galois group for the known exact fiducials and on the basis of this we propose a list of nine conjectures concerning its structure.   These conjectures represent a considerable strengthening of the theorems we have actually been able to prove.  Finally we generalize the concept of an anti-unitary to the concept of a $g$-unitary, and show that every WH SIC fiducial is an eigenvector of a family of $g$-unitaries (apart from dimension 3).

}
\vspace{0.35 cm}
\parbox{12 cm }{}
\end{center}
\end{titlepage}

{\allowdisplaybreaks
\section{Introduction}
\label{sec:intro}
Symmetric informationally complete positive operator valued measures, or SIC-POVMs as they are often called, or SICs as we shall call them in this paper, have attracted much interest in recent years~\cite{Appleby2EtAl,Appleby09a,Zauner,RenesEtAl,Grassl04,Appleby1,Grassl05,Klappenecker,Flammia,Appleby07a,Appleby07b,Kibler,Hughston,Grassl08,Khatirinejad,Godsil,Bengtsson08,Fickus,Appleby09b,ScottGrassl,Zhu10a,Zhu10b,Fillipov,Appleby11,Bengtsson1,Appleby3EtAl,Ballester,Scott06,Petz,Zhu11,Englert,Rehacek,Renes05,Durt06,Durt08,Fuchs03,Fuchs04,Kim,Bodmann07,Oreshkov,Bengtsson11b,Howard,Bodmann08,Hermann,Balan,Fuchs09,Fuchs10,Fuchs11,Medendorp,KalevEtAl,Tabia1}.  SICs are important practically, with applications to quantum tomography~\cite{Ballester,Scott06,Petz,Zhu11}, quantum cryptography~\cite{Englert,Rehacek,Renes05,Durt06,Durt08}, quantum communication~\cite{Fuchs03,Fuchs04,Kim,Bodmann07,Oreshkov}, Kochen-Specker arguments~\cite{Bengtsson11b}, high precision radar~\cite{Howard,Bodmann08,Hermann} and speech recognition~\cite{Balan} (the last two applications being classical). They are  important from a  mathematical point of view, as giving insight into the geometrical structure of quantum state space~\cite{BengtssonBook}.  They  have a connection with the theory of Lie Algebras~\cite{Appleby11} and, at least in dimension $3$, a connection with the theory of elliptic curves~\cite{Hughston,Bengtsson1}.  They are important from a foundational point of view, playing a central role in the qbist approach to the interpretation of quantum mechanics~\cite{Fuchs09,Fuchs10,Fuchs11}.   SICs have been realized experimentally~\cite{Durt06,Medendorp}, and further experiments have been proposed~\cite{KalevEtAl,Tabia1}.  
They have been calculated numerically for every dimension $\le 67$ and exact solutions have been constructed for dimensions $2$--$16$, $19$, $24$, $28$,  $35$ and $48$ (see ref.~\cite{ScottGrassl} for a comprehensive listing of solutions known in 2010 and refs.~\cite{Appleby2EtAl,Appleby3EtAl} for the exact solutions in dimensions $16$ and $28$).  There are therefore grounds for conjecturing that SICs actually exist in every finite dimension.  However, in spite of strenuous efforts by many investigators over a period of more than 10 years the question is still undecided.

In this paper we will exclusively be concerned with SICs which are covariant with respect to the Weyl-Heisenberg group~\cite{Zauner,RenesEtAl,Grassl04,Appleby1}---WH SICs as we will call them.  The vast majority of known SICs are of this type; moreover it has been shown~\cite{Zhu10a} that in prime dimensions WH SICs are the only possible group covariant SICs.

The purpose of this paper is to examine the Galois symmetries~\cite{Jacobson,Lang,Roman,Edwards,Stewart,Milne} of WH SICs, and the relation between these and the Extended Clifford Group.    Galois theory as it applies to a \emph{finite} field plays an important role in the theory of mutually unbiassed bases~\cite{Wootters87}. So to avert a possible misunderstanding let us stress that what concerns us here is, not that, but Galois theory as it applies to an \emph{infinite}  field.  That the Galois symmetries might be interesting is suggested by the fact that the known exact SIC fiducial projectors are all expressible in radicals (apart from dimension $3$ which is exceptional  in several respects, not the least of which is that it admits WH SICs for which the projectors involve  numbers which are not only not expressible in radicals, but are even transcendental).  It follows that the Galois group of the field generated by the matrix elements of the fiducial projectors must be solvable~\cite{Jacobson,Lang,Roman,Edwards,Stewart,Milne} (dimension $3$ excepted).   To see why that is surprising consider the defining equations for a WH SIC fiducial vector.  These are a system of multivariate quartic polynomials in the real and imaginary parts of the components.  The standard method for solving such a system of equations is to construct a Gr\"{o}bner basis~\cite{Grassl04,Grassl05,Grassl08,ScottGrassl,Cox}, which reduces the problem to successively solving a series of univariate polynomial equations.  However it usually turns out (and does turn out in the case of the SIC problem) that the equations in a single variable are  of much higher degree than the multivariate equations with which one started.  Since the generic univariate polynomial equation of degree $>4$ is not solvable  in radicals, one would not \emph{a priori} expect the solutions of a system of multivariate quartic equations to be solvable in radicals.   The fact that the known exact SICs nonetheless are expressible in radicals (dimension 3 excepted) is therefore  a very interesting \emph{datum} concerning them.  Another striking fact about the known exact solutions is that, although the solutions themselves are very complicated, often running to several pages of print-out, the field generators out of which the solutions are constructed are comparatively simple.  If one looks at the expressions for the field generators in Tables~\ref{tble:artGenerators} and~\ref{tble:bGenerators} in Appendix~\ref{sec:tbles} it will be seen that none of them involve integers $>10^4$, and most of them only involve integers much smaller than that.  Also, aside from the generators $=\cos \frac{\pi}{d}$ or $\sin \frac{\pi}{d}$ (the $t$ generators in Table~\ref{tble:artGenerators}; here, as everywhere else in this paper,  $d$  denotes the dimension), they are all constructed by taking square roots or cube roots.  This is unexpected, since the  polynomials  in the Gr\"{o}bner basis tend to have (in the words of Scott and Grassl~\cite{ScottGrassl}) ``large degrees and huge coefficients'' ($\sim 10^{200}$ for dimension $11$).  Finally, the Galois groups tabulated by Scott and Grassl all have a normal series~\cite{Jacobson,Lang,Roman,Edwards,Stewart,Milne}  of the form
\be
\langle e \rangle \vartriangleleft H \vartriangleleft G
\label{eq:SGNormalSeries}
\ee
where $e$ is the identity, $H$ and $G/H$ are Abelian, and the index $[G:H] = 2$.  It could be said that the Galois group is not only solvable:  it is a solvable group of a particularly simple kind.   We should remark that for dimension $14$ the above statement does not hold for the group as it is  given by Scott and Grassl.  However, there is an error in ref.~\cite{ScottGrassl} at this point (as originally noticed by Jon Yard~\cite{Yard}); when that error is corrected it is found that the statement does hold.  We should also remark that the field extension for which Scott and Grassl construct the Galois group does not always coincide with  the field extension considered in this paper.  However the Galois groups we calculate all have a normal series of the kind just described.

The purpose of this paper is to take the analysis of Scott and Grassl further.  After a brief review of relevant background material in Sections~\ref{sec:cliff} and~\ref{sec:Gal} we begin, in Section~\ref{sec:GenRes}, by proving a number of general results concerning the Galois group of a WH SIC.  In Section~\ref{sec:GalCorrEEbar} we describe the subfields corresponding to the subgroups introduced in Section~\ref{sec:GenRes}.  In Section~\ref{sec:StructureThm} we prove a   structure theorem for the Galois group of an arbitrary WH SIC.     In Section~\ref{sec:conjectures} we examine the Galois groups of the known exact fiducials for $d\ge 4$ and on the basis of this propose nine conjectures which strengthen the statements proved in Sections~\ref{sec:GenRes}--\ref{sec:StructureThm}.  In Section~\ref{sec:fldGen} we  make some further observations concerning the known exact fiducials for $d\ge 4$.  In Section~\ref{sec:Gunit} we introduce the concept of a $g$-unitary.  This is a generalization of the concept of an anti-unitary in which the role of complex conjugation is played by an arbitrary Galois automorphism which commutes with complex conjugation.  We show that every WH  fiducial projector is a joint eigenprojector of a group of $g$-unitaries (except for dimension $3$).  Finally, in Section~\ref{sec:d2and3}, we discuss the  fiducials in dimension $d=2$ and $3$ which are,  in several respects, exceptional. 

We conclude this introductory section by drawing the reader's attention to two particularly striking points to emerge from our analysis. 
Let $\Pi$ be a fiducial projector.  Let $\elt$  be the smallest normal extension of $\mathbb{Q}$ containing the standard basis matrix elements of $\Pi$ and $\tau = -e^{\frac{i \pi  }{d}}$.  $\elt$ only depends on the extended Clifford group orbit to which $\Pi$ belongs.  
It turns out that if $d>3$ then $\elt$ is an  Abelian extension of the real quadratic field
\be
\mathbb{Q}\left(\sqrt{(d-3)(d+1)}\right)
\ee
for all $27$ extended Clifford group orbits on which an exact fiducial is known.
The fact that the $\elt$ is an Abelian extension of a  quadratic field is already suggested by the result of Scott and Grassl mentioned above.  Our result goes further than that since it suggests that there is a simple formula for a generator of the quadratic field.

The second point to which we wish to draw the reader's attention is another conjecture regarding the structure of the Galois group.  The $27$ orbits  for $d>3$  on which an exact fiducial has been calculated comprise $9$ doublets (pairs of orbits connected by a Galois automorphism) and $17$ singlets (orbits closed under the action of the Galois group).   Let $\gEl$ be the Galois group of $\elt$ over $\mathbb{Q}$.  We find that in all 27 cases $\gEl$ has the normal series
\be
\langle e\rangle  \vartriangleleft  \gEZ \vartriangleleft \gEl
\ee
with $\gEZ$, $\gEl/\gEZ$ both Abelian and
\begin{align}
\gEZ & \cong \mathbb{Z}_2 \oplus \hgr
\\
\gEl/\gEZ & \cong 
\begin{cases}
\mathbb{Z}_2 \qquad & \text{orbit is a singlet}
\\
\mathbb{Z}_2\oplus\mathbb{Z}_2 \qquad & \text{orbit is one of a doublet}
\end{cases}
\end{align}
The group $\hgr$ is defined in Section~\ref{sec:conjectures}.  For now suffice it to say that $\hgr$ is given in terms of the subgroup of the extended Clifford group having $\Pi$ as an eigenprojector.  Of course, a knowledge of  the quotient groups in a normal series does not amount to a knowledge of the group itself.  However, if the conjecture this result suggests held generally it would be fair to say that we could tell a  great deal about the structure of the Galois group just from a knowledge of the extended Clifford symmetries, without first having to  calculate an exact fiducial.

\section{Extended Clifford Group and WH SICs}
\label{sec:cliff}
The purpose of this section is to briefly review some  relevant facts concerning the extended Clifford group and WH SICs.  For proofs and further details see refs.~\cite{Appleby1,ScottGrassl}.

Let $|0\rangle, \dots |d-1\rangle$ be the standard basis in dimension $d$.  Define the operators $X$ and $Z$ by
\begin{align}
X |r\rangle & = |r+1\rangle
\\
Z |r\rangle & = \tau^{2r} |r\rangle
\end{align}
where addition of ket-labels is mod $d$ and $\tau= -e^{\frac{\pi i}{d}}$.  The Weyl-Heisenberg displacement operators are then defined by
\be
D_{\mathbf{p}} = \tau^{p_1 p_2} X^{p_1} Z^{p_2}
\ee 
where $\mathbf{p} = \left(\begin{smallmatrix} p_1 \\ p_2 \end{smallmatrix}\right)$. We have
\begin{align}
D_{\mathbf{p}}D_{\mathbf{q}} & = \tau^{\langle \mathbf{p},\mathbf{q}\rangle} D_{\mathbf{p}+\mathbf{q}} & D^{\dagger}_{\mathbf{p}} & = D\vpu{\dagger}_{-\mathbf{p}}
\end{align}
for all $\mathbf{p}$, $\mathbf{q}$, where $\langle \cdot, \cdot \rangle$ is the symplectic form
\be
\langle \mathbf{p},\mathbf{q}\rangle=p_2q_1 -p_1q_2
\ee
The $D_{\mathbf{p}}$ are a basis for operator space.  An arbitrary operator $A$ has the expansion
\begin{align}
A &= \sum_{\mathbf{p}\in\mathbb{Z}^2_d} A_{\mathbf{p}} D_{\mathbf{p}} & A_{\mathbf{p}}& =\frac{1}{d} \Tr(A D^{\dagger}_{\mathbf{p}})
\label{eq:OpExpn}
\end{align}

Let
\be
\bar{d} = \begin{cases} d \qquad & \text{$d$ odd} \\ 2 d \qquad & \text{$d$ even} \end{cases}
\ee
We then define the group $\ESL(2,\mathbb{Z}_{\bar{d}})$ to be the set of all matrices 
\be
F = \bmt \alpha & \beta \\ \gamma & \delta \emt
\ee
such that $\alpha$, $\beta$, $\gamma$, $\delta \in \mathbb{Z}_{\bar{d}}$ and $\Det F= \pm 1$ (mod $\bar{d}$).  We call matrices whose determinant $=1$ (respectively $-1$)  symplectic matrices (respectively anti-symplectic matrices).   The set of symplectic matrices form the symplectic group, denoted  $\SL(2,\mathbb{Z}_{\bar{d}})$.  We refer to $\ESL(2,\mathbb{Z}_{\bar{d}})$ as the extended symplectic group.

We associate to each symplectic matrix (respectively anti-symmetric matrix) $F$ a unitary (respectively anti-unitary) $U_F$ as follows:
\begin{enumerate}
\item If
\be
F = \bmt \alpha & \beta \\ \gamma & \delta \emt \in \SL(2,\mathbb{Z}_{\bar{d}})
\ee
is a prime matrix~\cite{Appleby1} (\emph{i.e.}\ if $\beta$ is relatively prime to $\bar{d}$) we define
\be
U_F = \frac{1}{\sqrt{d}} \sum_{r,s} \tau^{\beta^{-1}(\delta r^2 -2rs+\alpha s^2)} |r\rangle\langle s|
\label{eq:UFprimeFFormula}
\ee
where $\beta^{-1}$ is the unique element of $\mathbb{Z}_{\bar{d}}$ such that $\beta \beta^{-1} = 1$ (mod $\bar{d}$) (\emph{not} the rational number $1/\beta$). 
\item For each non-prime matrix $F\in\SL(2,\mathbb{Z}_{\bar{d}})$ we make a fixed but arbitrary choice of a pair of prime matrices $F_1$, $F_2$ such that
\be
F = F_1 F_2
\ee
(it is shown in ref.~\cite{Appleby1} that such a pair always exists).  We then define
\be
U_F = U_{F_1} U_{F_2}
\label{eq:UFnonprimeFFormula}
\ee
\item For 
\be
J = \bmt 1 & 0 \\ 0 & -1\emt \in \ESL(2,\mathbb{Z}_{\bar{d}})
\ee
we define $U_J$ to be the anti-unitary which acts by complex conjugation in the standard basis:
\be
U_J |\psi\rangle = \sum_r \langle r | \psi \rangle^{*} | r\rangle
\ee
for all $|\psi\rangle$.
\item For every other anti-symplectic matrix $F \in \ESL(2,\mathbb{Z}_{\bar{d}})$ we define
\be
U_F = U_{FJ} U_J
\ee
\end{enumerate}
It is shown in ref.~\cite{Appleby1} that with these definitions
\be
U\vpu{\dagger}_F D\vpu{\dagger}_{\mathbf{p}} U^{\dagger}_F = D\vpu{\dagger}_{F\mathbf{p}}
\ee
for all $F$, $\mathbf{p}$, 
and that the map $F\to U_F$ is a projective representation of $\ESL(2,\mathbb{Z}_{\bar{d}})$, so that
\begin{align}
U_{F_1 F_2} &\dot{=} U_{F_1} U_{F_2} & &\text{and} &U^{\dagger}_{F} &\dot{=} U\vpu{\dagger}_{F^{-1}}
\end{align}
for all $F_1,F_2, F$ (where $\dot{=}$ means ``equals up to a phase'').

We  define
\begin{align}
\C(d) & = \{e^{i\xi}D_{\mathbf{p}} U_F \colon \xi\in\mathbb{R},\;\mathbf{p}\in \mathbb{Z}_{d}\times \mathbb{Z}_{d}, \; F \in \SL(2,\mathbb{Z}_{\bar{d}})\}
\label{eq:CLdef}
\\
\intertext{and}
\EC(d) & = \{e^{i\xi}D_{\mathbf{p}} U_F \colon \xi \in \mathbb{R}, \; \mathbf{p}\in \mathbb{Z}_{d}\times \mathbb{Z}_{d}, \; F \in \ESL(2,\mathbb{Z}_{\bar{d}})\}
\label{eq:ECLdef}
\end{align}
We refer to $\C(d)$ as the Clifford group and to $\EC(d)$ as the extended Clifford group.

In the sequel we will need the following fact, proved in ref.~\cite{Appleby1}:
\begin{theorem} 
\label{eq:kerCliffThm}
If $d$ is odd  $D_{\mathbf{p}} U_F \dot{=} I$ if and only if $\mathbf{p} = 0$ (mod $d$) and $F = 0$ (mod $d$).
If $d$ is even $D_{\mathbf{p}} U_F \dot{=} I$ if and only if
\begin{align}
\mathbf{p} & = \bmt \frac{sd}{2} \\ \frac{td}{2}\emt  & \text{ (mod $d$)}
\\
 F & = \bmt 1+r d & s d\\ t d & 1+r d\emt & \text{ (mod $2d$)}
\end{align}
for suitable integers $r,s,t = 0$ or $1$.
\end{theorem}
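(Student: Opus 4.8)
The plan is to reduce the statement to the adjoint action on the displacement operators, the only substantive input being the covariance identity $U_F D_{\mathbf{q}} U_F^{\dagger} = D_{F\mathbf{q}}$ (no explicit formula for $U_F$ is needed), together with three elementary facts: $\{X^{j}Z^{k}\colon 0\le j,k<d\}$ is a basis of operator space, $X^{d}=Z^{d}=I$, and an operator commuting with every $D_{\mathbf{q}}$ is a scalar. First I would record the arithmetic of $\tau=-e^{i\pi/d}$: since $\tau^{d}=(-1)^{d+1}$ and $\tau^{d^{2}}=1$ one gets $D_{d\mathbf{v}}=I$ and $D_{\mathbf{p}+d\mathbf{v}}=\tau^{-d\langle\mathbf{v},\mathbf{p}\rangle}D_{\mathbf{p}}$, the latter phase being trivial for $d$ odd and $(-1)^{\langle\mathbf{v},\mathbf{p}\rangle}$ for $d$ even. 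Also $D_{\mathbf{a}}\propto D_{\mathbf{b}}$ iff $\mathbf{a}\equiv\mathbf{b}\pmod d$, and in particular $D_{\mathbf{p}}\dot{=}I$ iff $\mathbf{p}\equiv 0\pmod d$.

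For the ``if'' direction I would show directly that $D_{\mathbf{p}}U_F$ commutes with every $D_{\mathbf{q}}$; being unitary, it is then $\dot{=}I$. For $d$ odd the hypotheses give $D_{\mathbf{p}}=I$ and $U_F\dot{=}I$ outright. For $d$ even, write $F=I+dM$ with $M=\bsmt r & s\\ t & r\esmt$ (such an $F$ is symplectic, since $\Det F\equiv 1\pmod{2d}$); a short computation gives $D_{F\mathbf{q}}=(-1)^{tq_{1}-sq_{2}}D_{\mathbf{q}}$, while $D_{\mathbf{p}}D_{\mathbf{q}}D_{\mathbf{p}}^{\dagger}=\tau^{2\langle\mathbf{p},\mathbf{q}\rangle}D_{\mathbf{q}}=(-1)^{tq_{1}-sq_{2}}D_{\mathbf{q}}$ for $\mathbf{p}=\frac{d}{2}\bsmt s\\ t\esmt$, so the two signs cancel in $D_{\mathbf{p}}U_F D_{\mathbf{q}}U_F^{\dagger} D_{\mathbf{p}}^{\dagger}=D_{\mathbf{p}}D_{F\mathbf{q}}D_{\mathbf{p}}^{\dagger}=D_{\mathbf{q}}$, as required.

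For the ``only if'' direction, suppose $D_{\mathbf{p}}U_F\dot{=}I$. Then it is unitary, so $F$ is symplectic, and it commutes with every $D_{\mathbf{q}}$, so $D_{F\mathbf{q}}\propto D_{\mathbf{q}}$ for all $\mathbf{q}$; the basis property then forces $F\equiv I\pmod d$. If $d$ is odd this already means $F=I$ in $\SL(2,\mathbb{Z}_{d})$, hence $D_{\mathbf{p}}\dot{=}U_F^{\dagger}\dot{=}I$ and $\mathbf{p}\equiv 0\pmod d$. If $d$ is even, write $F=I+dM$ with $M$ a matrix of $0$'s and $1$'s; expanding $\Det F\pmod{2d}$ and using $d^{2}\equiv 0$ gives $\Det F\equiv 1+(\mathrm{tr}\,M)d$, and symplecticity forces $\mathrm{tr}\,M\equiv 0$, i.e.\ $M=\bsmt r & s\\ t & r\esmt$. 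The computation above shows that $U_F$ and $D_{\frac{d}{2}\bsmt s\\ t\esmt}$ have the same adjoint action on every $D_{\mathbf{q}}$, hence $U_F\dot{=}D_{\frac{d}{2}\bsmt s\\ t\esmt}$; therefore $D_{\mathbf{p}}U_F\dot{=}D_{\mathbf{p}+\frac{d}{2}\bsmt s\\ t\esmt}$, which is $\dot{=}I$ exactly when $\mathbf{p}\equiv\frac{d}{2}\bsmt s\\ t\esmt\pmod d$.

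I expect the main work to lie in the $d$-even case, where the phase $\tau^{d}=-1$ enters every displacement-operator identity and everything is defined only up to phase, so sign-tracking must be done with care. The two delicate points are the determinant computation pinning down $\mathrm{tr}\,M=0$, and the identification (up to phase) of $U_F$ with a single displacement operator, which is precisely what produces the coupling between the parameters of $\mathbf{p}$ and those of $F$. The remaining steps are formal manipulations of the covariance relation.
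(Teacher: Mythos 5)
Your argument is correct, and it is worth noting at the outset that this paper does not actually prove Theorem~\ref{eq:kerCliffThm}: it is quoted from ref.~\cite{Appleby1}, where the proof likewise runs through the adjoint action of $D_{\mathbf{p}}U_F$ on the displacement operators, so your route is essentially the standard one rather than a novel alternative. The individual steps all check out: since the $D_{\mathbf{q}}$ span operator space, $D_{\mathbf{p}}U_F\dot{=}I$ is equivalent to trivial adjoint action; the phase bookkeeping $D_{F\mathbf{q}}=\tau^{-d\langle\mathbf{q},M\mathbf{q}\rangle}D_{\mathbf{q}}=(-1)^{tq_1-sq_2}D_{\mathbf{q}}$ for $F=I+dM$ is right (using $q^2\equiv q \bmod 2$); the determinant expansion mod $2d$ does force $\operatorname{tr}M$ even, hence $M_{11}=M_{22}$; the identification $U_F\dot{=}D_{\frac{d}{2}\bsmt s\\ t\esmt}$ follows from equality of adjoint actions plus irreducibility; and the final congruence works because $-\frac{d}{2}\equiv\frac{d}{2}\pmod d$. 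You also correctly dispose of anti-symplectic $F$ (an anti-linear operator cannot be a nonzero multiple of $I$) and of $U_I\dot{=}I$ via projectivity. One small point: the theorem as printed says ``$F=0$ (mod $d$)'' in the odd case, which is evidently a misprint for $F=I$ (mod $d$); your proof establishes the latter, which is the correct statement and the one consistent with the even-$d$ case.
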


A SIC in dimension $d$ is specified by a set of $d^2$ rank-$1$ projectors $\Pi_1, \dots , \Pi_{d^2}$ with the property
\be
\Tr(\Pi_r \Pi_s) = \frac{d\delta_{rs}+1}{d+1}
\ee
It  follows from this property that the $\Pi_r$ are linearly independent and that
\be
\sum_{r} \Pi_r = dI
\ee
implying that the operators $\frac{1}{d}\Pi_1, \dots , \frac{1}{d} \Pi_{d^2}$ are an informationally complete POVM.  This POVM is the SIC.  

In a WH SIC the projectors are obtained by acting on a single projector $\Pi$ (called the fiducial projector) with displacement  operators, according to the prescription
\begin{align}
\Pi\vpu{\dagger}_{\mathbf{p}} &= D\vpu{\dagger}_{\mathbf{p}} \Pi D^{\dagger}_{\mathbf{p}} 
\end{align}
One has
\be
\Pi = |\psi\rangle \langle \psi | 
\ee
for some normalized vector $|\psi\rangle$.  In many situations it is convenient to work with  $|\psi\rangle$ rather than $\Pi$.  However, if one is specifically interested in the Galois symmetries there are some disadvantages to that.  The vector $|\psi\rangle$ is defined in terms of the matrix elements of $\Pi$ by
\be
|\psi\rangle = \frac{e^{i\theta}}{\sqrt{\langle 0 | \Pi |  0 \rangle}} \sum_{r} \langle r | \Pi | 0 \rangle |  r\rangle 
\ee
where $e^{i\theta}$ is an arbitrary phase.  It can (and often does) happen that the factor $\sqrt{\langle 0 | \Pi | 0\rangle}$ is not in the field generated by the components of $\Pi$.  So the field generated by the components of $|\psi\rangle$ will typically be larger than the field generated by the matrix elements of $\Pi$.  Moreover the arbitrariness of the  phase means that the field generated by  the components of $|\psi\rangle$ is not unique.  For these reasons we will work exclusively in terms of the fiducial projector. 

The unitaries and anti-unitaries in $\EC(d)$ take WH fiducials to WH fiducials.  We refer to
\be
\{ U\Pi U^{\dagger} \colon U \in \EC(d)\}
\ee
as the $\EC(d)$ orbit of $\Pi$ or, when there is no risk of confusion, simply as the orbit of $\Pi$.

Let
\be
\stbF{\Pi} = \{U \in \EC(d)\colon U \Pi U^{\dagger} = \Pi\}
\label{eq:stbFDef}
\ee
be the stability group of $\Pi$.
 It is a striking, though so far unexplained, fact~\cite{Appleby1,ScottGrassl} that for every known WH fiducial (exact or numerical) $\stbF{\Pi}$ contains a unitary
 \be
D_{\mathbf{p}} U_F
 \ee
with $F\neq I$ and $\Tr F = -1$ (mod $d$)  (if $d\neq 3$ it is enough to require that $\Tr F =-1$ (mod $d$) as it is then automatic that $F\neq I$). 
$D_{\mathbf{p}} U_F$ is order 3 up to a phase.  We refer to $D_{\mathbf{p}} U_F$ as a canonical order 3 unitary, and to $F$ as a canonical trace $-1$ symplectic.  The  majority~\cite{ScottGrassl} of orbits for which a fiducial has been calculated (exact or numerical) contain a fiducial stabilized by the canonical order 3 unitary $U_{F_z}$, where
\be
F_z = \bmt 0 & d-1 \\ d+1 & d-1\emt
\ee
We will refer to such orbits as type $z$.  When $d=9k+3$, for some positive integer $k$, orbits have also been calculated~\cite{ScottGrassl} which contain a fiducial stabilized by the canonical order 3 unitary $U_{F_a}$, where
\be
F_a =\bmt 1 & d+3 \\ d+3k & d-2 \emt
\ee
We refer to such orbits as type $a$.  Observe that, although the unitaries $U_{F_z}$, $U_{F_a}$ are always order $3$ up to a phase, the symplectics $F_z$ and $F_a$ are order 6 when $d$ is even. Observe, also, that $F_z$ and $F_a$ are not conjugate~\cite{Flammia,ScottGrassl}, so it cannot happen that an orbit is both type $z$ and type $a$.

Another significant fact about the orbits for which a WH fiducial has been calculated (exact or numerical) is that there is always a fiducial on the orbit for which $\stbF{\Pi}$ consists entirely of unitaries/anti-unitaries of the form $e^{i\xi}U_F$.  We will say that the stability group of such a fiducial is displacement-free.  For such fiducials we define
\be
\tstb{\Pi} = \{ F \in \ESL(2,\mathbb{Z}_{\bar{d}}) \colon U_F \in \stbF{\Pi}\}
\label{eq:tstbDef}
\ee

   If 
\begin{enumerate}
\item $\stbF{\Pi}$ contains a canonical order $3$ unitary
\item $\stbF{\Pi}$ is displacement-free
\end{enumerate}
 we say that $\Pi$ is \simple.  \Simple  fiducials will play an important role in the sequel.

\section{Galois Theory}
\label{sec:Gal}
A reader who is unfamiliar with Galois theory may consult (for example) one or more of refs.~\cite{Jacobson,Lang,Roman,Edwards,Stewart,Milne}.  The brief review which follows is not intended as a substitute for these texts.  It may, however, be useful to the reader who has studied Galois theory in the past, but would like to be reminded of the basic facts.  It may also be useful to a reader who has no previous knowledge of Galois theory, as indicating the minimum he or she will need to know in order to understand this paper.

In the following we only consider fields which contain $\mathbb{Q}$ as a subfield and are themselves subfields of $\mathbb{C}$.  Let $\mathbb{F}$ be such a field, and let $u$ be any complex number.  We  define $\mathbb{F}(u)$ to be the smallest field containing $\mathbb{F}$ and $u$.  We say that $u$  is the generator of the extension field $\mathbb{F}(u)$.  Suppose that $u$ is algebraic over $\mathbb{F}$ (\emph{i.e.} a root of some polynomial with coefficients in $\mathbb{F}$).  Let 
\be
p(x) = x^n + c_{n-1} x^{n-1} + \dots + c_0
\ee 
be the minimal polynomial of $u$ over $\mathbb{F}$:  \emph{i.e.}\ the polynomial of lowest degree with all coefficients $\in \mathbb{F}$ and  leading coefficient  $=1$ having $u$ as a root.  Then
\be
\mathbb{F} (u)= \{ a_0 + a_1 u + \dots  + a_{n-1} u^{n-1} \colon a_0, \dots a_{n-1} \in \mathbb{F}\}
\ee 
One sees from this that $\mathbb{F}(u)$ can be regarded as a vector space over $\mathbb{F}$.  Its dimension $n$  is called the degree of $\mathbb{F}(u)$ over $\mathbb{F}$, denoted $[\mathbb{F}(u) : \mathbb{F}]$. 

 We can generalize this to the case of several generators  by defining $\mathbb{F}(u_1,\dots, u_m)$ to be the smallest field containing the field $\mathbb{F}$ and the numbers $u_1,\dots, u_n$.   One can build up the field by appending the generators successively:
\be
\mathbb{F}(u_1,\dots, u_m) = ( \dots (( \mathbf{F}(u_1))(u_2) \dots )(u_m)
\ee
The order in which one takes the generators when performing this construction is irrelevant.  The degree of the extension (its dimension regarded as a vector space over $\mathbb{F}$)  is then given by
\begin{align}
[\mathbb{F}(u_1,\dots,u_m) : \mathbb{F}] &= [\mathbb{F}(u_1,\dots,u_m) : \mathbb{F}(u_1,\dots , u_{m-1}] 
\nonumber
\\
&\hspace{0.5 in} \times  [\mathbb{F}(u_1,\dots,u_{m-1}) : \mathbb{F}(u_1,\dots , u_{m-2})] 
\nonumber
\\
&\hspace{1 in} \times \dots 
\nonumber
\\
& \hspace{1.5 in} \times  [\mathbb{F}(u_1) : \mathbb{F}]
\end{align}
A field extension obtained in this way, by appending a finite set of algebraic numbers, is said to be finitely generated and algebraic.  In the following all field extensions will be assumed without comment to be of this kind. Let us note that algebraic extensions have the property that, not only the generators, but every element is algebraic over $\mathbb{F}$.

Let $\mathbb{F} \subseteq \mathbb{F}' \subseteq \mathbb{F}''$ be a tower of field extensions.  Then
\be
[\mathbb{F}'' : \mathbb{F}] = [\mathbb{F}'' : \mathbb{F}'] [\mathbb{F}': \mathbb{F}] 
\ee

Let $\mathbb{F}(u_1,\dots, u_m)$ be an extension of $\mathbb{F}$.  Let $p_j(x)$ be the minimal polynomial of $u_j$ over $\mathbb{F}$.   The extension is said to be normal if $p_j(x)$ fully factorizes over $\mathbb{F}(u_1,\dots, u_m)$ for all $j$, in which case we write $\mathbb{F} \vartriangleleft  \mathbb{F}(u_1,\dots, u_m)$.  Thus $\mathbb{Q}(2^{\frac{1}{4}})$ is not a normal extension of $\mathbb{Q}$ (because $x^4-2$ does not fully factorize over $\mathbb{Q}(2^{\frac{1}{4}})$).   On the other hand $\mathbb{Q}(2^{\frac{1}{4}},i)$ is a normal extension of $\mathbb{Q}$.  It should be observed that the relation of being a normal extension is not transitive:  if $\mathbb{F} \vartriangleleft \mathbb{F}'$ and $\mathbb{F}' \vartriangleleft \mathbb{F}''$   it is not always the case  that $\mathbb{F} \vartriangleleft \mathbb{F}''$.  On the other hand if $\mathbb{F} \vartriangleleft \mathbb{F}''$ it is always the case that $\mathbb{F}' \vartriangleleft \mathbb{F}''$ (though not always the case that $\mathbb{F} \vartriangleleft \mathbb{F}'$) for every intermediate field $\mathbb{F}'$ (\emph{i.e.}\ for every field $\mathbb{F}'$ such that $\mathbb{F} \subseteq \mathbb{F}' \subseteq \mathbb{F}''$).  

Suppose $\mathbb{F} \vartriangleleft \mathbb{F}'$ and let $f(x)$ be any irreducible polynomial with coefficients in $\mathbb{F}$.  Then if $\mathbb{F}'$ contains one of the roots of $f(x)$ it contains them all.  One consequence of this which will be important to us is that a normal extension of the rationals is closed under complex conjugation (since  $z^{*}$ is  a root of the minimal polynomial of $z$ over the rationals for every algebraic number $z$).

Let $\mathbb{F}'$ be an extension of $\mathbb{F}$.  A Galois automorphism of $\mathbb{F}'$ over $\mathbb{F}$ is a bijective mapping $g\colon \mathbb{F}' \to \mathbb{F}'$ with the properties
\begin{enumerate}
\item $g(z_1+z_2) = g(z_1)+ g(z_2)$ and $g(z_1z_2) = g(z_1)g(z_2)$ for all $z_1$, $z_2\in\mathbb{F}'$.
\item $g(z) = z$ for all $z \in \mathbb{F}$.
\end{enumerate}
The group of all such automorphisms is called the Galois group of $\mathbb{F}'$ over $\mathbb{F}$ and is denoted $\graw_{\mathbb{F}}(\mathbb{F}')$.  An arbitrary element of $\mathbb{F}' = \mathbb{F}(u_1,\dots,u_m)$ can be written as a linear combination of monomials of the form $u_1^{k_1} \dots u_m^{k_m}$ with coefficients in $\mathbb{F}$.  It follows that a Galois automorphism $g$ is fully specified by the numbers $g(u_1), \dots, g(u_m)$.  Also  $g(u_j)$ is a root of the minimal polynomial of $u_j$ over $\mathbb{F}$  for all $j$.  Using these two facts it is straightforward to determine the Galois group in any given case (conceptually straightforward, that is---the actual calculations can be tedious, and for the ones in this paper we relied heavily on the computer algebra package \emph{Magma}).   If $\mathbb{F}'$ is a normal extension of $\mathbb{F}$ then the order of $\graw_{\mathbb{F}}(\mathbb{F}')$ is  $[\mathbb{F}' : \mathbb{F}]$.  

Let $\lgr$ be the set of  subgroups of $\graw_{\mathbb{F}}(\mathbb{F}')$ and let $\lfl$ of be the set of intermediate fields (\emph{i.e.} the set of fields $\mathbb{K}$ such that $\mathbb{F}\subseteq \mathbb{K} \subseteq \mathbb{F}'$).  $\lgr$ and $\lfl$ are lattices when partially ordered by set inclusion.  For each $H\in\lgr$ define $\mathbb{K}_H\in\lfl$ by
\begin{align}
\mathbb{K}_H& = \{z \in \mathbb{F}' \colon h(z) = z \quad \forall h \in H\}
\\
\intertext{$\mathbb{K}_H$ is called the fixed field of $H$.  For each $\mathbb{K}\in\lfl$ define $H_{\mathbb{K}}\in\lgr$ by}
H_{\mathbb{K}} & = \graw_{\mathbb{K}}(\mathbb{F}')
\end{align}
The maps $H \to \mathbb{K}_H$ and $\mathbb{K}\to H_{\mathbb{K}}$ are order reversing:
\begin{align}
H_1 &\subseteq H_2 & &\implies & \mathbb{K}_{H_1} &\supseteq \mathbb{K}_{H_2}
\\
\mathbb{K}_1 & \subseteq \mathbb{K}_2 && \implies & H_{\mathbb{K}_1} &\supseteq H_{\mathbb{K}_2}
\end{align}

Now suppose that $\mathbb{F}'$ is a normal extension of $\mathbb{F}$.  Then the maps  $H \to \mathbb{K}_H$ and $\mathbb{K}\to H_{\mathbb{K}}$ are mutually inverse bijections:
\begin{align}
\mathbb{K}_{H_{\mathbb{K}}} & = \mathbb{K}  & H_{\mathbb{K}_H} & = H
\end{align}
for all $\mathbb{K}\in\lfl$ and $H\in \lgr$. This pairing between subgroups and intermediate fields for a normal extension is called the Galois correspondence.  The Galois correspondence takes normal subgroups to normal extensions of the base field, and vice versa:
\begin{align}
H & \vartriangleleft \graw_{\mathbb{F}}(\mathbb{F}') &&   \Longleftrightarrow & \mathbb{F} \vartriangleleft \mathbb{K}_H 
\end{align}
(note that $\mathbb{K}_H \vartriangleleft \mathbb{F}'$ whether $H$ is normal or not). If $H\vartriangleleft \graw_{\mathbb{F}} (\mathbb{F}')$ each $g\in\graw_{\mathbb{F}}(\mathbb{F}')$ restricted to $\mathbb{K}_H$ is a Galois automorphism of $\mathbb{K}_H$ over $\mathbb{F}$.  This gives us a natural isomorphism
\be
\graw_{\mathbb{F}} (\mathbb{K}_H) \cong \graw_{\mathbb{F}}(\mathbb{F}')/H
\ee

A number is said to be expressible in radicals if it can be constructed recursively from the rationals by taking sums, products, and rational powers.  If the generators of $\mathbb{F} = \mathbb{Q}(u_1,\dots,u_m)$ are expressible in radicals then so is every element of $\mathbb{F}$.  The necessary and sufficient condition for this to be true is that  $\graw_{\mathbb{Q}} (\mathbb{F})$    be a soluble group---\emph{i.e.}\ that  $\graw_{\mathbb{Q}} (\mathbb{F})$  have a normal series
\be
H_0 = \langle e\rangle \vartriangleleft H_1 \vartriangleleft \dots \vartriangleleft H_n = \graw_{\mathbb{Q}}(\mathbb{F})
\ee
such that the quotient groups $H_1/H_0, H_2/H_1 , \dots H_n/H_{n-1}$ are all Abelian.

\section{Action of the Galois group on a fiducial projector}
\label{sec:GenRes}
In this Section we discuss the action of the Galois group on a fiducial projector.
Let $\Pi$ be a WH SIC fiducial, and let $\elt$ be as defined in the Introduction.  $\elt$ is not guaranteed to include $\sqrt{d}$   (though it turns out that in practice it often does---see Table~\ref{tble:fields} in Appendix~\ref{sec:tbles}). Consequently it is not guaranteed to include the standard basis matrix elements of the unitaries $U_F$.  This can be inconvenient.  We will therefore work with the  field
\be  
\el = \elt(\sqrt{d})
\ee
and only apply our results to the field $\elt$ at the end of our calculations.  $\el$ is a normal extension of $\mathbb{Q}$ (this is immediate if $\sqrt{d}\in\elt$ as then $\el = \elt$; otherwise it is a consequence of the fact that   the minimal polynomial of $\sqrt{d}$ over $\elt$  factorizes completely over $\el$).

With a slight abuse of notation we will say that a linear operator $\Gamma$ is in a field $\mathbb{F}$ and write $\Gamma \in \mathbb{F}$ if its standard basis matrix elements all $\in \mathbb{F}$.  We  say that an anti-linear operator $\Gamma$ is in $\mathbb{F}$ and write $\Gamma \in \mathbb{F}$ if the linear operator $\Gamma U_J \in \mathbb{F}$.  
The arbitrary phase in Eq.~(\ref{eq:ECLdef}) means that the elements of $\EC(d)$ do not all $\in \el$.  However it is easily seen that $D_{\mathbf{p}}U_F\in \el$ for all $\mathbf{p}$, $F$.  So  every element of $\EC(d)$ is equal to an element of 
\begin{align}
\ECb(d) & = \{ U \in \EC(d) \colon U \in \el\}
\end{align} 
up to a phase.
The product of any two elements of $\ECb(d)$ is in $\ECb(d)$.   Also the fact that $\el$ is a normal extension of the rationals, and therefore closed under complex conjugation, means that $U^{\dagger}\in\ECb(d)$ whenever $U\in\ECb(d)$.  So $\ECb(d)$ is a group.

 It is  straightforward to verify that if $\Pi'$ is another fiducial on the same orbit as $\Pi$ then $\Pi'\in\el$.  So $\el$ only depends on the orbit, and not on the particular fiducial used to define it.
Let $\gl=\graw_{\mathbb{Q}}(\el)$ be the Galois group of $\el$ over $\mathbb{Q}$.   For each $g\in \gl$ and linear operator $\Gamma   \in \el$ define
\begin{align}
 g(\Gamma) &=\sum_{r,s}g\bigl(\langle r | \Gamma | s \rangle\bigr) |r\rangle \langle s|
\\
\intertext{If $\Gamma$ is an anti-linear operator $\in \el$  define}
g(\Gamma) &= g(\Gamma U_J)U_J
\end{align}
Let $g_c$ be complex conjugation (guaranteed to be in $\gl$ because $\el$ is a normal extension of $\mathbb{Q}$) and let $\gc\subseteq \gl$ be the centralizer of $g_c$.
If $g\in\gc$ we  have
\begin{align}
g(\Gamma_1\Gamma_2) &= g(\Gamma_1)g(\Gamma_2)  &  g(\Gamma^{\dagger}) & = g(\Gamma)^{\dagger}
\label{eq:gProdAndHConj}
\end{align}
for all $\Gamma_1,\Gamma_2, \Gamma\in \el$, linear or anti-linear (note that in general neither of these statements is true if $g\notin \gc$; however if $\Gamma_1$ is linear then  $g(\Gamma_1\Gamma_2) = g(\Gamma_1)g(\Gamma_2)$ even when $g \notin \gc$). 

For each $g\in\gl$ let $k_g$ be the unique integer in the range $0 \le k_g < \bar{d}$ which is relatively prime to $\bar{d}$ and is such that
\be
g(\tau) = \tau^{k_g}
\ee
Let
\be
H_g = \bmt 1 & 0 \\ 0 & k_g \emt \in \GL(2,\mathbb{Z}_{\bar{d}})
\ee
We then have
\begin{theorem} 
\label{thm1}
For all  $g\in\gl$, $\mathbf{p} \in \mathbb{Z}^2_{d}$,  $F\in \ESL(2,\mathbb{Z}_{\bar{d}})$
\begin{align}
g(D_{\mathbf{p}}) =&\  D_{H_g\mathbf{p}}
&
g(U_F) \ \dot{=} & \  U_{H_g F H^{-1}_g}
\end{align}
where $\dot{=}$ means ``equals up to a phase''.
\end{theorem}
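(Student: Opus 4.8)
The plan is to establish the two assertions separately. The identity $g(D_{\mathbf{p}}) = D_{H_g\mathbf{p}}$ is a short direct computation: the matrix elements of $X$ are $0$'s and $1$'s, so $g(X) = X$; the matrix elements of $Z$ are the powers $\tau^{2r}$, so $g(Z) = Z^{k_g}$; and $g(\tau^{p_1p_2}) = \tau^{k_g p_1 p_2}$. Since $g$ is multiplicative on a product whose left factor is linear, applying it to $\tau^{p_1p_2}X^{p_1}Z^{p_2}$ gives $g(D_{\mathbf{p}}) = \tau^{k_g p_1 p_2}X^{p_1}Z^{k_g p_2}$, and this is exactly $D_{H_g\mathbf{p}}$ because $H_g\mathbf{p} = \bsmt p_1 \\ k_g p_2 \esmt$. (Equivalently one verifies the identity entry by entry.) Nothing here uses that $g$ centralizes complex conjugation.

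For the second assertion I would first record the elementary identity
\[
H_g F H_g^{-1} = \bmt \alpha & k_g^{-1}\beta \\ k_g\gamma & \delta\emt, \qquad F = \bmt \alpha & \beta \\ \gamma & \delta\emt,
\]
which gives $\Det(H_gFH_g^{-1}) = \Det F$ (so $H_gFH_g^{-1}\in\ESL(2,\mathbb{Z}_{\bar{d}})$) and shows that $H_gFH_g^{-1}$ is prime exactly when $F$ is. Then I would check $g(U_F)\dot{=}U_{H_gFH_g^{-1}}$ in the four layers used to define $U_F$. (i) For prime $F\in\SL(2,\mathbb{Z}_{\bar{d}})$: apply $g$ entrywise to Eq.~(\ref{eq:UFprimeFFormula}); the only ingredient not lying in the cyclotomic field $\mathbb{Q}(\tau)$ is the prefactor $1/\sqrt{d}$, and since $g(\sqrt{d})^2 = d$ one has $g(1/\sqrt{d}) = \epsilon_g/\sqrt{d}$ with $\epsilon_g := g(\sqrt{d})/\sqrt{d}\in\{\pm 1\}$; using $g(\tau^n) = \tau^{k_g n}$ and $(k_g^{-1}\beta)^{-1} = k_g\beta^{-1}$ (mod $\bar{d}$), the $\tau$-exponents of $g(U_F)$ coincide termwise with those of $U_{H_gFH_g^{-1}}$, whence $g(U_F) = \epsilon_g U_{H_gFH_g^{-1}}$. (ii) For non-prime $F\in\SL(2,\mathbb{Z}_{\bar{d}})$, write $F = F_1F_2$ with $F_1,F_2$ the fixed prime factors, so $U_F = U_{F_1}U_{F_2}$; since $U_{F_1}$ is linear, $g(U_F) = g(U_{F_1})g(U_{F_2})\dot{=}U_{H_gF_1H_g^{-1}}U_{H_gF_2H_g^{-1}}\dot{=}U_{H_gFH_g^{-1}}$ by (i) and the projective-representation law $U_AU_B\dot{=}U_{AB}$. (iii) For $F = J$: directly $g(U_J) = g(U_JU_J)U_J = g(I)U_J = U_J$, and $H_gJH_g^{-1} = J$. (iv) For anti-symplectic $F\neq J$: here $U_F = U_{FJ}U_J$ with $FJ$ symplectic, and since $U_{FJ}$ is linear, $g(U_F) = g(U_{FJ})g(U_J) = g(U_{FJ})U_J\dot{=}U_{H_g(FJ)H_g^{-1}}U_J = U_{(H_gFH_g^{-1})J}U_J = U_{H_gFH_g^{-1}}$, using (i)--(ii), $H_gJH_g^{-1} = J$, and the definition of $U$ on anti-symplectic matrices.

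The one thing that needs care — and, in my view, the only genuine subtlety — is that when $g\notin\gc$ the map $g$ is \emph{not} multiplicative on general products of operators and does not commute with the adjoint, so one cannot simply apply $g$ to $U_F D_{\mathbf{p}}U_F^{\dagger} = D_{F\mathbf{p}}$. The layering above is arranged precisely so that every use of multiplicativity of $g$ has a linear operator in the left slot, where it holds unconditionally, and every anti-linear operator enters only through $U_J$, on which $g$ acts as the identity; the stray sign $\epsilon_g$ is immaterial since the claim for $U_F$ is asserted only up to a phase. As an alternative to (i)--(ii) one can argue more conceptually: from $g(D_{\mathbf{p}}) = D_{H_g\mathbf{p}}$ and $g(U_F^{-1}) = g(U_F)^{-1}$ (which follows because $U_F^{-1}$ is linear and $U_F^{-1}U_F = I$) one finds, for symplectic $F$, that $g(U_F)$ conjugates $D_{\mathbf{p}}$ to $D_{H_gFH_g^{-1}\mathbf{p}}$; hence $U_{H_gFH_g^{-1}}^{-1}g(U_F)$ lies in the commutant of $\{D_{\mathbf{p}}\}$, which by Eq.~(\ref{eq:OpExpn}) is $\mathbb{C}I$, so $g(U_F) = \lambda U_{H_gFH_g^{-1}}$; for prime $F$ the $(0,0)$ entry forces $\lambda = \epsilon_g$, and the non-prime and anti-symplectic cases bootstrap from there as in (ii)--(iv). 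I would present the explicit-formula route as primary and mention this as a remark.
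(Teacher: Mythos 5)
Your proposal is correct and follows essentially the same route as the paper's proof: an entrywise computation for $g(D_{\mathbf{p}})$, application of $g$ to the explicit formula (\ref{eq:UFprimeFFormula}) for prime symplectic $F$ (with the $\pm$ sign from $g(\sqrt{d})$ absorbed into the phase), the product decomposition $U_F = U_{F_1}U_{F_2}$ for non-prime symplectic $F$, and $U_F = U_{FJ}U_J$ together with $H_gJH_g^{-1}=J$ for anti-symplectic $F$. Your extra care about multiplicativity of $g$ when $g\notin\gc$ and the commutant-based alternative in your closing remark are sound but not needed beyond what the paper does.
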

\begin{remark}
This is a generalization of the formulae
\begin{align}
g_c(D_{\mathbf{p}}) =& D_{J\mathbf{p}}
&
g_c(U_{F}) \dot{=}& U_{JFJ}
\end{align}
\end{remark}
\begin{proof}
The first statement is immediate.  To prove the second suppose, to begin with, that $F=\left( \begin{smallmatrix}\alpha & \beta \\ \gamma & \delta \end{smallmatrix}\right)$
is a prime matrix $\in \SL(2,\mathbb{Z}_{\bar{d}})$.  Then 
\be
U_F = \frac{1}{\sqrt{d}} \sum_{r,s} \tau^{\beta^{-1}(\delta r^2 -2rs+\alpha s^2)} |r\rangle\langle s|
\ee
 Consequently
\be
g\bigl(U_F\bigr)  = \pm \frac{1}{\sqrt{d}}\sum_{r,s} \tau^{k_g \beta^{-1}(\delta r^2 -2rs+\alpha s^2)} |r\rangle\langle s|
\  \dot{=} \ U_{H_g F H^{-1}_g}
\ee
If, on the other hand, $F$ is a non-prime matrix $\in \SL(2,\mathbb{Z}_{\bar{d}})$ choose prime symplectic matrices $F_1$, $F_2$ such that $U_F=U_{F_1}U_{F_2}$.  Then
\begin{align}
g(U_F) = U_{H_g F_1 H_{g}^{-1}}U_{H_g F_2 H_{g}^{-1}} \
\dot{=}
\
U_{H_g F H_{g}^{-1}}
\end{align}
Finally, if  $F$ is anti-symplectic 
\begin{align}
g(U_F)  =  \ g(U_{FJ}) U_J \ 
\dot{=}  \ U_{H_{g}FJ H^{-1}_{g}} U_{\vphantom{H^{-1}_g}J}
=  \ U_{H_{g}F H^{-1}_{g}}
\end{align}
where in the last step we used the fact that $H^{-1}_g$ commutes with $J$.
\end{proof}
\begin{theorem}
If $\Pi'$ is a fiducial projector $\in \el$ and $g \in \gc$ then $g(\Pi')$ is also a fiducial projector.
\end{theorem}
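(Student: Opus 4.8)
The plan is to exploit the fact that being a WH SIC fiducial can be phrased as a collection of polynomial identities among the standard basis matrix elements of the projector, so that applying the field automorphism $g$ to those identities yields the corresponding identities for $g(\Pi')$. The two ingredients I would lean on are Theorem~\ref{thm1}, which evaluates $g$ on the displacement operators ($g(D_{\mathbf{p}}) = D_{H_g\mathbf{p}}$, with no phase ambiguity in this case), and Eq.~(\ref{eq:gProdAndHConj}), which since $g\in\gc$ says that $g$ is multiplicative and adjoint-preserving on all operators in $\el$. I would also record at the outset three elementary facts: $g$ fixes $\mathbb{Q}$ pointwise; $g$ commutes with the trace, i.e.\ $\Tr g(\Gamma) = g(\Tr\Gamma)$, immediate from the additivity of $g$ and the definition of $g(\Gamma)$; and the operators $D_{\mathbf{p}}$ lie in $\el$, their entries being powers of $\tau\in\el$.

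First I would show that $g(\Pi')$ is a rank-$1$ projector. From $\Pi'^2 = \Pi'$ and multiplicativity of $g$ I would get $g(\Pi')^2 = g(\Pi'^2) = g(\Pi')$; from $\Pi'^{\dagger} = \Pi'$ and the adjoint-compatibility in Eq.~(\ref{eq:gProdAndHConj}) I would get $g(\Pi')^{\dagger} = g(\Pi'^{\dagger}) = g(\Pi')$; and $\Tr g(\Pi') = g(\Tr\Pi') = g(1) = 1$. Hence $g(\Pi')$ is a projector of trace $1$, and since each $D_{\mathbf{p}}$ is unitary so is every WH translate $D_{\mathbf{p}}\,g(\Pi')\,D^{\dagger}_{\mathbf{p}}$.

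Next I would show that the WH orbit of $g(\Pi')$ is a SIC. Writing $\Pi'_{\mathbf{p}} = D_{\mathbf{p}}\Pi' D^{\dagger}_{\mathbf{p}}$ and using multiplicativity together with Theorem~\ref{thm1} (so that $g(D^{\dagger}_{\mathbf{p}}) = g(D_{-\mathbf{p}}) = D_{-H_g\mathbf{p}} = D^{\dagger}_{H_g\mathbf{p}}$), I would obtain
\be
g(\Pi'_{\mathbf{p}}) = D\vpu{\dagger}_{H_g\mathbf{p}}\, g(\Pi')\, D^{\dagger}_{H_g\mathbf{p}},
\ee
i.e.\ $g(\Pi'_{\mathbf{p}})$ is the $H_g\mathbf{p}$-th WH translate of $g(\Pi')$. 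Applying $g$ to the defining relations $\Tr(\Pi'_{\mathbf{p}}\Pi'_{\mathbf{q}}) = (d\delta_{\mathbf{p}\mathbf{q}}+1)/(d+1)$ and using that $g$ commutes with the trace while fixing the rational right-hand side gives
\be
\Tr\bigl(D\vpu{\dagger}_{H_g\mathbf{p}}\,g(\Pi')\,D^{\dagger}_{H_g\mathbf{p}}\; D\vpu{\dagger}_{H_g\mathbf{q}}\,g(\Pi')\,D^{\dagger}_{H_g\mathbf{q}}\bigr) = \frac{d\,\delta_{\mathbf{p}\mathbf{q}}+1}{d+1}.
\ee
Because $k_g$ is coprime to $\bar{d}$, hence to $d$, the map $\mathbf{p}\mapsto H_g\mathbf{p}$ is a bijection of $\mathbb{Z}^2_d$ preserving the relation $\mathbf{p} = \mathbf{q}$; so, setting $\mathbf{p}' = H_g\mathbf{p}$ and $\mathbf{q}' = H_g\mathbf{q}$, the last display says precisely that the $d^2$ rank-$1$ projectors $D_{\mathbf{p}'}g(\Pi')D^{\dagger}_{\mathbf{p}'}$, $\mathbf{p}'\in\mathbb{Z}^2_d$, form a SIC. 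Therefore $g(\Pi')$ is a WH SIC fiducial.

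I do not expect a serious obstacle here: once Theorem~\ref{thm1} and Eq.~(\ref{eq:gProdAndHConj}) are in hand the argument is essentially bookkeeping. The one place where the hypothesis $g\in\gc$ is genuinely needed rather than merely $g\in\gl$ is the Hermiticity of $g(\Pi')$ in the first step; for $g\notin\gc$ the operator $g(\Pi')$ still turns out to be a trace-$1$ idempotent whose WH translates satisfy the SIC trace relations, but it need not be self-adjoint, which is why the statement is restricted to the centraliser of complex conjugation.
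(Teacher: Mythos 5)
Your proof is correct. The paper's own proof is a one-liner in the same spirit---apply the automorphism to rational identities characterizing a fiducial---but it verifies a different set of identities: it notes that Eq.~(\ref{eq:gProdAndHConj}) makes $g(\Pi')$ a rank-1 projector and then computes $\bigl|\Tr\bigl(g(\Pi')D_{\mathbf{p}}\bigr)\bigr|^2 = g\Bigl(\bigl|\Tr\bigl(\Pi' D_{H_g^{-1}\mathbf{p}}\bigr)\bigr|^2\Bigr)$, which equals $1$ or $\tfrac{1}{d+1}$ according as $\mathbf{p}=\boldsymbol{0}$ or not; that is, it checks the standard overlap characterization of a WH fiducial, using $g\in\gc$ both for Hermiticity and to pull $g$ through the squared modulus. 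You instead transport the defining pairwise relations $\Tr(\Pi'_{\mathbf{p}}\Pi'_{\mathbf{q}})=(d\delta_{\mathbf{p}\mathbf{q}}+1)/(d+1)$, observing via Theorem~\ref{thm1} that $g(\Pi'_{\mathbf{p}}) = D\vpu{\dagger}_{H_g\mathbf{p}}\,g(\Pi')\,D^{\dagger}_{H_g\mathbf{p}}$ and that $\mathbf{p}\mapsto H_g\mathbf{p}$ is a bijection of $\mathbb{Z}_d^2$ because $k_g$ is coprime to $d$. What your route buys is self-containedness: it works directly from the definition of a WH SIC given in Section~\ref{sec:cliff}, without invoking the (standard, but not proved in the paper) equivalence between the SIC property of the orbit and the overlap condition on $|\Tr(\Pi D_{\mathbf{p}})|^2$; the paper's route is shorter and delivers the overlap statement in the form actually used later (e.g.\ in Lemma~\ref{lem:sbarpiStab} and the action of $\gZ$ on the $\chi_{\mathbf{p}}$). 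Your closing remark is also accurate: entrywise multiplicativity and trace-compatibility hold for products of linear operators for any $g\in\gl$, so the hypothesis $g\in\gc$ is needed only for self-adjointness of $g(\Pi')$ (and, in the paper's version, for $g(|z|^2)=|g(z)|^2$).
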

\begin{proof} It follows from Eq.~(\ref{eq:gProdAndHConj}) that $g(\Pi')$ is a rank-1 projector and

\be
\Bigl| \Tr\bigl(g(\Pi') D_{\mathbf{p}}\bigr)\Bigr|^2 
=
g\left( \Bigr| \Tr\bigl(\Pi' D_{H_g^{-1} \mathbf{p}}\bigr)\Bigr|^2\right)
=
\begin{cases}
1 & \text{if $\mathbf{p} = \boldsymbol{0}$ (mod $d$)}
\\
\frac{1}{d+1} & \text{otherwise}
\end{cases}
\ee
\end{proof}

Now let $\gZ$ be the set of $g\in \gc$ with the property that $g(\Pi)$ is on the same orbit of the extended Clifford group as $\Pi$.  It is easily seen that $\gZ$ depends only the orbit and not on the particular fiducial $\Pi$.  For each $g\in\gZ$ choose a fixed (anti-)unitary $U_g\in \ECb(d)$ such that
\be
g(\Pi) = U\vpu{\dagger}_g \Pi U_g^{\dagger}
\ee
There is some arbitrariness in this choice.  Let
\be
\stb{\Pi} = \stbF{\Pi} \cap \ECb(d)
\ee
where $\stbF{\Pi}$ is defined in Eq.~(\ref{eq:stbFDef}).  Then we can replace $U_g$ with any other element of the coset $U_g \stb{\Pi}$.
\begin{theorem}
\label{thm3}
$\gZ$ is a subgroup of $\gc$.  We have
\begin{align}
U_{g_1g_2} &\in g_1(U_{g_2}) U_{g_1} \stb{\Pi}
&
U_{g^{-1}} &\in g^{-1}(U_g^{\dagger}) \stb{\Pi}
\label{eq:thm2B}
\end{align}
for all $g_1,g_2,g\in \gZ$.
\end{theorem}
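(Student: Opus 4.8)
The plan is to verify the three subgroup axioms for $\gZ$ one at a time, producing in each case an explicit (anti-)unitary in $\ECb(d)$ that conjugates $\Pi$ to the relevant Galois image; this last point will simultaneously yield the coset relations~(\ref{eq:thm2B}). Since $\gZ\subseteq\gc$ by definition, and $\gc$ (being the centralizer of $g_c$ in $\gl$) is already a subgroup, for $g_1,g_2,g\in\gZ$ we automatically have $g_1g_2\in\gc$ and $g^{-1}\in\gc$; so the only thing requiring proof is that $(g_1g_2)(\Pi)$ and $g^{-1}(\Pi)$ again lie on the orbit of $\Pi$. The identity is immediate: $e(\Pi)=\Pi$, so $e\in\gZ$ and $U_e\in\stb{\Pi}$.

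For closure under products I would start from $g_2(\Pi)=U_{g_2}\Pi U_{g_2}^{\dagger}$ and apply $g_1$. Because $g_1\in\gc$, Eq.~(\ref{eq:gProdAndHConj}) applies even though $U_{g_2}$ may be anti-linear, and, substituting $g_1(\Pi)=U_{g_1}\Pi U_{g_1}^{\dagger}$,
\[
(g_1g_2)(\Pi)=g_1(U_{g_2})\,g_1(\Pi)\,g_1(U_{g_2})^{\dagger}
=\bigl(g_1(U_{g_2})U_{g_1}\bigr)\,\Pi\,\bigl(g_1(U_{g_2})U_{g_1}\bigr)^{\dagger}.
\]
It then remains to see that $g_1(U_{g_2})U_{g_1}\in\ECb(d)$. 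Now $U_{g_1}\in\ECb(d)$ by construction, and if $U_{g_2}\dot{=}D_{\mathbf p}U_F$ then Theorem~\ref{thm1} gives $g_1(U_{g_2})\dot{=}D_{H_{g_1}\mathbf p}\,U_{H_{g_1}FH_{g_1}^{-1}}$, which is a genuine element of $\EC(d)$ since $\Det(H_{g_1}FH_{g_1}^{-1})=\Det F=\pm1$; moreover $g_1$ maps $\el$ onto itself, so the matrix elements of $g_1(U_{g_2})$ lie in $\el$, whence $g_1(U_{g_2})\in\ECb(d)$. Thus $(g_1g_2)(\Pi)$ is on the orbit, so $g_1g_2\in\gZ$; and since two elements of $\ECb(d)$ conjugating $\Pi$ to the same projector differ by an element of $\stb{\Pi}$ (as noted before the statement), we may take $U_{g_1g_2}\in g_1(U_{g_2})U_{g_1}\stb{\Pi}$, the first relation in~(\ref{eq:thm2B}).

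For closure under inverses I would apply $g^{-1}$ to $g(\Pi)=U_g\Pi U_g^{\dagger}$; using Eq.~(\ref{eq:gProdAndHConj}) for $g^{-1}\in\gc$, together with $g^{-1}(I)=I$ (so $g^{-1}$ preserves operator inverses) and $U_g^{-1}=U_g^{\dagger}$, one gets $\Pi=g^{-1}(U_g)\,g^{-1}(\Pi)\,g^{-1}(U_g)^{\dagger}$, hence
\[
g^{-1}(\Pi)=g^{-1}(U_g)^{-1}\,\Pi\,\bigl(g^{-1}(U_g)^{-1}\bigr)^{\dagger}
=g^{-1}(U_g^{\dagger})\,\Pi\,g^{-1}(U_g^{\dagger})^{\dagger}.
\]
Exactly as above, Theorem~\ref{thm1} shows $g^{-1}(U_g^{\dagger})\in\ECb(d)$, so $g^{-1}(\Pi)$ is on the orbit, $g^{-1}\in\gZ$, and we may take $U_{g^{-1}}\in g^{-1}(U_g^{\dagger})\stb{\Pi}$, the second relation in~(\ref{eq:thm2B}). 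The independence of all of this from the particular fiducial on the orbit has already been observed.

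The one point that needs care rather than being purely formal is the anti-unitary bookkeeping: the fixed operators $U_g$ need not be linear, so the product and Hermitian-conjugation identities invoked above are not automatic — they hold precisely because $\gZ$ was built as a subset of $\gc$, which is exactly why that restriction appears in the definition. Everything else is a direct consequence of Theorem~\ref{thm1} and the remarks preceding the statement.
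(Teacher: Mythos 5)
Your proposal is correct and follows essentially the same route as the paper's own proof: apply $g_1$ (resp.\ $g^{-1}$) to the conjugation relation for $\Pi$, use Eq.~(\ref{eq:gProdAndHConj}) (valid because the automorphisms lie in $\gc$), and read off the coset relations from the freedom $U_g\mapsto U_g\stb{\Pi}$. The only difference is that you make explicit, via Theorem~\ref{thm1}, that $g_1(U_{g_2})$ and $g^{-1}(U_g^{\dagger})$ lie in $\ECb(d)$, a point the paper leaves implicit.
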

\begin{proof}
For all $g_1,g_2\in\gZ$
\begin{align}
g_1g_2(\Pi) = g_1\bigl( U\vpu{\dagger}_{g_2} \Pi U^{\dagger}_{g_2}\bigr) = g_1(U_{g_2})U_{g_1}\Pi U^{\dagger}_{g_1} \bigl(g_1(U_{g_2})\bigr)^{\dagger}
\end{align}
So  $g_1g_2\in\gZ$ and $U_{g_1g_2} \in g_1(U_{g_2}) U_{g_1} \stb{\Pi}$.
Also for  all $g\in\gZ$ 
\begin{align}
\Pi & = g^{-1} \left( U_g \Pi U^{\dagger}_g \right) = g^{-1}(U_g) g^{-1}(\Pi) \bigl(g^{-1}(U_g)\bigr)^{\dagger}
\\
\intertext{implying}
g^{-1}(\Pi) & = g^{-1}(U_g^{\dagger}) \Pi \bigl(g^{-1}(U_g^{\dagger})\bigr)^{\dagger}
\end{align}
So $g^{-1}\in \gZ$ and $U_{g^{-1}} \in g^{-1}(U_g^{\dagger}) \stb{\Pi}$.
\end{proof}

We conclude this section by showing that in the case of a \simple fiducial (as defined in Section~\ref{sec:cliff}) $U_g$ can be chosen to be of a particularly simple form:

\begin{theorem}
\label{thm4}
Let $\Pi$ be a \simple fiducial.  
\begin{enumerate}
\item If $d\neq 0$ (mod $3$)  then, for all $g\in \gZ$, we can choose $U_g$ to be of the form
\be
U_g = U_{F_g}
\ee
for some $F_g \in \ESL(2,\mathbb{Z}_{\bar{d}})$.
\item If $d=0$ (mod $3$) then, for all $g\in\gZ$, we can choose $U_g$ to be of the form
\be
U_g = D_{\qv_g} U_{F_g}
\ee
for some $F_g \in \ESL(2,\mathbb{Z}_{\bar{d}})$ and unique $\qv_g \in \mathbb{Z}_d \times \mathbb{Z}_d$   such that
\be
\qv_g = \boldsymbol{0} \quad (\text{mod $d/3$})
\ee
\end{enumerate}
\end{theorem}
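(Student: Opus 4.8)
The plan is to pin down the displacement part of $U_g$ by transporting, via $g$, the relation which says that a canonical order-$3$ unitary stabilises $\Pi$.

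\emph{Setup.} Because $\Pi$ is simple, $\stbF{\Pi}$ is displacement-free and contains (up to a phase) a canonical order-$3$ unitary; displacement-freeness lets us write this as $U_{F_0}$ with $F_0\in\SL(2,\mathbb{Z}_{\bar d})$, $\Tr F_0\equiv -1$ (mod $d$) and $U_{F_0}^3\dot{=}I$, and after a phase adjustment $U_{F_0}\Pi U_{F_0}^{\dagger}=\Pi$. For $g\in\gZ$ write $U_g\dot{=}D_{\qv}U_{F_g}$ for some $\qv\in\mathbb{Z}_d\times\mathbb{Z}_d$ and $F_g\in\ESL(2,\mathbb{Z}_{\bar d})$ (possible since $U_g\in\ECb(d)\subseteq\EC(d)$). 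First I would note that $\qv$ is an invariant of $g$: replacing $U_g$ by another coset representative $U_g\,e^{i\xi}U_{F'}$ (with $e^{i\xi}U_{F'}\in\stb{\Pi}$, which has this form since $\stbF{\Pi}$ is displacement-free) sends $F_g\mapsto F_gF'$ but leaves $\qv$ unchanged. So the theorem reduces to showing $\qv=\boldsymbol{0}$ when $3\nmid d$ and $\qv\equiv\boldsymbol{0}$ (mod $d/3$) when $3\mid d$; the leftover overall phase is then absorbed into $\stb{\Pi}$ (a scalar fixes every projector, and here the scalar $e^{i\xi}=U_g(D_{\qv}U_{F_g})^{-1}$ lies in $\el$), and uniqueness of $\qv_g$ follows because, in the relevant ambiguity class, exactly one displacement is $\equiv\boldsymbol{0}$ (mod $d/3$).

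\emph{The main step.} Apply $g$ to $U_{F_0}\Pi U_{F_0}^{\dagger}=\Pi$. Since $g\in\gc$, Eq.~(\ref{eq:gProdAndHConj}) applies, and with Theorem~\ref{thm1} it gives $U_{\tilde F_0}\in\stbF{g(\Pi)}$ up to a phase, where $\tilde F_0=H_gF_0H_g^{-1}$ (so $\Tr\tilde F_0\equiv -1$ (mod $d$), $\det\tilde F_0=1$, $U_{\tilde F_0}^3\dot{=}I$). Since $g(\Pi)=U_g\Pi U_g^{\dagger}$ and $\stbF{\Pi}$ is displacement-free, $U_g^{\dagger}U_{\tilde F_0}U_g\dot{=}U_{F'}$ for some $F'\in\ESL(2,\mathbb{Z}_{\bar d})$. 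Expanding the left side with $U_FD_{\mathbf r}U_F^{\dagger}=D_{F\mathbf r}$ yields $U_g^{\dagger}U_{\tilde F_0}U_g\dot{=}D_{F_g^{-1}(\tilde F_0-I)\qv}\,U_{F_g^{-1}\tilde F_0F_g}$, hence $D_{F_g^{-1}(\tilde F_0-I)\qv}\dot{=}U_{F''}$ with $F''=F'F_g^{-1}\tilde F_0^{-1}F_g$. Theorem~\ref{eq:kerCliffThm} then forces the displacement: $F_g^{-1}(\tilde F_0-I)\qv\equiv\boldsymbol{0}$ (mod $d$) when $d$ is odd, and $F_g^{-1}(\tilde F_0-I)\qv\equiv\tfrac d2\mathbf m$ (mod $d$) for some $\mathbf m\in\{0,1\}^2$ when $d$ is even. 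For $d$ odd I would multiply through by $F_g$ and then by the adjugate of $\tilde F_0-I$; using $\det(\tilde F_0-I)=\det\tilde F_0-\Tr\tilde F_0+1\equiv 3$ (mod $d$) this gives $3\qv\equiv\boldsymbol{0}$ (mod $d$), which is precisely $\qv=\boldsymbol{0}$ if $3\nmid d$ and $\qv\equiv\boldsymbol{0}$ (mod $d/3$) if $3\mid d$.

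\emph{The hard part.} For $d$ even the same computation delivers only $6\qv\equiv\boldsymbol{0}$ (mod $d$), one factor of $2$ short of the claim, and recovering it is the main obstacle. Here I expect to have to use the order-$3$ property $U_{F_0}^3\dot{=}I$ in an essential way---equivalently, that $F_0^3$ (hence $\tilde F_0^3$) lies in the eight-element kernel subgroup of Theorem~\ref{eq:kerCliffThm}---together with the linkage, built into that theorem, between the parity vector $\mathbf m$ and the off-diagonal parities of $F''=F'F_g^{-1}\tilde F_0^{-1}F_g$. Controlling $F''$ modulo $2d$ from the data $F'\in\tstb{\Pi}$ and the order-$6$ structure of $\tilde F_0$, with a short subcase split on $\Tr F_0$ modulo $2d$ (forced to be $-1$ or $d-1$), should then close the gap. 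Everything else---the reduction from $\el$ back to $\elt$ and the bookkeeping of phases---is routine.
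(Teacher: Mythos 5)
Your ``main step'' is in substance the paper's own argument: transport the canonical order-$3$ stabilization through $g$ using Theorem~\ref{thm1}, invoke Theorem~\ref{eq:kerCliffThm}, and use the adjugate identity (equivalently $\det(\tilde F_0-I)\equiv 3$) to get $3\qv\equiv\boldsymbol{0}$; your treatment of odd $d$ and your uniqueness sketch (displacement-freeness plus Theorem~\ref{eq:kerCliffThm}) are correct and match the paper. The genuine gap is the even-dimensional case, which you leave as ``the hard part'' with only a speculative plan, and that plan rests on a false premise. You assert that $\qv$ in $U_g\dot{=}D_{\qv}U_{F_g}$ is an invariant of $g$; for even $d$ it is not, because the decomposition itself is non-unique: Theorem~\ref{eq:kerCliffThm} says precisely that
\be
D_{\mathbf{p}}U_F\dot{=}I \quad\text{for}\quad \mathbf{p}=\bmt \tfrac{sd}{2}\\ \tfrac{td}{2}\emt,\ \ F=\bmt 1+rd & sd\\ td & 1+rd\emt ,
\ee
so $\qv$ can be shifted by any half-period vector at the cost of multiplying $F_g$ by such a matrix. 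Consequently the question ``is $\qv=\boldsymbol{0}$ (mod $d$)?'' is not well posed, and the sharpening you hope to extract from the order-$3$ property and parity bookkeeping is neither available nor needed.

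What is actually needed, and what the paper does, is absorption rather than sharpening: for even $d$ your congruence gives $\qv\equiv\boldsymbol{0}$ (mod $\tfrac d2$) when $3\nmid d$ (resp.\ mod $\tfrac d6$ when $3\mid d$), so $\qv$ is a half-period vector plus (in the $3\mid d$ case) a vector $\equiv\boldsymbol{0}$ (mod $\tfrac d3$); the half-period part is then swallowed into the symplectic unitary via $D_{\bigl(\begin{smallmatrix}sd/2\\ td/2\end{smallmatrix}\bigr)}U_{F'_g}\dot{=}U_{F_g}$ with $F_g=\bigl(\begin{smallmatrix}1 & sd\\ td & 1\end{smallmatrix}\bigr)F'_g$, leaving $U_g=U_{F_g}$ or $U_g=D_{\qv_g}U_{F_g}$ with $\qv_g\equiv\boldsymbol{0}$ (mod $\tfrac d3$) exactly as claimed. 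With that one observation your argument closes immediately; without it, the even case of the theorem is unproved.
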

\begin{proof}Choose $\qv'_g\in \mathbb{Z}_d\times \mathbb{Z}_d$, $F'_g\in \ESL(2,\mathbb{Z}_{\bar{d}})$ such that
\be
g(\Pi) = D_{\qv'_g} U_{F'_g} \Pi U^{\dagger}_{F'_g} D^{\dagger}_{\qv'_g}
\ee
Then
\be
g(\stb{\Pi}) = \stb{g(\Pi)}= D_{\qv'_g} U_{F'_g} \stb{\Pi} U^{\dagger}_{F'_g} D^{\dagger}_{\qv'_g}
\label{eq:gOnstb}
\ee
Now let $L\in \tstb{\Pi}$ be a canonical trace $-1$ symplectic (see Eq.~(\ref{eq:tstbDef}) for the definition of $\tstb{\Pi}$).  It follows from Eq.~(\ref{eq:gOnstb}) that there exists $M\in\tstb{\Pi}$ such that 
\be
U_{H_{g}LH^{-1}_g}\dot{=}D\vpu{\dagger}_{\qv'_g} U\vpu{\dagger}_{F'_g} U\vpu{\dagger}_{M} U^{\dagger}_{F'_g} D^{\dagger}_{\qv'_g}
\ee
After rearranging this becomes
\be
D_{H_g L H_g^{-1} \qv'_g - \qv'_g} U_{H_gLH^{-1}_g F'_g M^{-1} F'^{-1}_g} \dot{=} I
\label{eq:thm4C}
\ee
In view of Theorem~\ref{eq:kerCliffThm}
\be
\tilde{L}\qv'_g -\qv'_g  
=
\begin{cases}
\text{$\boldsymbol{0}$ (mod $d$)} \qquad & \text{if $d$ is odd}
\\
\text{$\boldsymbol{0}$ (mod $\frac{d}{2}$)} \qquad & \text{if $d$ is even}
\end{cases}
\ee
where $\tilde{L} = H\vpu{-1}_g L H^{-1}_g$.  
Since $\Tr\tilde{L}= \Tr(L)= -1$ we can write
\be
\tilde{L} = \bmt \alpha & \beta \\ \gamma & -\alpha -1 \emt
\ee
for some $\alpha, \beta, \gamma$ such that
\be
\alpha^2 + \alpha + \beta \gamma +1 = 0 
\ee
It is easily verified that
\be
\bmt -(\alpha+2) & -\beta \\ -\gamma & (\alpha-1) \emt (\tilde{L} - I) = 3 I 
\ee
implying
\be
3\qv'_g  
=
\begin{cases}
\text{$\boldsymbol{0}$ (mod $d$)} \qquad & \text{if $d$ is odd}
\\
\text{$\boldsymbol{0}$ (mod $\frac{d}{2}$)} \qquad & \text{if $d$ is even}
\end{cases}
\label{eq:thm4A}
\ee

\vspace{0.1 in}

\noindent \textbf{Case 1.} Suppose $d$ is not divisible by 3.  Then Eq.~(\ref{eq:thm4A}) implies
\be
\qv'_g  =
\begin{cases}
\text{$\boldsymbol{0}$ (mod $d$)} \qquad & \text{if $d$ is odd}
\\
\text{$\boldsymbol{0}$ (mod $\frac{d}{2}$)} \qquad & \text{if $d$ is even}
\end{cases} 
\ee
If $d$ is odd the claim is now immediate.    If, on the other hand, $d$ is even we have
\be
\qv'_g  = \bmt \frac{sd}{2} \\ \frac{td}{2} \emt
\ee
with $s,t = 0$ or $1$.  In view of Theorem~\ref{eq:kerCliffThm}
\be
D_{\qv'_g} U_{F'_g} \dot{=} U_{F_g\vphantom{F'_g}}
\ee
where
\be
F_g = \bmt 1 & sd\\ td & 1\emt F'_g
\label{eq:thm4B}
\ee
So one can choose $U_g$ to be of the stated form in this case also. 

\vspace{0.1 in}

\noindent \textbf{Case 2.} Suppose $d$ is divisible by 3. Then Eq.~(\ref{eq:thm4A}) implies
\be
\qv'_g  =
\begin{cases}
\text{$\boldsymbol{0}$ (mod $\frac{d}{3}$)} \qquad & \text{if $d$ is odd}
\\
\text{$\boldsymbol{0}$ (mod $\frac{d}{6}$)} \qquad & \text{if $d$ is even}
\end{cases} 
\ee
If $d$ is odd we  can set $\qv_g = \qv'_g$, $F_g = F'_g$.     If, on the other hand, $d$ is even we have
\be
\qv'_g =  \bmt \frac{(2j+3s)d}{6} \\ \frac{(2k+3t)d}{6} \emt \text{  (mod $d$)}
\ee
where $j,k=0,1$ or $2$ and $s,t=0$ or $1$.  In view of Theorem~\ref{eq:kerCliffThm}
\be
D_{\qv'_g} U_{F'_g} \dot{=} D_{\qv_g} U_{F_g}
\ee
where
\begin{align}
\qv_g & = \bmt \frac{jd}{3} \\ \frac{kd}{3} \emt 
&
F_g &= \bmt 1 & sd\\ td & 1\emt F'_g
\end{align}
So one can choose $U_g$ to be of the stated form in this case also. 

It remains to show that $\qv_g$ is unique.    Let $D_{\qv'_g}U_{F'_g}$  be another possible choice for $U_g$, with $\qv'_g$ satisfying the requirements stated in the theorem.  Then
\be
D_{F^{-1}_g(\qv'_g - \qv_g)} U_{F^{-1}_g F'_g} \in \stb{\Pi}
\ee
Since $\stb{\Pi}$ is displacement-free this means, in view of   Theorem~\ref{eq:kerCliffThm},
\be
\qv'_g - \qv_g=
\begin{cases}
\text{$\boldsymbol{0}$ (mod $d$)} \qquad & \text{if $d$ is odd}
\\
\text{$\boldsymbol{0}$ (mod $\frac{d}{2}$)} \qquad & \text{if $d$ is even}
\end{cases}
\ee
which is easily seen to imply $\qv'_g = \qv_g$.
\end{proof}
\section{Subfields of $\el$}
\label{sec:GalCorrEEbar}
Applying the Galois correspondence to the increasing series of subgroups
\be
\langle e \rangle \subseteq \gZ \subseteq \gc \subseteq \gl
\ee
we obtain the decreasing series of subfields
\be
\el \supseteq \eZ \supseteq \eC \supseteq \mathbb{Q}
\ee
where $\eZ$, $\eC$ are the fixed fields of $\gZ$, $\gc$ respectively.   Applying the inverse map we find
\begin{align}
\gZ & =\graw_{\eZ}(\el) & \gc & = \graw_{\eC}(\el)
\end{align}

In the Introduction we also introduced the field $\elt$.  $\elt$ is not guaranteed to contain $\sqrt{d}$ (although in practice it often does---see Table~\ref{tble:fields} in Appendix~\ref{sec:tbles}).   Consequently it is not guaranteed to contain all the unitaries and anti-unitaries $U_F$.   However it does contain every fiducial on the same orbit as $\Pi$ and so, like $\el$, it only depends on the orbit and not on the particular fiducial used to define it.  To see this observe that it follows from Eqs.~(\ref{eq:UFprimeFFormula}) and~(\ref{eq:UFnonprimeFFormula}) that every symplectic unitary $U_F$ can be written 
\be
U_F = \frac{1}{d^{\frac{r}{2}}} M_F
\ee
where the matrix $M_F\in\elt$ and where $r = 1$ if $F$ is  a prime matrix, and $=0$ otherwise.  So if $\Pi'$ is on the same orbit of $\EC(d)$ as $\Pi$ we can write
\be
\Pi'
=
\begin{cases}
\frac{1}{d^r} D\vpu{\dagger}_{\mathbf{p}} M\vpu{\dagger}_{F} \Pi M^{\dagger}_F D^{\dagger}_{\mathbf{p}} \qquad & \det F = 1
\\
\frac{1}{d^r} D\vpu{\dagger}_{\mathbf{p}} M\vpu{\dagger}_{FJ} g_c(\Pi) M^{\dagger}_{JF}D^{\dagger}_{\mathbf{p}} \qquad & \det F = -1
\end{cases}
\ee
for some $\mathbf{p},F$, where $\Pi^{\mathrm{T}}$ is the transpose of $\Pi$. Since $\elt$ contains the displacement operators and is  a normal extension of $\mathbb{Q}$  we conclude that $\Pi' \in \elt$.

Let $\glt = \graw_{\elt}(\el)$.  Since each $g \in \glt$ fixes $\Pi$ it must also $\in \gZ$.  So
\be
\langle e \rangle \subseteq\glt  \subseteq \gZ \subseteq \gc \subseteq \gl
\ee
It turns out that this is a normal series with Abelian quotient groups for all the orbits for which an exact fiducial has been calculated with $d>3$, and also for the three orbits for $d\le 3$ analyzed in Section~\ref{sec:d2and3}.  However, we have not been able to prove that that must always be the case.  In view of  the Galois correspondence we also have
\be
\el \supseteq \elt \supseteq \eZ \supseteq \eC \supseteq \mathbb{Q}
\ee 

Now let $\gEl= \graw_{\mathbb{Q}}(\elt)$.  The fact that $\el$   is a normal extension of $\elt $ means
\be
\gEl = \{ g|_{\elt} \colon g \in \gl\} \cong \gl/\glt
\ee
where  $g|_{\elt}$ is  the restriction of $g$ to $\elt$. Applying  the Galois  correspondence to the series 
\be
\elt \supseteq \eZ \supseteq \eC \supseteq \mathbb{Q}
\ee
we obtain the series
\be
\langle e \rangle \subseteq \gEZ \subseteq \gEc \subseteq \gEl
\ee
where
\begin{align}
\gEZ & = \graw_{\eZ}(\elt) = \{ g|_{\elt} \colon g \in \gZ\} \cong \gZ/\glt
\\
\gEc & = \graw_{\eC}(\elt) =  \{ g|_{\elt} \colon g \in \gc\} \cong \gc/\glt
\end{align}
\section{Structure theorem for the subgroups $\gEZ$ and $\gZ$}
\label{sec:StructureThm}
In the Introduction we stated two conjectures regarding the structure of the Galois group.  Unfortunately, we have not been able to prove these conjectures.  However, we have been able to prove a weaker statement, which holds for any orbit on which there exists a \simple fiducial.  That is the subject to which we now turn.  
In this section it will always be assumed that $\Pi$ is a simple fiducial.

We begin by considering  the action of $\gZ$ on the overlaps
\be
\chi_{\mathbf{p}} = \Tr(\Pi D_{\mathbf{p}})
\ee
Define
\be
\bstb{\Pi} = \{ (\Det F) F \colon F\in \tstb{\Pi}\}
\ee
We  have
\begin{lemma} 
\label{lem:sbarpiStab} Let $G\in \ESL(2,\mathbb{Z}_{\bar{d}})$.  Then the following statements are equivalent
\begin{enumerate}
\item $G \in \bstb{\Pi}$
\item $\chi_{G\mathbf{p}} = \chi_{\mathbf{p}}$ for all $\mathbf{p} \in \mathbb{Z}^2_{\bar{d}}$
\end{enumerate}
\end{lemma}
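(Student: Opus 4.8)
The plan is to reduce both implications to a single transformation law describing how the overlaps $\chi_{\mathbf{p}}$ change when $\Pi$ is conjugated by a symplectic unitary or by an anti-symplectic anti-unitary, combined with the elementary fact---immediate from the expansion $A=\sum_{\mathbf{p}}A_{\mathbf{p}}D_{\mathbf{p}}$ with $A_{\mathbf{p}}=\frac{1}{d}\Tr(AD^{\dagger}_{\mathbf{p}})$ of Eq.~(\ref{eq:OpExpn})---that an operator is uniquely recovered from its overlaps $\Tr(AD_{\mathbf{p}})$. The bridge between the two sides of the equivalence is the determinant twist $F\mapsto(\Det F)F$, which is a determinant-preserving involution of $\ESL(2,\mathbb{Z}_{\bar{d}})$; since $\bstb{\Pi}$ is its image on $\tstb{\Pi}$, we have $G\in\bstb{\Pi}$ if and only if $(\Det G)G\in\tstb{\Pi}$, that is, if and only if $U_{(\Det G)G}\in\stbF{\Pi}$ (this last step using that $\stbF{\Pi}$ is displacement-free, $\Pi$ being simple, so that $\tstb{\Pi}$ is defined). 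Hence it suffices to prove that for any $F\in\ESL(2,\mathbb{Z}_{\bar{d}})$, writing $G=(\Det F)F$ and $\Pi'=U_F\Pi U^{\dagger}_F$, one has $\Pi'=\Pi$ if and only if $\chi_{G\mathbf{p}}=\chi_{\mathbf{p}}$ for all $\mathbf{p}$.

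For the computational core I would compute $\chi'_{\mathbf{p}}=\Tr(\Pi'D_{\mathbf{p}})$ using the covariance relation in the form $U^{\dagger}_F D_{\mathbf{p}} U_F=D_{F^{-1}\mathbf{p}}$. If $\Det F=1$ the operator $U_F$ is unitary and cyclicity of the trace gives at once $\chi'_{\mathbf{p}}=\Tr(\Pi D_{F^{-1}\mathbf{p}})=\chi_{F^{-1}\mathbf{p}}$. If $\Det F=-1$ then $U_F$ is anti-unitary, and I would instead use $\Tr(U_F C U^{\dagger}_F)=\overline{\Tr C}$ for a linear operator $C$, together with the identity $\overline{\chi_{\mathbf{q}}}=\chi_{-\mathbf{q}}$ (immediate from $\Pi^{\dagger}=\Pi$ and $D^{\dagger}_{\mathbf{q}}=D_{-\mathbf{q}}$), obtaining $\chi'_{\mathbf{p}}=\overline{\Tr(\Pi D_{F^{-1}\mathbf{p}})}=\overline{\chi_{F^{-1}\mathbf{p}}}=\chi_{-F^{-1}\mathbf{p}}$. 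In either case the index equals $((\Det F)F)^{-1}\mathbf{p}=G^{-1}\mathbf{p}$, so uniformly $\chi'_{\mathbf{p}}=\chi_{G^{-1}\mathbf{p}}$. Since $\mathbf{p}\mapsto G\mathbf{p}$ is a bijection of $\mathbb{Z}^{2}_{\bar{d}}$, the uniqueness fact above yields $\Pi'=\Pi$ iff $\chi_{G^{-1}\mathbf{p}}=\chi_{\mathbf{p}}$ for all $\mathbf{p}$, and (substituting $\mathbf{p}\mapsto G\mathbf{p}$) iff $\chi_{G\mathbf{p}}=\chi_{\mathbf{p}}$ for all $\mathbf{p}$; chaining with the equivalences of the first paragraph gives: $G\in\bstb{\Pi}$ iff $(\Det G)G\in\tstb{\Pi}$ iff $U_{(\Det G)G}\Pi U^{\dagger}_{(\Det G)G}=\Pi$ iff $\chi_{G\mathbf{p}}=\chi_{\mathbf{p}}$ for every $\mathbf{p}$, which is the assertion.

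The only delicate point I expect is the anti-symplectic case: one has to keep track of the complex conjugation introduced by the anti-unitary $U_F$, and it is precisely the interplay of $\Tr(U_F C U^{\dagger}_F)=\overline{\Tr C}$ with $\overline{\chi_{\mathbf{q}}}=\chi_{-\mathbf{q}}$ that produces the extra sign $F^{-1}\mapsto-F^{-1}$---which is exactly why $\bstb{\Pi}$ must be built from $(\Det F)F$ and not from $F$, the corresponding statement with $\tstb{\Pi}$ in place of $\bstb{\Pi}$ being false whenever the stabilizer contains an anti-symplectic element. A secondary, purely bookkeeping matter is that the overlaps in the lemma are indexed by $\mathbb{Z}^{2}_{\bar{d}}$ rather than $\mathbb{Z}^{2}_{d}$ when $d$ is even; but $D_{\mathbf{p}}$ only changes by a sign under a shift of $\mathbf{p}$ by $d$ in either component, so the $\bar{d}$-indexed family of overlaps still determines $\Pi$ and the reindexing step still makes sense, and nothing in the argument needs to be altered. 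Apart from this I anticipate no genuine obstacle: the lemma is in essence a careful exercise in the Clifford covariance relation $U_F D_{\mathbf{p}} U^{\dagger}_F=D_{F\mathbf{p}}$ together with the fact that an operator is recovered from its components $\Tr(AD^{\dagger}_{\mathbf{p}})$.
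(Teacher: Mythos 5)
Your proof is correct and takes essentially the same route as the paper's: the paper's forward direction is the same covariance-plus-conjugation computation (organized there through the automorphism $g_c$ rather than your identity $\Tr(U_F C U_F^{\dagger})=\overline{\Tr C}$), and its backward direction is the same reconstruction of $\Pi$ from its overlaps via Eq.~(\ref{eq:OpExpn}), which you have merely packaged as a single transformation law $\chi'_{\mathbf{p}}=\chi_{G^{-1}\mathbf{p}}$ together with injectivity of the overlap map and the elementary equivalence $G\in\bstb{\Pi}\Leftrightarrow(\Det G)G\in\tstb{\Pi}$. I see no gaps.
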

\begin{proof}
Suppose $G\in\bstb{\Pi}$.  
If $\Det G = 1$
\be
\chi_{G\mathbf{p}} = \Tr\bigl(\Pi U\vpu{\dagger}_{G} D_{\mathbf{p}} U^{\dagger}_G\bigr) = \chi_{\mathbf{p}}
\ee
while if $\Det G = -1$
\begin{align}
\chi_{G\mathbf{p}} & =
\Tr\bigl(\Pi U\vpu{\dagger}_{-GJ} D\vpu{\dagger}_{-J \mathbf{p}}U^{\dagger}_{-GJ}\bigr)\bigr)
\nonumber
\\
&= g_c\Bigl(\Tr\bigl(  g_c(\Pi) U\vpu{\dagger}_{-JG} D_{-\mathbf{p}}U^{\dagger}_{-JG}\bigr)\Bigr)
\nonumber
\\
& = g_c\Bigl(\Tr\bigl(  U^{\dagger}_{-G}\Pi U\vpu{\dagger}_{-G} D_{-\mathbf{p}}\bigr)\Bigr)
\nonumber
\\
&= \chi_{\mathbf{p}}
\end{align}
Suppose, on the other hand, $G\in \ESL(2,\mathbb{Z}_{\bar{d}})$ is such that $\chi_{G\mathbf{p}} = \chi_{\mathbf{p}}$ for all $\mathbf{p}$.  Let $s = \Det G$.  Then it follows from Eq.~(\ref{eq:OpExpn}) that
\begin{align}
\Pi &= \frac{1}{d} \sum_{\mathbf{p}\in \mathbb{Z}_d^2} \chi_{\mathbf{p}} D_{\mathbf{p}}
\nonumber
\\
& =  \frac{d}{{\bar{d}}^2} \sum_{\mathbf{p}\in \mathbb{Z}_{\bar{d}}^2} \chi_{G\mathbf{p}} D_{\mathbf{p}}
\nonumber
\\
& =  \frac{d}{{\bar{d}}^2} \sum_{\mathbf{p}\in \mathbb{Z}_{\bar{d}}^2} \chi_{s\mathbf{p}}U\vpu{\dagger}_{sG^{-1}} D_{\mathbf{p}} U^{\dagger}_{sG^{-1}}
\nonumber
\\
& = U\vpu{\dagger}_{sG^{-1}} \Pi U^{\dagger}_{sG^{-1}}
\end{align}
So  $G \in \bstb{\Pi}$.
\end{proof}

For each $g\in\gZ$ let $F_g$ and, when $d=0$ (mod 3), $\qv_g$ be as in the statement of Theorem~\ref{thm4}.  
Define
\begin{align}
G\vpu{-1}_g
&=
(\Det F_g)F_g^{-1} H\vpu{-1}_g 
\\
\intertext{If $d=0$ (mod $3$) also define $\rv_g \in \mathbb{Z}_3\times \mathbb{Z}_3$ by}
\mathbf{r}_g
& =
-\frac{3k_g}{d} H_g^{-1} \qv_g  \quad \text{ (mod $3$)}
\end{align}
Note that  $G_g\in\GL(2,\mathbb{Z}_{\bar{d}})$ but  is not necessarily an element of $\ESL(2,\mathbb{Z}_{\bar{d}})$.
Let $\nbstb{\Pi}$ be the  normalizer of $\bstb{\Pi}$ considered as a subgroup of $\GL(2,\mathbb{Z}_{\bar{d}})$.
We have
\begin{lemma}
$G_g\in \nbstb{\Pi}$ for all $g\in\gl$.
\end{lemma}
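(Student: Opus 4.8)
The plan is to recast the statement in terms of the overlaps $\chi_{\mathbf{p}}=\Tr(\Pi D_{\mathbf{p}})$ by means of Lemma~\ref{lem:sbarpiStab}, which identifies $\bstb{\Pi}$ with $\{G\in\ESL(2,\mathbb{Z}_{\bar d}):\chi_{G\mathbf{p}}=\chi_{\mathbf{p}}\ \forall\,\mathbf{p}\in\mathbb{Z}^2_{\bar d}\}$. Thus it suffices to show that for every $G\in\bstb{\Pi}$ one has $\chi_{G_gGG_g^{-1}\mathbf{p}}=\chi_{\mathbf{p}}$ for all $\mathbf{p}$; since $G_g\in\GL(2,\mathbb{Z}_{\bar d})$ and $\bstb{\Pi}$ is finite, conjugation by $G_g$ then maps $\bstb{\Pi}$ injectively, hence bijectively, onto itself, which is precisely $G_g\in\nbstb{\Pi}$ (conjugation preserves determinants, so no question of leaving $\ESL(2,\mathbb{Z}_{\bar d})$ arises). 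To get there I would first establish how $g\in\gZ$ acts on the overlaps. Beginning from $g(\chi_{\mathbf{p}})=\Tr\bigl(g(\Pi)\,D_{H_g\mathbf{p}}\bigr)$ (Theorem~\ref{thm1} together with the fact that $g$ commutes with $\Tr$ and with products of linear operators), substituting $g(\Pi)=U_g\Pi U_g^{\dagger}$ with $U_g=D_{\qv_g}U_{F_g}$ from Theorem~\ref{thm4} (with $\qv_g=\boldsymbol 0$ when $3\nmid d$), cycling the trace, and using $U_{F_g}^{\dagger}D_{\mathbf{q}}U_{F_g}=D_{F_g^{-1}\mathbf{q}}$ and $D_{\qv_g}^{\dagger}D_{\mathbf{q}}D_{\qv_g}=\tau^{2\langle\mathbf{q},\qv_g\rangle}D_{\mathbf{q}}$, I expect to arrive at $g(\chi_{\mathbf{p}})=\omega_g(\mathbf{p})\,\chi_{G_g\mathbf{p}}$ for all $\mathbf{p}\in\mathbb{Z}^2_{\bar d}$, where $\omega_g(\mathbf{p})=\tau^{2\langle H_g\mathbf{p},\qv_g\rangle}$ is a cube root of unity (rewritable in terms of $\rv_g$, and $\equiv1$ whenever $3\nmid d$). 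The determinant factor in $G_g=(\Det F_g)F_g^{-1}H_g$ comes out of the anti-symplectic case: there one writes $U_{F_g}=U_{F_gJ}U_J$ and uses $\overline{\chi_{\mathbf{q}}}=\chi_{-\mathbf{q}}$ (equivalently $D_{\mathbf{q}}^{\mathrm T}=D_{-J\mathbf{q}}$), which turns $F_g^{-1}H_g$ into $-F_g^{-1}H_g$.

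Granting the transformation law, the rest is short. For $G\in\bstb{\Pi}$ we have $\chi_{G\mathbf{p}}=\chi_{\mathbf{p}}$ for all $\mathbf{p}$; applying $g$ gives $\omega_g(G\mathbf{p})\,\chi_{G_gG\mathbf{p}}=\omega_g(\mathbf{p})\,\chi_{G_g\mathbf{p}}$, and once the two phases are shown to agree this reads $\chi_{G_gG\mathbf{p}}=\chi_{G_g\mathbf{p}}$ for all $\mathbf{p}$, whence (replacing $\mathbf{p}$ by $G_g^{-1}\mathbf{q}$) $\chi_{G_gGG_g^{-1}\mathbf{q}}=\chi_{\mathbf{q}}$, as needed. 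So everything reduces to the phase invariance $\omega_g(G\mathbf{p})=\omega_g(\mathbf{p})$ for $G\in\bstb{\Pi}$, which is trivial when $3\nmid d$. For $3\mid d$ it is equivalent to $\langle(G-I)\mathbf{p},\rv_g\rangle\equiv 0\pmod 3$ for all $\mathbf{p}$, i.e.\ to $G^{\mathrm T}$ fixing $\bsmt 0&-1\\1&0\esmt\rv_g$ modulo $3$. Since the elements of $\bstb{\Pi}$ are the matrices $(\Det F)F$ with $F\in\tstb{\Pi}$, and since $F^{\mathrm T}\bsmt 0&-1\\1&0\esmt F=(\Det F)\bsmt 0&-1\\1&0\esmt$ for $2\times2$ matrices (so that the transpose of $(\Det F)F$ fixes $\bsmt 0&-1\\1&0\esmt\rv_g$ mod $3$ exactly when $F$ fixes $\rv_g$ mod $3$), this comes down to showing that every $F\in\tstb{\Pi}$ fixes $\rv_g$ modulo $3$.

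The last fact I would extract from the equivariance $g(\stb{\Pi})=U_g\stb{\Pi}U_g^{\dagger}$ established in the course of proving Theorem~\ref{thm4}: for $F\in\tstb{\Pi}$ there is $F'\in\tstb{\Pi}$ with $U_{H_gFH_g^{-1}}\dot{=}D_{\qv_g-(F_gF'F_g^{-1})\qv_g}\,U_{F_gF'F_g^{-1}}$, and Theorem~\ref{eq:kerCliffThm} then forces $H_gFH_g^{-1}\equiv F_gF'F_g^{-1}\pmod d$ together with $(H_gFH_g^{-1})\qv_g\equiv\qv_g$ (mod $d$ if $d$ is odd, mod $d/2$ if $d$ is even); writing $\qv_g=\tfrac d3\mathbf{v}$ and dividing this last congruence by $\tfrac d3$ gives $(H_gFH_g^{-1}-I)\mathbf{v}\equiv\boldsymbol 0\pmod 3$ in both parities, so $F$ fixes $H_g^{-1}\mathbf{v}$, hence $\rv_g$ (a unit multiple of $H_g^{-1}\mathbf{v}$ mod $3$, since $\gcd(k_g,3)=1$), modulo $3$. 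I expect the transformation law of the first paragraph to be routine though somewhat lengthy Weyl-operator bookkeeping, with the genuine difficulty being this last point in the case $3\mid d$: one must keep the factor $d/3$ intact in the displacement congruences coming from Theorem~\ref{eq:kerCliffThm} rather than reducing them modulo $3$ at the outset, since a premature reduction discards exactly the information needed when $9\mid d$.
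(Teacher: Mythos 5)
Your proposal is correct, but it takes a genuinely different route from the paper. The paper proves the lemma directly at the level of the stability group: for $L\in\tstb{\Pi}$ it reuses the congruence of Eq.~(\ref{eq:thm4C}) together with Theorem~\ref{eq:kerCliffThm} to conclude $G\vpu{-1}_g L G_g^{-1}=M$ when $d$ is odd, or $G\vpu{-1}_g L G_g^{-1}=PM$ with $P=(1+rd)I$ and $U_P\dot{=}I$ when $d$ is even, so that $G_g$ normalizes $\tstb{\Pi}$ and hence $\bstb{\Pi}$; the overlaps $\chi_{\mathbf{p}}$ never enter. You instead front-load the transformation law $g(\chi_{\mathbf{p}})=\sigma^{\langle\rv_g,\mathbf{p}\rangle}\chi_{G_g\mathbf{p}}$ (which is the paper's \emph{next} theorem; its derivation there does not use this lemma, the lemma being invoked only to pass from $G_g$ to an arbitrary coset representative, so your ordering is not circular) and then read off normalizer membership from Lemma~\ref{lem:sbarpiStab}, which requires the extra phase-invariance step when $d=0$ (mod $3$). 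Your reduction of that step is sound: the identity $F^{\mathrm T}\bigl(\begin{smallmatrix}0&-1\\1&0\end{smallmatrix}\bigr)F=(\Det F)\bigl(\begin{smallmatrix}0&-1\\1&0\end{smallmatrix}\bigr)$ correctly converts it into ``every $F\in\tstb{\Pi}$ fixes $\rv_g$ mod $3$,'' and your derivation of that fact from $g(\stb{\Pi})=U_g\stb{\Pi}U_g^{\dagger}$ plus Theorem~\ref{eq:kerCliffThm} (keeping the displacement congruence mod $d$ or $d/2$ before dividing by $d/3$) is exactly right, including the even-$d$ parity bookkeeping; it also checks out against the tabulated data (e.g.\ $F_0$ and $\rv_1$ for orbits $6a$, $9a$, $24c$). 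What the comparison buys: the paper's proof is shorter and purely group-theoretic, whereas your route gives a cleaner conceptual statement (normalizing $\bstb{\Pi}$ is the same as preserving the overlap pattern up to the $\sigma$-phase) and yields the auxiliary fact $F\rv_g=\rv_g$ (mod $3$) as a byproduct; but note that your handling of the $3\mid d$ phase ends up redoing essentially the same kernel-theorem congruence analysis that constitutes the whole of the paper's direct proof, so the detour through the overlaps does not actually save that work. Two small points: the lemma's ``for all $g\in\gl$'' should in any case read $g\in\gZ$ (only there are $F_g$, $\qv_g$ defined), which is what you implicitly assume; and your finiteness argument for promoting $G\vpu{-1}_g\bstb{\Pi}G_g^{-1}\subseteq\bstb{\Pi}$ to equality is fine.
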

\begin{proof}
We prove the theorem on the assumption that $d=0$ (mod $3$).  The proof when $d\neq 0$ (mod $3$) is essentially the same, though slightly easier as we do not have the factor $D_{\qv_g}$ in the expression for $U_g$.
It is easily seen that $\nbstb{\Pi}$ is the normalizer of $\tstb{\Pi}$ as well as $\bstb{\Pi}$.
So it is enough to prove that $G_g$ is in the normalizer of $\tstb{\Pi}$.
Let $L$ be any element of $\tstb{\Pi}$.  By the argument leading to Eq.~(\ref{eq:thm4C}) there exists $M\in\tstb{\Pi}$ such that
\be
D_{H_gLH_g^{-1}\qv_g -\qv_g}U_{H_gLH_g^{-1}F_{g}M^{-1}F_g^{-1}}\dot{=} I
\ee
If $d$ is odd it follows from Theorem~\ref{eq:kerCliffThm} that
\begin{align}
H\vpu{-1}_g L H_g^{-1} F\vpu{-1}_g M^{-1} F^{-1}_g & = I
\\
\intertext{Hence}
G\vpu{-1}_g L G_g^{-1} = M
\end{align}
Since $L\in\tstb{\Pi}$ is  arbitrary this implies $G_g\in \nbstb{\Pi}$.  If, on the other hand, $d$ is even we have by the same theorem that
\begin{align}
H_gLH_g^{-1}\qv_g -\qv_g & = \bmt \frac{sd}{2} \\ \frac{td}{2}\emt & &\text{(mod $d$)}
\\
H_gLH_g^{-1}F_{g}M^{-1}F_g^{-1} & = \bmt 1+ rd & sd \\ td & 1+rd\emt &&\text{(mod $2d$)}
\end{align}
for some $r,s,t=0$ or $1$.  We also have
\be
H_gLH_g^{-1}\qv_g -\qv_g = \begin{cases} \boldsymbol{0} \text{  (mod $d$)} \qquad & d\neq 0 \text{ (mod 3)}
\\
\boldsymbol{0} \text{  (mod $\frac{d}{3}$)} \qquad & d=0 \text{ (mod 3)}
\end{cases}
\ee
It follows that $s=t=0$ and, consequently,
\be
G\vpu{-1}_g L G^{-1}_g = F^{-1}_g P F\vpu{-1}_g M= P M
\ee
where
\be
P = \bmt 1+r d & 0 \\ 0 & 1+ rd\emt
\ee
and where we used the fact that $P$ commutes with $F_g$.  The fact that $U_P \dot{=} I$ means $PM\in\tstb{\Pi}$.  So $G_g\in \nbstb{\Pi}$ in this case also.
\end{proof}

\begin{theorem} For all $g\in \gZ$ and $\mathbf{p}\in\mathbb{Z}_{\bar{d}}\times \mathbb{Z}_{\bar{d}}$
\be
g(\chi_{\mathbf{p}}) = \begin{cases}
\chi_{G\mathbf{p}} \qquad & d\neq 0 \text{  (mod 3)}
\\
\sigma^{\langle \rv_g ,\mathbf{p}\rangle} \chi_{G \mathbf{p}} \qquad & d= 0 \text{ (mod 3)}
\end{cases}
\ee
where $\sigma = e^{\frac{2\pi i}{3}}$ and $G$ is any element of the coset $G_g \bstb{\Pi}$.
\end{theorem}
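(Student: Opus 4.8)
The plan is to compute $g(\chi_{\mathbf p})$ directly from the defining equation $g(\Pi) = U_g \Pi U_g^{\dagger}$, using Theorem~\ref{thm4} to put $U_g$ in the normal form $U_{F_g}$ (when $d \neq 0$ mod $3$) or $D_{\qv_g} U_{F_g}$ (when $d = 0$ mod $3$), and then reducing the resulting overlap to a $\chi$ via Lemma~\ref{lem:sbarpiStab}. First I would write $g(\chi_{\mathbf p}) = g\bigl(\Tr(\Pi D_{\mathbf p})\bigr)$ and, using $g \in \gc$ so that $g$ respects products and Hermitian conjugation (Eq.~(\ref{eq:gProdAndHConj})) together with the fact that $g$ commutes with the trace (which is a sum of diagonal matrix elements, all fixed by the rationality of the trace functional), move $g$ inside: $g(\chi_{\mathbf p}) = \Tr\bigl(g(\Pi) g(D_{\mathbf p})\bigr)$. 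By Theorem~\ref{thm1}, $g(D_{\mathbf p}) = D_{H_g \mathbf p}$, and by construction $g(\Pi) = U_g \Pi U_g^{\dagger}$, so $g(\chi_{\mathbf p}) = \Tr\bigl(U_g \Pi U_g^{\dagger} D_{H_g\mathbf p}\bigr) = \Tr\bigl(\Pi\, U_g^{\dagger} D_{H_g\mathbf p} U_g\bigr)$.

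Next I would evaluate $U_g^{\dagger} D_{H_g \mathbf p} U_g$. When $d\neq 0$ mod $3$, $U_g = U_{F_g}$ acts on displacement operators by $U_{F_g}^{\dagger} D_{\mathbf q} U_{F_g} = D_{F_g^{-1}\mathbf q}$ when $F_g$ is symplectic, and by a complex-conjugated version with $-J$ inserted when $F_g$ is anti-symplectic; in either case one obtains $\pm$ a displacement operator whose label is $(\Det F_g)F_g^{-1} H_g \mathbf p = G_g^{-1}\mathbf p$ modulo the sign subtleties, so $g(\chi_{\mathbf p}) = \chi_{G_g^{-1}\mathbf p}$ — but since $G_g$ normalizes $\bstb{\Pi}$ (the preceding lemma) and $\chi$ is $\bstb{\Pi}$-invariant, replacing $G_g^{-1}$ by any representative of the coset $G_g\bstb{\Pi}$ (one should be slightly careful here about $G_g$ versus $G_g^{-1}$; the invariance of $\chi$ under $\bstb\Pi$ and the normalizer property reconcile the two) gives the claimed $\chi_{G\mathbf p}$. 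The anti-symplectic case needs the extra identity, already used in the proof of Lemma~\ref{lem:sbarpiStab}, that $g_c$ applied to an overlap with a $-J$-twisted displacement returns the same overlap; this is exactly why passing to $\bstb\Pi = \{(\Det F)F : F\in\tstb\Pi\}$ rather than $\tstb\Pi$ is the right move.

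When $d = 0$ mod $3$, the only new ingredient is the displacement factor $D_{\qv_g}$ in $U_g = D_{\qv_g} U_{F_g}$. Conjugating $D_{H_g\mathbf p}$ by $D_{\qv_g}$ produces a phase $\tau^{\langle \qv_g, H_g\mathbf p\rangle}$ — more precisely, $D_{\qv_g}^{\dagger} D_{H_g\mathbf p} D_{\qv_g} = \tau^{2\langle \qv_g, H_g\mathbf p\rangle} D_{H_g\mathbf p}$ up to the usual conventions — and since $\qv_g \equiv \boldsymbol 0$ mod $d/3$ this phase is a cube root of unity. Unwinding the definitions $\mathbf r_g = -\tfrac{3k_g}{d} H_g^{-1}\qv_g$ mod $3$ and $\sigma = e^{2\pi i/3}$, the accumulated phase should come out to $\sigma^{\langle \mathbf r_g, \mathbf p\rangle}$; I would verify this by a short mod-$3$ computation, tracking how the $k_g$ in $H_g$ and in $\tau \mapsto \tau^{k_g}$ interacts with the $1/d$ in $\mathbf r_g$. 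Then one finishes as before using $G_g \in \nbstb\Pi$ and the $\bstb\Pi$-invariance of $\chi$.

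The main obstacle I anticipate is bookkeeping of signs and phases: reconciling $G_g^{-1}$ with $G_g$ in the coset statement, handling the $(\Det F_g)$ sign in both the symplectic and anti-symplectic branches, and confirming that the mod-$3$ phase really assembles into $\sigma^{\langle\mathbf r_g,\mathbf p\rangle}$ rather than some off-by-a-sign variant. None of this is deep, but it is precisely the kind of computation where the conventions in Eq.~(\ref{eq:UFprimeFFormula}) and the symplectic-form sign must be tracked with care; the structural content — move $g$ inside the trace, use Theorems~\ref{thm1} and~\ref{thm4}, then collapse via Lemma~\ref{lem:sbarpiStab} and the normalizer lemma — is routine.
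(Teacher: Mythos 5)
Your proposal follows essentially the same route as the paper's proof: move $g$ inside the trace using Theorem~\ref{thm1} and the normal form of $U_g$ from Theorem~\ref{thm4}, extract the displacement phase when $d=0$ (mod $3$), handle the anti-symplectic branch by pulling out $U_J$ (equivalently an application of $g_c$, which turns $\Tr(\Pi D_{\mathbf q})$ into $\Tr(\Pi D_{-\mathbf q})$), and then use the lemma showing $G_g\in\nbstb{\Pi}$ together with Lemma~\ref{lem:sbarpiStab} to pass to an arbitrary representative of the coset $G_g\bstb{\Pi}$. One concrete correction to your bookkeeping: the label you arrive at, $(\Det F_g)F_g^{-1}H_g\mathbf p$, is by \emph{definition} $G_g\mathbf p$, not $G_g^{-1}\mathbf p$, so there is no $G_g$-versus-$G_g^{-1}$ discrepancy to reconcile; moreover the reconciliation you propose would not actually work, since the identity $\chi_{G_g H\mathbf p}=\chi_{G_g\mathbf p}$ for $H\in\bstb{\Pi}$ only fixes $\chi$ on the single coset $G_g\bstb{\Pi}$, which need not contain $G_g^{-1}$. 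With the definition of $G_g$ read correctly that issue evaporates, and your remaining flagged sign checks resolve exactly as in the paper: the displacement factor contributes $\tau^{-2\langle\qv_g,H_g\mathbf p\rangle}$, which equals $\sigma^{\langle\rv_g,\mathbf p\rangle}$ precisely because of the factor $-\tfrac{3k_g}{d}H_g^{-1}$ in the definition of $\rv_g$, while in the anti-symplectic branch the minus sign produced by $g_c$ combines with $\Det F_g=-1$ so that the resulting label is again $G_g\mathbf p$.
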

\begin{proof}
If $d=0$ (mod $3$)
\begin{align}
g(\chi_{\mathbf{p}}) & = \Tr\left(D_{\qv_g}U\vpu{\dagger}_{F_g} \Pi U^{\dagger}_{F_g} D^{\dagger}_{\qv_g} D_{H_g\mathbf{p}}\right)
\\
& =
\begin{cases}
\omega^{-\langle \qv_g,H_g \mathbf{p}\rangle}\Tr\bigl(  \Pi D_{F_g^{-1}H_g \mathbf{p}}\bigr) 
\qquad & \Det F_g = 1
\\
\omega^{-\langle \qv_g,H_g \mathbf{p}\rangle}\Tr\bigl(  g_c(\Pi) D_{JF_g^{-1}H_g \mathbf{p}}\bigr) \qquad & \Det F_g = -1
\end{cases}
\nonumber
\\
& = 
\begin{cases}
\sigma^{\langle \rv_g,\mathbf{p}\rangle}\Tr\bigl(  \Pi D_{G_g \mathbf{p}}\bigr) 
\qquad & \Det F_g = 1
\\
\sigma^{\langle \rv_g,\mathbf{p}\rangle}g_c\Bigl(\Tr\bigl(  \Pi D_{F_g^{-1}H_g \mathbf{p}}\bigr)\Bigr) \qquad & \Det F_g = -1
\end{cases}
\nonumber
\\
& = 
\begin{cases}
\sigma^{\langle \rv_g,\mathbf{p}\rangle}\Tr\bigl(  \Pi D_{G_g \mathbf{p}}\bigr) 
\qquad & \Det F_g = 1
\\
\sigma^{\langle \rv_g,\mathbf{p}\rangle}\Tr\bigl(  \Pi D_{-F_g^{-1}H_g \mathbf{p}}\bigr)\qquad & \Det F_g = -1
\end{cases}
\nonumber
\\
& = 
\sigma^{\langle \rv_g ,\mathbf{p}\rangle} \chi_{G_g\mathbf{p}}
\label{eq:gOverlapAction1}
\end{align}
If $d\neq 0$ ( mod $3$) we find, by essentially the same argument, that 
\be
g(\chi_{\mathbf{p}}) = \chi_{G_g \mathbf{p}}
\label{eq:gOverlapAction2}
\ee
Finally, for all $H\in \bstb{\Pi}$
\be
\chi_{G_g H \mathbf{p}} = \chi_{G_g H G^{-1}_g G_g\mathbf{p}} = \chi_{G_g \mathbf{p}}
\ee
where we used Lemma~\ref{lem:sbarpiStab} and the fact that $G_g H G^{-1}_g\in \bstb{\Pi}$ (because $G_g \in \nbstb{\Pi}$).  So we can replace $G_g$ with arbitrary $G \in G_g\bstb{\Pi}$ in Eqs.~(\ref{eq:gOverlapAction1}) and~(\ref{eq:gOverlapAction2}).
\end{proof}

\begin{lemma}
\label{lm:homgZ}
The map   $f\colon \gZ \to \nbstb{\Pi}/\bstb{\Pi}$ defined by 
\be
f(g) = G_g \bstb{\Pi}
\ee
is a  homomorphism.
\end{lemma}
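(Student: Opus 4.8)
The plan is to verify $f(g_1 g_2) = f(g_1)f(g_2)$ by showing that $G_{g_1 g_2}$ and $G_{g_1}G_{g_2}$ differ, on the left and on the right, by elements of $\bstb{\Pi}$, so that they determine the same coset of $\bstb{\Pi}$ in $\nbstb{\Pi}$. The ingredients are: the cocycle relation of Theorem~\ref{thm3}, i.e.\ $U_{g_1 g_2} = g_1(U_{g_2})\,U_{g_1}\,S$ with $S\in\stb{\Pi}$; the action formulae $g(D_{\mathbf p}) = D_{H_g\mathbf p}$ and $g(U_F)\,\dot{=}\,U_{H_gFH_g^{-1}}$ of Theorem~\ref{thm1}; the normal form $U_g\,\dot{=}\,D_{\qv_g}U_{F_g}$ of Theorem~\ref{thm4} (with $\qv_g=\boldsymbol 0$ when $3\nmid d$); and the fact that, as $\Pi$ is a \simple fiducial, $\stb{\Pi}$ is displacement-free, so $S\,\dot{=}\,U_L$ for some $L\in\tstb{\Pi}$. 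I would also use that $k_{g_1g_2}\equiv k_{g_1}k_{g_2}$ (mod $\bar d$), hence $H_{g_1g_2}=H_{g_1}H_{g_2}$, which is immediate from $g_1 g_2(\tau)=\tau^{k_{g_1}k_{g_2}}$.

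First I would expand $g_1(U_{g_2})U_{g_1}S$. Since $D_{\qv_{g_2}}$ is linear, $g_1(D_{\qv_{g_2}}U_{F_{g_2}})=D_{H_{g_1}\qv_{g_2}}\,g_1(U_{F_{g_2}})\,\dot{=}\,D_{H_{g_1}\qv_{g_2}}\,U_{H_{g_1}F_{g_2}H_{g_1}^{-1}}$ by Theorem~\ref{thm1}; multiplying by $U_{g_1}=D_{\qv_{g_1}}U_{F_{g_1}}$ and by $S\,\dot{=}\,U_L$, and moving all displacement operators to the left using $U_F D_{\mathbf p}=D_{F\mathbf p}U_F$, gives $g_1(U_{g_2})U_{g_1}S\,\dot{=}\,D_{\mathbf q}\,U_{H_{g_1}F_{g_2}H_{g_1}^{-1}F_{g_1}L}$ for some $\mathbf q\in\mathbb Z_d^2$. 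Comparing with $U_{g_1g_2}\,\dot{=}\,D_{\qv_{g_1g_2}}U_{F_{g_1g_2}}$ yields $D_{\qv_{g_1g_2}-\mathbf q}\,U_{F_{g_1g_2}(H_{g_1}F_{g_2}H_{g_1}^{-1}F_{g_1}L)^{-1}}\,\dot{=}\,I$, and Theorem~\ref{eq:kerCliffThm} then forces
\be
F_{g_1g_2} = P\cdot H_{g_1}F_{g_2}H_{g_1}^{-1}\,F_{g_1}\,L \pmod{\bar d},
\ee
where $P=I$ when $d$ is odd and $P=\bsmt 1+rd & sd \\ td & 1+rd\esmt$ with $r,s,t\in\{0,1\}$ when $d$ is even; note $\Det P=1$ and $U_P\,\dot{=}\,I$.

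Next I would pass from $F$'s to $G$'s. Substituting $G_g=(\Det F_g)F_g^{-1}H_g$ and $H_{g_1g_2}=H_{g_1}H_{g_2}$ into the displayed identity, taking determinants (which are multiplicative), and using that $P$ is normalised modulo $\bar d$ by every $H_g$ — its off-diagonal entries are $0$ or $d$, and $dk\equiv d\equiv dk^{-1}$ (mod $2d$) since $k$ is odd, while its diagonal part is scalar — one obtains, after routine rearrangement in $\GL(2,\mathbb{Z}_{\bar d})$,
\be
G_{g_1g_2} = \bigl((\Det L)L^{-1}\bigr)\,G_{g_1}G_{g_2}\,P^{-1}.
\ee
Here $(\Det L)L^{-1}\in\bstb{\Pi}$ because $L^{-1}\in\tstb{\Pi}$ (as $\tstb{\Pi}$ is a group) and $\Det L^{-1}=\Det L$; and $P^{-1}\in\bstb{\Pi}$ because $U_P\,\dot{=}\,I$ gives $P\in\tstb{\Pi}$ while $\Det P=1$. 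Since $G_{g_1},G_{g_2},G_{g_1g_2}\in\nbstb{\Pi}$ by the Lemma stating $G_g\in\nbstb{\Pi}$, and $\bstb{\Pi}\vartriangleleft\nbstb{\Pi}$, the left factor may be moved across $G_{g_1}G_{g_2}$ and absorbed into $\bstb{\Pi}$, so $G_{g_1g_2}\bstb{\Pi}=G_{g_1}G_{g_2}\bstb{\Pi}=(G_{g_1}\bstb{\Pi})(G_{g_2}\bstb{\Pi})$, i.e.\ $f$ is a homomorphism.

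The only real obstacle is the bookkeeping in the even-dimensional case: everything must be tracked only up to phase, and Theorem~\ref{eq:kerCliffThm} inserts the extra factor $P$. The two facts that neutralise it are that $P$ commutes with every $H_g$ modulo $\bar d$ (so it slides harmlessly past the $H$'s in the expression for $G_{g_1g_2}$) and that $P$ itself lies in $\bstb{\Pi}$. The displacement vectors cause no difficulty, since they are consumed by the relation $D_{\cdots}U_{\cdots}\,\dot{=}\,I$; for $d$ odd the same argument applies with $P=I$ and all $\qv$-terms absent.
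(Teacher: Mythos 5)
Your route is essentially the paper's: expand $U_{g_1g_2}\dot{=}g_1(U_{g_2})U_{g_1}U_L$ via Theorems~\ref{thm1}, \ref{thm3} and~\ref{thm4}, invoke Theorem~\ref{eq:kerCliffThm}, pass to the $G$-matrices with $G_g=(\Det F_g)F_g^{-1}H_g$ and $H_{g_1g_2}=H_{g_1}H_{g_2}$, and absorb $(\Det L)L^{-1}\in\bstb{\Pi}$ using $G_g\in\nbstb{\Pi}$; those steps are correct.

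The gap is your treatment of the factor $P$ when $d$ is even. Theorem~\ref{eq:kerCliffThm} applied to $D_{\qv_{g_1g_2}-\mathbf{q}}\,U_{F_{g_1g_2}F'^{-1}}\dot{=}I$ yields \emph{both} $F_{g_1g_2}F'^{-1}=\bsmt 1+rd & sd\\ td & 1+rd \esmt$ (mod $2d$) \emph{and} $\qv_{g_1g_2}-\mathbf{q}=\bsmt sd/2 \\ td/2 \esmt$ (mod $d$), with the same $s,t$. Your parenthetical ``$U_P\dot{=}I$'' is false unless $s=t=0$: for $(s,t)\neq(0,0)$ the kernel theorem gives $U_P\dot{=}D_{-\mathbf{p}}$ with $\mathbf{p}=\bsmt sd/2\\ td/2\esmt\neq\boldsymbol{0}$, a genuine displacement, which does not stabilize $\Pi$ because $\stb{\Pi}$ is displacement-free; hence $P\notin\tstb{\Pi}$, $P^{-1}\notin\bstb{\Pi}$, and your concluding coset identity $G_{g_1g_2}\bstb{\Pi}=G_{g_1}G_{g_2}\bstb{\Pi}$ does not follow. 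So the whole weight of the even case rests on proving $s=t=0$, which is exactly the displacement bookkeeping you wave away in your final paragraph. When $d\neq 0$ (mod $3$) this is immediate, since all $\qv_g=\boldsymbol{0}$ and the displacement side of the kernel relation then forces $s=t=0$; but when $d=0$ (mod $6$) you must argue, as the paper does, that $\qv_{g_1g_2}$ and $\mathbf{q}=H_{g_1}\qv_{g_2}+H_{g_1}F_{g_2}H_{g_1}^{-1}\qv_{g_1}$ are both $\equiv\boldsymbol{0}$ (mod $d/3$) by Theorem~\ref{thm4}, and that a vector $\equiv\boldsymbol{0}$ (mod $d/3$) cannot be congruent to $\bsmt sd/2\\ td/2\esmt$ (mod $d$) unless $s=t=0$. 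Only then is $P$ the scalar matrix $(1+rd)I$, so that $U_P\dot{=}I$, $LP\in\tstb{\Pi}$, and the factor can be absorbed into $\bstb{\Pi}$; with that step supplied, your argument coincides with the paper's proof.
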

\begin{proof} We will prove the result on the assumption that $d=0$ (mod $3$).  The proof for  $d\neq 0$ (mod $3$)  is essentially the same, though somewhat  easier because we do not have the factor $D_{\qv_g}$ in the expression for $U_g$.

It follows from Theorems~\ref{thm1} and~\ref{thm3} that
\begin{align}
D_{\qv_{g_1g_2}}U_{F_{g_1g_2}} &\dot{=} D_{H_{g_1}\qv_{g_2}}U_{H_{g_1}F_{g_2}H^{-1}_{g_1}}D_{\qv_{g_1}} U_{F_{g_1}}U_L
\nonumber
\\
&\dot{=} D_{H_{g_1}\qv_{g_2} + H_{g_1}F_{g_2} H_{g_1}^{-1}\qv_{g_1}}U_{H_{g_1}F_{g_2}H_{g_1}^{-1} F_{g_1} L}
\end{align}
for some $L \in\tstb{\Pi}$.  
If $d$ is odd it follows from Theorem~\ref{eq:kerCliffThm} that
\be
F_{g_1g_2} = H_{g_1}F_{g_2}H_{g_1}^{-1} F_{g_1} L
\ee
If, on the other hand, $d$ is even it follows from the same theorem that
\begin{align}
H_{g_1}\qv_{g_2} + H_{g_1}F_{g_2} H_{g_1}^{-1}\qv_{g_1} -\qv_{g_1g_2} &= \bmt\frac{sd}{2} \\ \frac{td}{2} \emt & &\text{ (mod $d$)}
\\
H_{g_1}F_{g_2}H_{g_1}^{-1} F_{g_1} LF^{-1}_{g_1g_2}& = \bmt 1+rd & sd\\ td & 1+rd\emt & &\text{ (mod $2d$)}
\end{align}
for some $r,s,t=0,1$.
We also have
\be
H_{g_1}\qv_{g_2} + H_{g_1}F_{g_2} H_{g_1}^{-1}\qv_{g_1} -\qv_{g_1g_2} = 
\begin{cases}
\boldsymbol{0} \text{ (mod $d$)} \qquad & d\neq 0 \text{ (mod 3)}
\\
\boldsymbol{0} \text{ (mod $\frac{d}{3}$)} \qquad & d=0 \text{ (mod 3)}
\end{cases}
\ee
It follows  that $s=t=0$ and, consequently,
\begin{align}
F_{g_1g_2} &= H_{g_1}F_{g_2}H_{g_1}^{-1} F_{g_1}L'
\\
\intertext{where}
L' & =L\bmt 1+rd & 0 \\ 0 & 1+rd \emt \in \tstb{\Pi}
\end{align}
So
\be
H\vpu{-1}_{g_1} F\vpu{-1}_{g_2} H_{g_1}^{-1} F\vpu{-1}_{g_1} = F\vpu{-1}_{g_1g_2}M
\ee
for some $M\in\tstb{\Pi}$ in every case, irrespective of whether $d$ is odd or even.  Consequently
\begin{align}
G_{g_1}G_{g_2} &= \Det(F_{g_1}) \Det(F_{g_2}) F^{-1}_{g_1} H\vpu{-1}_{g_1} F^{-1}_{g_2} H\vpu{-1}_{g_2}
\nonumber
\\
&=\Det(F\vpu{-1}_{g_1g_2}) \Det(M)  M^{-1} F^{-1}_{g_1g_2}H\vpu{-1}_{g_1g_2}
\nonumber
\\
& \in  \bstb{\Pi}G_{g_1g_2} 
\end{align}
Taking account of the fact that $G_g \bstb{\Pi} = \bstb{\Pi}G_g$ for all $g$ we conclude
\be
(G_{g_1}\bstb{\Pi})( G_{g_2} \bstb{\Pi}) = (G_{g_1g_2}\bstb{\Pi})
\ee
\end{proof}

It follows from this result that $\gZ/\gZb$ is isomorphic to a subgroup of $\nbstb{\Pi}/\bstb{\Pi}$, where
\be
\gZb = \{ g \in \gZ \colon G_g \in \bstb{\Pi}\}
\label{eq:gZbDef}
\ee 
is the kernel of the homomorphism $f$.  We now examine the structure of $\gZb$.  
\begin{lemma}
Let $\mathcal{O}$ be the orbit of $\Pi$ and let
\be
\mathcal{P} = \{ g \in \gZ \colon g(\tau) = \tau^{-1} \text{ and } g(\Pi') = U\vpu{\dagger}_{-J}\Pi'U^{\dagger}_{-J} \ \forall \, \Pi' \in \mathcal{O} \}
\ee
Then 
\be
\gZb = \glt \cup \mathcal{P}
\ee
\end{lemma}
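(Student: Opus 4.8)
The plan is to establish $\gZb=\glt\cup\mathcal{P}$ by proving the two inclusions, the main tools being the theorem just proved on the action of $\gZ$ on the overlaps $\chi_{\mathbf{p}}$, Lemma~\ref{lem:sbarpiStab}, Theorem~\ref{thm4}, and Theorem~\ref{eq:kerCliffThm}. The first step is a determinant count: since $G_g=(\Det F_g)F_g^{-1}H_g^{-1}$ we have $\Det G_g=k_g^{-1}\Det F_g$, whereas every element of $\bstb{\Pi}$ lies in $\ESL(2,\mathbb{Z}_{\bar{d}})$ and so has determinant $\pm1$; hence $g\in\gZb$ already forces $k_g\equiv\pm1$, i.e.\ $g(\tau)=\tau^{\pm1}$, and in particular $G_g\in\ESL(2,\mathbb{Z}_{\bar{d}})$, so Lemma~\ref{lem:sbarpiStab} applies and $G_g\in\bstb{\Pi}$ is equivalent to $\chi_{G_g\mathbf{p}}=\chi_{\mathbf{p}}$ for all $\mathbf{p}$. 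Because $g\mapsto k_g$ is a homomorphism from $\gZb$ to $\{1,-1\}$, it is enough to identify the two fibres: I claim $\{g\in\gZb:g(\tau)=\tau\}=\glt$ and $\{g\in\gZb:g(\tau)=\tau^{-1}\}=\mathcal{P}$, and, since $\glt$ and $\mathcal{P}$ are disjoint (the former fixes $\tau$, the latter inverts it), the Lemma follows.

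The inclusions $\glt\subseteq\gZb$ and $\mathcal{P}\subseteq\gZb$ are quick. If $g\in\glt$ then $g$ fixes $\tau$ and every matrix element of $\Pi$, so $g(\Pi)=\Pi$; by Theorem~\ref{thm4} and the fact that $\stbF{\Pi}$ is displacement-free one may take $U_g=U_{F_g}$ with $F_g\in\tstb{\Pi}$ (the displacement is killed exactly as in Case~1 of the proof of Theorem~\ref{thm4}), whence $H_g=I$ and $G_g=(\Det F_g)F_g^{-1}\in\bstb{\Pi}$ since $\tstb{\Pi}$ is a group. If $g\in\mathcal{P}$ then $g(\tau)=\tau^{-1}$ gives $H_g=J$, and $g(\Pi)=U_{-J}\Pi U_{-J}^{\dagger}$ lets one take $U_g=U_{-J}$, i.e.\ $F_g=-J$ and $\qv_g=\boldsymbol{0}$; the one-line computation $(\Det(-J))(-J)^{-1}J^{-1}=I$ then gives $G_g\in\bstb{\Pi}$.

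For the reverse inclusions, take $g\in\gZb$ and reconstruct $U_g=D_{\qv_g}U_{F_g}$ via Theorem~\ref{thm4}. The relation $G_g\in\bstb{\Pi}$, combined with the determinant identity above, pins $F_g$ down modulo $\tstb{\Pi}$: when $g(\tau)=\tau$ one gets $F_g\in\tstb{\Pi}$, and when $g(\tau)=\tau^{-1}$ one gets $F_g\in(-J)\tstb{\Pi}$. Feeding a canonical trace-$-1$ symplectic $L\in\tstb{\Pi}$ (supplied by the \simple hypothesis) into the argument of Theorem~\ref{thm4} then forces $\qv_g=\boldsymbol{0}$. Consequently: if $g(\tau)=\tau$ then $g(\Pi)=\Pi$, and applying Theorem~\ref{thm1} with $H_g=I$ to the operators $D_{\mathbf{p}}U_F$ shows that $g$ fixes every fiducial on the orbit $\mathcal{O}$; together with $g(\tau)=\tau$ this gives $g\in\glt$. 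If $g(\tau)=\tau^{-1}$ then $g(\Pi)=U_{-J}\Pi U_{-J}^{\dagger}$, and applying Theorem~\ref{thm1} with $H_g=J$ to $D_{\mathbf{p}}U_F$ (together with the elementary identity relating conjugation by $U_J$ and by $U_{-J}$) upgrades this to $g(\Pi')=U_{-J}\Pi'U_{-J}^{\dagger}$ for every $\Pi'\in\mathcal{O}$, i.e.\ $g\in\mathcal{P}$.

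The hard part is forcing $\qv_g=\boldsymbol{0}$ from $G_g\in\bstb{\Pi}$: this is automatic when $3\nmid d$ (Theorem~\ref{thm4} then already provides $U_g=U_{F_g}$), but when $3\mid d$ it genuinely relies on $\Pi$ being \simple. One conjugates a canonical trace-$-1$ symplectic $L\in\tstb{\Pi}$ by $U_g$, uses displacement-freeness of $\stbF{\Pi}$ and Theorem~\ref{eq:kerCliffThm} to locate the displacement thereby produced, and then invokes the matrix identity $M(L-I)=3I$ (for the suitable integer matrix $M$ from the proof of Theorem~\ref{thm4}) together with the constraint $\qv_g\equiv\boldsymbol{0}\pmod{d/3}$; this is essentially the bookkeeping of Theorem~\ref{thm4} and is the step I expect to demand the most care. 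Two further points to watch are the passage, in the $g(\tau)=\tau$ branch, from "$g$ fixes $\Pi$ and $\tau$'' to "$g$ fixes $\elt$'', and, in the $g(\tau)=\tau^{-1}$ branch, the propagation of $g(\Pi)=U_{-J}\Pi U_{-J}^{\dagger}$ from the single fiducial $\Pi$ to every member of $\mathcal{O}$ via Theorem~\ref{thm1}.
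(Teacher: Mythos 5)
Your overall route is the paper's: from $G_g\in\bstb{\Pi}$ you read off $\Det G_g=\pm1$, hence $k_g=\pm1$; the $k_g=1$ elements are assigned to $\glt$ and the $k_g=-1$ elements to $\mathcal{P}$ (with the passage from $g(\Pi)=U\vpu{\dagger}_{-J}\Pi U^{\dagger}_{-J}$ to all of $\mathcal{O}$ done, as in the paper, via Theorem~\ref{thm1}), and your two easy inclusions are the paper's reverse inclusion, i.e.\ the choices $F_g=I$ resp.\ $F_g=-J$ giving $G_g=I$.

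The one place you go beyond the paper is the step you yourself single out as the crux: deducing $\qv_g=\boldsymbol{0}$ (when $3\mid d$) from $G_g\in\bstb{\Pi}$ by re-running the Theorem~\ref{thm4} bookkeeping with a canonical trace $-1$ symplectic. That argument cannot deliver what you want. Conjugating $L\in\tstb{\Pi}$ by $U_g$ and applying Theorem~\ref{eq:kerCliffThm} together with the ``$3I$'' identity yields precisely $3\qv_g\equiv\boldsymbol{0}$, i.e.\ $\qv_g\equiv\boldsymbol{0}\pmod{d/3}$, which is already the normalization built into Theorem~\ref{thm4}, not the statement $\qv_g=\boldsymbol{0}$. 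More fundamentally, $G_g=(\Det F_g)F_g^{-1}H\vpu{-1}_g$ is defined without any reference to $\qv_g$, so the hypothesis $G_g\in\bstb{\Pi}$ places no constraint on $\qv_g$ at all; indeed the paper's own tables exhibit elements of $\gZb$ with nonzero displacement when $3\mid d$ (e.g.\ $\bar{\qv}_1=\bsmt 2\\2\esmt$ for orbit $6a$ and $\bar{\qv}_1=\bsmt 3\\3\esmt$ for $9a,b$), so no argument of this shape can close the step. For comparison, the paper's proof does not attempt it: it passes directly from $F_g\in\tstb{\Pi}$ (resp.\ $F_g\in-J\tstb{\Pi}$) to $g(\Pi)=\Pi$ (resp.\ $g(\Pi)=U\vpu{\dagger}_{-J}\Pi U^{\dagger}_{-J}$), i.e.\ it treats $U_g$ as if the displacement factor were absent, and it likewise passes from ``$g$ fixes $\tau$ and $\Pi$'' to $g\in\glt$ without comment --- the other point you flagged. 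So you have correctly located the delicate points that the published argument glosses over, but your proposed mechanism for the $\qv_g$ step is not a proof, and as written your argument stalls exactly there for the orbits with $3\mid d$.
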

\begin{remark}
$\mathcal{O}$ is the set of all fiducials on the orbit of $\Pi$, not just the simple ones.  So this result means that $\gZb$ depends only on the orbit, and not on the particular simple fiducial $\Pi$ featuring in the definition, Eq.~(\ref{eq:gZbDef}).

The result also shows that $\glt \subseteq \gZb$.
\end{remark}
\begin{proof}
Let $g\in\gZb$.   Then $G_g \in \bstb{\Pi}$, implying that $\Det G_g=\pm 1$.  In view of the relation
$G_g = \Det (F_g) F^{-1}_g H_g$ this means $k_g = \Det (H_g) = \pm 1$. 

 Suppose $k_g = 1$.  Then $H_g = I$ and  $F_g \in \tstb{\Pi}$, implying 
\begin{align}
g(\tau) & = \tau & g(\Pi) & = \Pi
\end{align}
So $g \in\glt$. 

 Suppose, on the other hand, $k_g = -1$.  Then $H_g = J$ and  $F_g \in -J \tstb{\Pi}$, implying
\begin{align}
g(\tau) & = \tau^{-1}  & g(\Pi) & =U\vpu{\dagger}_{-J} \Pi U^{\dagger}_{-J}
\end{align}
Let $\Pi'$ be an arbitrary element of $\mathcal{O}$. Then $\Pi' = D\vpu{\dagger}_{\mathbf{p}} U\vpu{\dagger}_F \Pi U^{\dagger}_F D^{\dagger}_{\mathbf{p}}$ for some $F$, $\mathbf{p}$.  Consequently
\be
g(\Pi') = D\vpu{\dagger}_{-J \mathbf{p}} U\vpu{\dagger}_{JFJ}U\vpu{\dagger}_{-J} \Pi U^{\dagger}_{-J}U^{\dagger}_{JFJ} D^{\dagger}_{-J\mathbf{p}} = U\vpu{\dagger}_{-J} \Pi' U^{\dagger}_{-J}
\ee
So $g \in \mathcal{P}$.  

It follows that $\gZb \subseteq \glt \cup \mathcal{P}$.  To prove the reverse inclusion let $g \in \glt\cup\mathcal{P}$.  If $g\in\glt$ it can be assumed that $F_g = I$.  Also $H_g=I$.  So $G_g = I $ implying $g \in \gZb$.  If, on the other hand, $g\in \mathcal{P}$ it can be assumed that $F_g = -J$.  Also $H_g = J$.  So we again have $G_g = I $ implying $g\in\gZb$.  Consequently $  \glt \cup \mathcal{P}\subseteq \gZb$.
\end{proof}
This result is easily seen to imply:
\begin{description}
\item[Case 1]
If $\sqrt{d}\in\elt$ and $\mathcal{P}$ is empty then 
\be
\glt= \gZb = \langle e\rangle
\ee
\item[Case 2]
If $\sqrt{d}\in\elt$ and $\mathcal{P}$ is non-empty then
\begin{align}
\glt&=\langle e\rangle & \gZb & = \langle \bar{g}_1\rangle
\end{align}
 where $\bar{g}_1$ is the unique automorphism such that
\begin{align}
\bar{g}_1(\tau) & = \tau^{-1} & \bar{g}_1(\Pi') & = U\vpu{\dagger}_{-J} \Pi' U_{-J}^{\dagger} 
\end{align}
for all $\Pi' \in \mathcal{O}$.
\item[Case 3]
If $\sqrt{d}\notin\elt$ and $\mathcal{P}$ is empty then 
\begin{align}
\glt & = \gZb = \langle \bar{g}_2\rangle
\end{align}
where $\bar{g}_2$ is the unique automorphism such that
\begin{align}
\bar{g}_2(\tau) & = \tau & \bar{g}_2(\Pi') & =  \Pi'   & \bar{g}_2(\sqrt{d}) = -\sqrt{d}
\end{align}
for all $\Pi' \in \mathcal{O}$.
\item[Case 4]
If $\sqrt{d}\notin\elt$ and $\mathcal{P}$ is non-empty then 
\begin{align}
\glt & = \langle \bar{g}_2\rangle & \gZb & = \langle \bar{g}_1,\bar{g}_2\rangle
\end{align}
where $\bar{g}_2$ is as in case $3$ and $\bar{g}_1$ the unique automorphism such that
\begin{align}
\bar{g}_1(\tau) & = \tau^{-1} & \bar{g}_1(\Pi') & =  U\vpu{\dagger}_{-J} \Pi' U^{\dagger}_{-J}  & \bar{g}_1(\sqrt{d}) = \sqrt{d}
\end{align}
for all $\Pi' \in \mathcal{O}$.
\end{description}
It is also easy to see that, when defined, $\bar{g}_1$, $\bar{g}_2$ are order $2$ and $\bar{g}_1\bar{g}_2 = \bar{g}_2\bar{g}_1$.  So
\begin{align}
\glt &\cong \begin{cases} \mathbb{Z}_1 \quad & \text{cases 1,2}  \\ \mathbb{Z}_2 \quad &  \text{cases 3,4} \end{cases}
&
\gZb & \cong \begin{cases}\mathbb{Z}_1 \quad & \text{case 1} \\ \mathbb{Z}_2 \quad & \text{cases 2,3} \\ \mathbb{Z}_2\oplus \mathbb{Z}_2 \quad & \text{case 4} \end{cases}
\end{align}
As we will see in the next section for each of the $27$ orbits with $d>3$ for which an exact fiducial is known  $\mathcal{P}$ is non-empty.  So for these orbits only cases $2$ and $4$ apply.

Turning to the group $\gEZ$, 
 observe that the fact that $\glt \subseteq \gZb$  and the isomorphism 
\be
\gEZ \cong \gZ/\glt
\ee
mean that $f$ induces a homomorphism
\be
f_{\elt} \colon \gEZ \to \nbstb{\Pi}/\bstb{\Pi}
\ee
with the property $f_{\elt}(g|_{\elt}) = f(g)$.  It is straightforward to show that $\ker f_{\elt} = \gEZb$, where
\be
\gEZb = 
\begin{cases}
\langle e \rangle \qquad & \text{$\bar{g}_1$ is not defined}
\\
\langle \bar{g}_1|_{\elt} \rangle \qquad & \text{$\bar{g}_1$ is defined}
\end{cases}
\ee
We have thus proved the following  structure theorem for the groups $\gZ$, $\gEZ$:
\begin{theorem}
\label{thm:Structure}
There exists a subgroup $\hgr \subseteq \nbstb{\Pi}/\bstb{\Pi}$ such that
\begin{enumerate}
\item In  the normal series
\be
\langle e \rangle \vartriangleleft \gZb \vartriangleleft \gZ
\ee
\begin{enumerate}
\item $\gZb$ is  isomorphic to $\mathbb{Z}_1$, $\mathbb{Z}_2$ or $\mathbb{Z}_2 \oplus \mathbb{Z}_2$
\item $\gZ/\gZb \cong \hgr$
\end{enumerate}
\item In  the normal series
\be
\langle e \rangle \vartriangleleft \gEZb \vartriangleleft \gEZ
\ee
\begin{enumerate}
\item $\gEZb$ is isomorphic to $\mathbb{Z}_1$ or $\mathbb{Z}_2$
\item $\gEZ/\gEZb \cong \hgr$
\end{enumerate}
\end{enumerate}
\end{theorem}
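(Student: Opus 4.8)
The plan is to obtain the theorem by assembling the results already proved in this section, taking $\hgr$ to be the image $f(\gZ)$ of the homomorphism $f\colon\gZ\to\nbstb{\Pi}/\bstb{\Pi}$ of Lemma~\ref{lm:homgZ}. Since a homomorphic image is a subgroup, $\hgr$ is automatically a subgroup of $\nbstb{\Pi}/\bstb{\Pi}$, which is the group whose existence is asserted; the whole point of the argument is that this same $\hgr$ serves in both parts~(1) and~(2).

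For part~(1), first note that $\gZb$, being the kernel of $f$ by Eq.~(\ref{eq:gZbDef}), is normal in $\gZ$, and $\langle e\rangle$ is normal in $\gZb$, so $\langle e\rangle\vartriangleleft\gZb\vartriangleleft\gZ$ is a genuine normal series. The first isomorphism theorem applied to $f$ then gives $\gZ/\gZb\cong f(\gZ)=\hgr$, which is (1)(b). Statement (1)(a) is exactly what the preceding four-case analysis delivers: according to whether $\sqrt{d}\in\elt$ and whether $\mathcal{P}$ is empty, $\gZb$ is one of $\langle e\rangle$, $\langle\bar{g}_1\rangle$, $\langle\bar{g}_2\rangle$, $\langle\bar{g}_1,\bar{g}_2\rangle$, and since $\bar{g}_1,\bar{g}_2$ are commuting involutions these are isomorphic to $\mathbb{Z}_1$, $\mathbb{Z}_2$, $\mathbb{Z}_2$ and $\mathbb{Z}_2\oplus\mathbb{Z}_2$ respectively.

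For part~(2) the idea is to push $f$ down to $\gEZ$ and verify that nothing changes. Because $\glt\subseteq\gZb=\ker f$ (recorded in the remark following the identification $\gZb=\glt\cup\mathcal{P}$), the map $f$ factors through $\gZ/\glt$; under the identification $\gEZ=\{g|_{\elt}\colon g\in\gZ\}$ of Section~\ref{sec:GalCorrEEbar} this factored map is the homomorphism $f_{\elt}$ characterised by $f_{\elt}(g|_{\elt})=f(g)$. Hence $f_{\elt}(\gEZ)=f(\gZ)=\hgr$ and $\ker f_{\elt}=\{g|_{\elt}\colon g\in\gZb\}=:\gEZb$, so the first isomorphism theorem applied to $f_{\elt}$ gives $\gEZ/\gEZb\cong\hgr$, which is (2)(b); and $\langle e\rangle\vartriangleleft\gEZb\vartriangleleft\gEZ$ is a normal series for the same reason as before. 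For (2)(a) one traces the four cases through the restriction map: $\bar{g}_2$ acts nontrivially only on $\sqrt{d}\notin\elt$, so $\bar{g}_2|_{\elt}=e$, whereas $\bar{g}_1|_{\elt}$ is a nontrivial involution because $\bar{g}_1(\tau)=\tau^{-1}\ne\tau$ and $\tau\in\elt$; consequently $\gEZb$ is $\langle e\rangle$ when $\bar{g}_1$ is undefined and $\langle\bar{g}_1|_{\elt}\rangle\cong\mathbb{Z}_2$ when it is defined.

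The substantive content of the theorem already sits in Lemma~\ref{lm:homgZ} and in the four-case computation of $\gZb$; given those, the argument above is purely formal bookkeeping with the first isomorphism theorem. The one place where a little care is needed — the closest thing here to an obstacle — is confirming that $f$ and $f_{\elt}$ have exactly the same image and that the kernel behaves correctly under restriction to $\elt$, i.e.\ checking case by case that $\bar{g}_2|_{\elt}=e$ while $\bar{g}_1|_{\elt}\ne e$. This is precisely why the four-case list was set up in the form given, carrying along the auxiliary information about whether $\sqrt{d}\in\elt$.
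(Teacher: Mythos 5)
Your proposal is correct and follows essentially the same route as the paper: the paper also obtains the theorem by taking $\hgr$ to be the image of the homomorphism $f$ of Lemma~\ref{lm:homgZ} (so $\gZ/\gZb\cong\hgr$ with $\gZb=\ker f$), reading off (1)(a) from the four-case determination of $\gZb$ as $\langle e\rangle$, $\langle\bar{g}_1\rangle$, $\langle\bar{g}_2\rangle$ or $\langle\bar{g}_1,\bar{g}_2\rangle$, and then using $\glt\subseteq\gZb$ together with $\gEZ\cong\gZ/\glt$ to induce $f_{\elt}$ with $f_{\elt}(g|_{\elt})=f(g)$, kernel $\gEZb$, and the same image $\hgr$. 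Your explicit verification that $\bar{g}_2|_{\elt}=e$ while $\bar{g}_1|_{\elt}$ is a nontrivial involution is exactly the ``straightforward'' step the paper leaves to the reader.
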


Finally, let $\eZb$ be the fixed field of $\gZb$.  Then applying the Galois correspondence to the series
\be
\langle e \rangle \vartriangleleft \glt   \vartriangleleft  \gZb  \vartriangleleft  \gZ \subseteq \gc \subseteq \gl
\label{eq:gpSeries}
\ee
gives us the series
\be
\el  \vartriangleright  \elt \vartriangleright  \eZb \vartriangleright \eZ \supseteq \eC \supseteq   \mathbb{Q}
\label{eq:fldSeries}
\ee

\section{Conjectures}
\label{sec:conjectures}
In this section we propose some conjectures suggested by a study of the known exact fiducials for $d\ge 4$ (we discuss the case $d<4$ in Section~\ref{sec:d2and3}; for the detailed analysis on which on the statements in this section are based see Appendix~\ref{sec:tbles}).  The orbits to which these  fiducials belong comprise (using the classification of Scott and Grassl~\cite{ScottGrassl}) the $5$ doublets $9a,b$, $11a,b$, $13a,b$, $14a,b$, $16a,b$ and the $17$ singlets $4a$, $5a$, $6a$, $7a$, $7b$, $8a$, $8b$, $10a$, $11c$, $12a$, $12b$, $15d$, $19e$, $24c$, $28c$, $35j$, $48j$, where by a doublet we mean a pair of orbits related by a Galois automorphism (so $\eZ \neq \eC$), and by a singlet we mean an orbit closed under the action of the Galois group (so $\eZ = \eC$).  That is $27$ orbits in total.  Exact fiducials for orbits $16a,b$ and $28c$ can be found in refs.~\cite{Appleby2EtAl,Appleby3EtAl}, exact fiducials for all the others can  be found in  ref.~\cite{ScottGrassl}.   In many cases these fiducials are not simple.  The exact simple fiducials on which the calculations in this paper are based are available online at ref.~\cite{fiducials}.

In  Section~\ref{sec:StructureThm} we showed that the 3 leftmost inclusions  in Eqs.~(\ref{eq:gpSeries}) and~(\ref{eq:fldSeries}) are necessarily normal.  
We also proved a structure theorem for the groups  $\gEZ$ and $\gZ$.  The known exact fiducials for $d> 3$ suggest that it may be possible to prove some much stronger statements.   
Specifically, we find that, for all $27$ known cases with $d> 3$,
\begin{enumerate}
\item \label{it:norm} Every inclusion in Eqs.~(\ref{eq:gpSeries}) and~(\ref{eq:fldSeries}) is  normal:
\begin{align}
\langle e \rangle \vartriangleleft \glt \vartriangleleft \gZb&\vartriangleleft  \gZ \vartriangleleft \gc \vartriangleleft \gl
\\
\el \vartriangleright \elt \vartriangleright \eZb &\vartriangleright \eZ \vartriangleright \eC \vartriangleright \mathbb{Q}
\\
\intertext{and, consequently,}
\langle e \rangle \vartriangleleft \gEZb&\vartriangleleft  \gEZ \vartriangleleft \gEc \vartriangleleft \gEl
\end{align}
\item \label{it:gcAbelian}$\gEc$, $\gc$  are Abelian. 
\item \label{it:QaProperty}  For the extension $\eC \vartriangleright \mathbb{Q}$
\begin{align}
\eC &= \mathbb{Q}(a)
\\
\gEl/\gEc \cong \gl / \gc &\cong \mathbb{Z}_2
\end{align}
where $a$ is the square-root of the square-free part of $(d-3)(d+1)$:  \emph{i.e.}
the quantity
\be
a = \frac{\sqrt{(d-3)(d+1)}}{n}
\ee
with $n$  the largest positive integer such that $n^2$ is a factor of $(d-3)(d+1)$.
\item For the extension $\eZ \vartriangleright \eC$
\begin{align}
\eZ & =
 \begin{cases} 
 \eC \qquad &\text{orbit is a singlet}
\\
\eC(\sqrt{p}) \qquad & \text{orbit is one of a doublet}
\end{cases}
\\
\gEc/\gEZ \cong \gc/\gZ &\cong
 \begin{cases} 
 \mathbb{Z}_1 \qquad &\text{orbit is a singlet}
\\
\mathbb{Z}_2  \qquad & \text{orbit is one of a doublet}
\end{cases}
\end{align}
where $p$ is a prime divisor of $a^2$ (\emph{i.e.} a prime divisor of $(d-3)(d+1)$ having odd multiplicity).
\item \label{it:gZgZbQuotient} For the extension $\eZb \vartriangleright \eZ$
\be
\gEZ/ \gEZb \cong \gZ/\gZb\cong \cbstb{\Pi}/\bstb{\Pi}
\ee
where $\Pi$ is any simple fiducial on the orbit and $\cbstb{\Pi}$ is the centralizer of $\bstb{\Pi}$ considered as a subgroup of $\GL(2,\mathbb{Z}_{\bar{d}})$.  This is a much stronger statement than Theorem~\ref{thm:Structure}, which only says that $\gEZ/\gEZb$, $\gZ/\gZb$ are isomorphic to a subgroup of $\nbstb{\Pi}/\bstb{\Pi}$, without saying which particular subgroup.
\item For the extension $\elt \vartriangleright \eZb$
\begin{align}
\elt & = \eZb(i) \text{ or } \eZb(i \sqrt{d})
\\
\gEZb \cong \gZb/\glt &\cong \mathbb{Z}_2
\end{align}
\item For the extension $\el \vartriangleright \elt$ 
\begin{align}
\el & =
\begin{cases}
 \elt \qquad & \sqrt{d} \in \elt
 \\
 \elt(\sqrt{d}) \qquad & \sqrt{d} \notin \elt
 \end{cases}
\\
\glt &\cong
\begin{cases}
 \mathbb{Z}_1\qquad & \sqrt{d} \in \elt
 \\
 \mathbb{Z}_2 \qquad & \sqrt{d} \notin \elt
 \end{cases}
\end{align}
\item \label{it:glgZnormSeries} The series
\be
\langle e \rangle \vartriangleleft \gZ \vartriangleleft \gl
\ee
is normal, with $\gZ$, $\gl/\gZ$ both Abelian and
\begin{align}
\gZ & \cong
\begin{cases}
\mathbb{Z}_2 \oplus \cbstb{\Pi}/\bstb{\Pi} \qquad & \sqrt{d} \in \elt
\\
\mathbb{Z}_2 \oplus \mathbb{Z}_2 \oplus \cbstb{\Pi}/\bstb{\Pi} \qquad & \sqrt{d} \notin \elt
\end{cases}
\\
\gl/\gZ & \cong 
\begin{cases}
\mathbb{Z}_2 \qquad & \text{orbit is a singlet}
\\
\mathbb{Z}_2 \oplus \mathbb{Z}_2 \qquad & \text{orbit is one of a doublet}
\end{cases}
\end{align}
\item \label{it:gElgEZnormSeries} The series
\be
\langle e \rangle \vartriangleleft \gEZ \vartriangleleft \gEl
\ee
is normal, with $\gEZ$, $\gEl/\gEZ$ both Abelian and
\begin{align}
\gEZ & \cong
\mathbb{Z}_2 \oplus \cbstb{\Pi}/\bstb{\Pi} 
\\
\gEl/\gEZ & \cong 
\begin{cases}
\mathbb{Z}_2 \qquad & \text{orbit is a singlet}
\\
\mathbb{Z}_2 \oplus \mathbb{Z}_2 \qquad & \text{orbit is one of a doublet}
\end{cases}
\end{align}
\end{enumerate}
Note that items~\ref{it:gcAbelian} and \ref{it:QaProperty} in this list together imply the statement made in the Introduction, that $\el$ is an Abelian extension of the real quadratic field $\mathbb{Q}(a)$.  It is to be observed that the extensions $\eC \vartriangleright \mathbb{Q}$, $\eZ \vartriangleright \eC$, $\elt \vartriangleright \eZb$, $\el \vartriangleright \elt$ are all degree $2$ at most, with Galois groups $\cong \mathbb{Z}_1$ or $\mathbb{Z}_2$.  By contrast the extension $\eZb \vartriangleright \eZ$ is degree $\gg 2$ and its Galois group $\cong \cbstb{\Pi}/\bstb{\Pi}$ is correspondingly richer.  So it could be said, loosely speaking, that the most of the ``meat'' of the problem resides in the extension  $\eZb \vartriangleright \eZ$.

It is tempting to conjecture that the items on the above list hold generally, for every orbit with $d>3$.  And, indeed, for type $z$ orbits we shall so conjecture.  However, to arrive at a plausible conjecture for type $a$ orbits  items~\ref{it:gZgZbQuotient},   \ref{it:glgZnormSeries} and~\ref{it:gElgEZnormSeries}  need to be modified.  

For the  statements in the list all to be true it is necessary that $\bstb{\Pi}$ should be Abelian  (since otherwise  $\cbstb{\Pi}/\bstb{\Pi}$ would not be defined).   Assuming $\bstb{\Pi}$  is Abelian it is also necessary that $\cbstb{\Pi}/\bstb{\Pi}$ should be Abelian (since otherwise $\gc$ would be non-Abelian).   The first requirement does not present us with any obvious difficulties.  Although $\bstb{\Pi}$ is non-Abelian for the fiducials in dimensions $2$ and $3$, it is Abelian (in fact cyclic) for every orbit on which a fiducial (exact or numerical) has been calculated in dimensions $d>3$.  The conjecture that it  is always Abelian for $d>3$ is therefore plausible.  However, there is a problem with the second requirement, that $\cbstb{\Pi}/\bstb{\Pi}$ is Abelian.

The following lemma shows that if the orbit is type $z$ there is no difficulty (provided that $\bstb{\Pi}$ is Abelian).
\begin{lemma}
Let $\Pi$ be a simple fiducial such that $\bstb{\Pi}$ is Abelian and contains a matrix $F$ conjugate to $F_z$.  Then $\cbstb{\Pi}$ consists of all matrices $\in \GL(2,\mathbb{Z}_{\bar{d}})$ which can be written in the form
\be
n I + m F
\label{eq:lm10Res}
\ee
for some $n$, $m\in \mathbb{Z}_{\bar{d}}$.  In particular $\cbstb{\Pi}$, and consequently $\cbstb{\Pi}/\bstb{\Pi}$, are Abelian.
\end{lemma}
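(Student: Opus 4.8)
The plan is to reduce the problem to the single matrix $F_z$, compute its centralizer in the full matrix ring $M_2(\mathbb{Z}_{\bar{d}})$ by a direct entry-by-entry calculation, and then pass from the centralizer of one element to the centralizer of the whole group $\bstb{\Pi}$, at which point the Abelian statements are immediate.

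First I would reduce to $F = F_z$. Writing $F = P F_z P^{-1}$ with $P\in\GL(2,\mathbb{Z}_{\bar{d}})$ (which is what ``conjugate'' means here), conjugation by $P$ is a ring automorphism of $M_2(\mathbb{Z}_{\bar{d}})$ and a group automorphism of $\GL(2,\mathbb{Z}_{\bar{d}})$ which carries $\{nI+mF_z\}$ to $\{nI+mF\}$ and the centralizer of $F_z$ to the centralizer of $F$. So it suffices to prove that the centralizer of $F_z$ in $M_2(\mathbb{Z}_{\bar{d}})$ is exactly $\{nI+mF_z : n,m\in\mathbb{Z}_{\bar{d}}\}$; intersecting with $\GL(2,\mathbb{Z}_{\bar{d}})$ then yields the lemma's description for those matrices $nI+mF$ that happen to be invertible.

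For the centralizer of $F_z$ I would first record the elementary but crucial fact that $d-1$ and $d+1$ are both units modulo $\bar{d}$: when $d$ is odd $\bar{d}=d$ and they reduce to $\mp 1$, and when $d$ is even $\bar{d}=2d$ and $d\pm 1$ is odd, hence coprime to $2d$. Then, writing a general matrix $G=\bmt a & b \\ c & e\emt$ and imposing $GF_z=F_zG$, comparison of entries (using that $d-1$ is a unit to cancel it wherever it appears) collapses to the two conditions $e=a+b$ and $(d-1)c=(d+1)b$, so the general solution is $c=(d-1)^{-1}(d+1)b$ with $a,b$ arbitrary. Since $nI+mF_z=\bmt n & m(d-1) \\ m(d+1) & n+m(d-1)\emt$ is of exactly this shape with $a=n$ and $b=m(d-1)$, and since $b=m(d-1)$ runs over all of $\mathbb{Z}_{\bar{d}}$ as $m$ does (again because $d-1$ is a unit), the two families coincide. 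Hence the centralizer of $F_z$, and so of $F$, in $M_2(\mathbb{Z}_{\bar{d}})$ is $\{nI+mF\}$.

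Finally I would deduce the statement about $\cbstb{\Pi}$. Since $F\in\bstb{\Pi}$, every element of $\cbstb{\Pi}$ commutes with $F$, so $\cbstb{\Pi}$ lies inside the set of matrices in $\GL(2,\mathbb{Z}_{\bar{d}})$ of the form $nI+mF$. Conversely, because $\bstb{\Pi}$ is Abelian, each of its elements commutes with $F$ and therefore equals some $n'I+m'F$ by the previous step; as any two polynomials in $F$ commute, every invertible $nI+mF$ commutes with all of $\bstb{\Pi}$ and hence lies in $\cbstb{\Pi}$. This gives the claimed equality; $\cbstb{\Pi}$ is then Abelian because it sits inside the commutative ring $\mathbb{Z}_{\bar{d}}[F]$, and $\cbstb{\Pi}/\bstb{\Pi}$ is Abelian as a quotient of an Abelian group. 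The only step needing any real care is the units claim in the even case (without it the entry comparison would not pin $G$ down to a rank-two family but to something larger), together with the bookkeeping check that the conjugating matrix $P$ is genuinely invertible over $\mathbb{Z}_{\bar{d}}$; both are routine.
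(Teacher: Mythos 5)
Your proposal is correct and follows essentially the same route as the paper: compute the centralizer of $F_z$ by direct entry comparison (the conditions you find, $e=a+b$ and $(d-1)c=(d+1)b$, are equivalent to the paper's $z=-y$, $w=x+y$ since $d+1\equiv-(d-1)\pmod{\bar d}$), transfer to $F$ by conjugation, and then use that $\bstb{\Pi}$ is Abelian together with the commutativity of $\{nI+mF\}$ to conclude $\cbstb{\Pi}=C_F$. Your explicit verification that $d\pm1$ are units mod $\bar d$ and the spelled-out final inclusion argument merely make precise steps the paper leaves implicit.
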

\begin{remark}
This result means that we can write down the group $\cbstb{\Pi}$  without knowing which, if any,  matrices apart from $F$ are in $\bstb{\Pi}$.
\end{remark}
\begin{proof}
Let $C_{F_z}$ be the centralizer of $F_z$. The  matrix
\be
G = \bmt x & y \\ z & w \emt 
\ee
commutes with $F_z$ if and only if
\be
F_z G F^{-1}_z = G
\ee
which one easily sees to be equivalent to the condition $z = -y$ and $w=x+y$.  It follows that $C_{F_z}$ consists of all matrices $\in \GL(2,\mathbb{Z}_{\bar{d}})$ of the form
\be
n I + m F_z
\ee
for some $n,m\in\mathbb{Z}_{\bar{d}}$.
By assumption
\be
F = H F_z H^{-1}
\ee
for some $H\in\ESL(2,\mathbb{Z}_{\bar{d}})$.  So $C_F$, the centralizer of $F$, consists of all matrices of the form 
\be
n I + m F
\ee
for some $n,m\in\mathbb{Z}_{\bar{d}}$.  The fact that $C_F$ is Abelian means that $C_F = \cbstb{\Pi}$.
\end{proof}
\noindent For type $z$ orbits we shall accordingly conjecture that the items on the above list hold as stated for every dimension $>3$.

The difficulty comes when we consider type $a$ orbits.  There are two such orbits on which exact fiducials have been calculated (namely $12b$ and $48g$), and for both of those $\cbstb{\Pi}/\bstb{\Pi}$ is Abelian.  However, if one looks at other type $a$ orbits, where only a numerical fiducial is known, one finds that there are cases where $\cbstb{\Pi}/\bstb{\Pi}$ is non-Abelian. For instance, in the case of the Scott-Grassl numerical fiducial on orbit $21e$, the  matrices
\begin{align}
G_1 & =\bmt 0 & 1 \\ 2 & 6 \emt & G_2 & = \bmt 0 & 1 \\ 2 & 13 \emt
\end{align}
both $\in \cbstb{\Pi}$, but $G_1\bstb{\Pi}$ and $G_2\bstb{\Pi}$ do not commute.

The following lemma will help us to understand why $\cbstb{\Pi}/\bstb{\Pi}$ is Abelian for some type $a$ orbits, but not for others. It will also help us to formulate a modified conjecture for type $a$ orbits.
\begin{lemma} 
Let  $d=9k+3$ for some positive integer $k$ and let $F\in \SL(2,\mathbb{Z}_{\bar{d}})$ be conjugate to $F_a$.  Let $C_F$ be the centralizer of  $\langle F \rangle$ considered as a subgroup of $\GL(2,\mathbb{Z}_{\bar{d}})$.   Let $G$ be any solution to the equation $3G = F-I$.   Then $C_F$ consists of all matrices  $\in \GL(2,\mathbb{Z}_{\bar{d}})$ which can be written in the form
\be
nI + m G + \frac{\bar{d}}{3}H
\label{eq:centTypeaFmla}
\ee 
for some $n,m\in \mathbb{Z}_{\bar{d}}$ and  arbitrary matrix $H$.  
\end{lemma}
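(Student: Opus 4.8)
The plan is to conjugate $F$ to $F_a$ and then invoke the Chinese Remainder Theorem.  If $F=PF_aP^{-1}$ then $C_F=PC_{F_a}P^{-1}$, the equation $3G=F-I$ becomes $3(P^{-1}GP)=F_a-I$, and conjugation by $P$ carries the set on the right of Eq.~(\ref{eq:centTypeaFmla}) for $F_a$ to the corresponding set for $F$ (as $H$ runs over all matrices so does $P^{-1}HP$); so it suffices to treat $F=F_a$, and $C_{F_a}$ is simply the set of invertible matrices commuting with $F_a$.  In fact the only properties of $F_a$ that enter are $\det F_a=1$, $\Tr F_a=d-1\equiv -1\pmod d$, and the fact that $F_a-I=3\bsmt 0 & 3k+2\\ 4k+1 & 3k\esmt$ when $d=9k+3$, so in particular $F_a\equiv I\pmod 3$.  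I would also record at the start that, since $d=9k+3=3(3k+1)$ with $3k+1$ coprime to $3$, the prime $3$ divides $\bar d$ exactly once; writing $\bar d=3e$ with $\gcd(3,e)=1$, the Chinese Remainder Theorem gives $M_2(\mathbb{Z}_{\bar d})\cong M_2(\mathbb{Z}_3)\times M_2(\mathbb{Z}_e)$, under which the centralizer of $F_a$ in $M_2(\mathbb{Z}_{\bar d})$ is the product of its centralizers modulo $3$ and modulo $e$.

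Modulo $3$ there is nothing to do: $F_a\equiv I$, so its centralizer in $M_2(\mathbb{Z}_3)$ is all of $M_2(\mathbb{Z}_3)$.  The substantive step is the centralizer modulo $e$.  Here $3$ is a unit, so $G\equiv 3^{-1}(F_a-I)$ and $F_a=I+3G$ exhibit $G$ and $F_a$ as polynomials in one another, whence $\mathbb{Z}_e[G]=\mathbb{Z}_e[F_a]=\{\,nI+mG:n,m\in\mathbb{Z}_e\,\}$ (the last equality from the Cayley--Hamilton relation for $G$).  To see that the centralizer is no larger, I would first check that $F_a$ is not congruent to a scalar matrix modulo any prime $p\mid e$: every such $p$ divides $d$, so $\Tr F_a\equiv-1\pmod p$, and $F_a\equiv\lambda I\pmod p$ would force $2\lambda\equiv-1$ and (since $\det F_a=1$) $\lambda^2\equiv1$, hence $4\equiv4\lambda^2\equiv1\pmod p$ and $p\mid3$, contradicting $\gcd(e,3)=1$.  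Working one prime power at a time and lifting a non-eigenvector from $\mathbb{F}_p$, one then produces $v\in\mathbb{Z}_e^2$ with $v,F_av$ a $\mathbb{Z}_e$-basis (the determinant $\det[\,v\mid F_av\,]$ being a unit modulo every $p\mid e$).  Thus $\mathbb{Z}_e^2=\mathbb{Z}_e[F_a]\,v$ is a cyclic $\mathbb{Z}_e[F_a]$-module, and the usual argument---if $A$ commutes with $F_a$ and $Av=q(F_a)v$ then $A=q(F_a)$ everywhere---shows the centralizer of $F_a$ in $M_2(\mathbb{Z}_e)$ equals $\mathbb{Z}_e[F_a]=\{nI+mG\}$.

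Let $V$ be the set on the right of Eq.~(\ref{eq:centTypeaFmla}).  A direct computation shows $V$ commutes with $F_a$: $nI$ and $mG$ commute with $F_a=I+3G$, while $\tfrac{\bar d}{3}H\,F_a-F_a\,\tfrac{\bar d}{3}H=\tfrac{\bar d}{3}(HF_a-F_aH)=\tfrac{\bar d}{3}\cdot3\,(HG-GH)=\bar d\,(HG-GH)=0$.  For the reverse inclusion, note that under $\mathbb{Z}_{\bar d}\cong\mathbb{Z}_3\times\mathbb{Z}_e$ the $\mathbb{Z}_e$-component of every element of $V$ has the form $nI+mG$ (since $\tfrac{\bar d}{3}=e\equiv0\pmod e$), while its $\mathbb{Z}_3$-component is arbitrary (since $\tfrac{\bar d}{3}=e$ is a unit modulo $3$, the term $\tfrac{\bar d}{3}H$ alone sweeps out all of $M_2(\mathbb{Z}_3)$).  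Comparing with the two computations above, the image of $V$ under the Chinese Remainder isomorphism is exactly $M_2(\mathbb{Z}_3)\times\mathbb{Z}_e[F_a]$, i.e.\ the full centralizer of $F_a$ in $M_2(\mathbb{Z}_{\bar d})$.  Hence $V$ is that centralizer, and intersecting with $\GL(2,\mathbb{Z}_{\bar d})$ yields $C_F=V\cap\GL(2,\mathbb{Z}_{\bar d})$, which is the claim.

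The one genuinely delicate point is showing that the centralizer of $F_a$ modulo $e$ is no larger than $\mathbb{Z}_e[F_a]$.  Over a field this is the familiar fact that a non-scalar $2\times2$ matrix is non-derogatory, but $\mathbb{Z}_e$ is not a domain, so one must go through the cyclic-vector/Nakayama argument prime power by prime power.  It is worth stressing that the obvious shortcut---using that $F_a$ satisfies $x^2+x+1=0$ modulo $\bar d$, which would give non-derogatoriness at once---is \emph{not} available, because $\Tr F_a=d-1$ is congruent to $-1$ only modulo $d$ and, when $d$ is even, not modulo $\bar d=2d$.  Using only the weaker input that $F_a$ is non-scalar modulo each prime dividing $e$ is precisely what sidesteps this difficulty.
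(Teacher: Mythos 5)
Your proof is correct, and it takes a genuinely different route from the paper's. The paper argues by direct computation with the explicit entries of $F_a$: if $L$ commutes with $F_a$ then $L-xI$ commutes with $F_a-I=3G$, and writing out that matrix equation (all of whose entries are divisible by $3$) and cancelling the $3$ gives three linear congruences mod $\bar{d}/3$, which are solved explicitly to yield $L\equiv xI-3zG \pmod{\bar{d}/3}$, i.e.\ exactly the form $nI+mG+\tfrac{\bar{d}}{3}H$; the converse is the same short commutator computation you give at the end. You instead split $M_2(\mathbb{Z}_{\bar{d}})\cong M_2(\mathbb{Z}_3)\times M_2(\mathbb{Z}_e)$ by the Chinese Remainder Theorem (legitimate, since $3$ divides $\bar{d}$ exactly once when $d=9k+3$), note that $F_a\equiv I\pmod 3$ so the centralizer is all of $M_2(\mathbb{Z}_3)$ there, and identify the commutant mod $e$ with $\mathbb{Z}_e[F_a]=\{nI+mG\}$ by a cyclic-vector argument, using your trace/determinant check that $F_a$ is non-scalar modulo every prime $p\mid e$. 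Your structural route explains why the answer has this shape---the $\tfrac{\bar{d}}{3}H$ term is precisely the unconstrained $3$-part---and it uses only $\det F_a=1$, $\Tr F_a\equiv -1\pmod d$ and $F_a\equiv I\pmod 3$ rather than the specific entries, so it is more portable; the cost is the extra care needed to run the non-derogatory/cyclic-vector argument over the non-domain $\mathbb{Z}_e$ (which you rightly handle prime by prime), plus one lightly glossed step: since $n$, $m$, $H$ enter both CRT components, you should say explicitly that they can be adjusted independently mod $e$ and mod $3$ (choose $n,m$ to hit the target mod $e$, then solve for $H$ mod $3$ using that $e$ is a unit there), so that the image of your set $V$ is the full product $M_2(\mathbb{Z}_3)\times\mathbb{Z}_e[F_a]$ and not merely contained in it. The paper's computation is shorter and entirely self-contained but case-specific; your closing observation that the shortcut via $F_a^2+F_a+I=0$ is unavailable mod $\bar{d}$ for even $d$ is accurate and worth keeping.
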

\begin{proof}
We will prove the result for the case $F=F_a$.  The generalization to arbitrary $F$ conjugate to $F_a$ is immediate.

Let 
\be
L = \bmt x & y \\ z & w \emt \in \GL(2,\mathbb{Z}_{\bar{d}})
\ee
Suppose $L$ commutes with $F_a$.  Then $L' = L - x I$ commutes with $F_a- I$.  So 
\begin{align}
\bmt 0 & y \\ z & w-x\emt \bmt 0 & 9k+6 \\ 12k + 3 & 9k \emt &= \bmt 0 & 9k+6 \\ 12k + 3 & 9k \emt \bmt 0 & y \\ z & w-x\emt  & &\text{(mod $\bar{d}$)}
\end{align}
or, equivalently,
\begin{align}
(4k+1)y & = (3k+2)z  &  &\text{(mod $\bar{d}/3$)}
\\
3k y & = (3k+2)(w-x) & & \text{(mod $\bar{d}/3$)}
\\
(4k+1)(w-x) & = 3k z  & &  \text{(mod $\bar{d}/3$)}
\end{align}
Using the fact that $-3(4k+1) = 1$ (mod $\bar{d}$)  we deduce that
\begin{align}
L & = x I + z \bmt 0 & 3k-2 \\ 1 & - (3k-2)\emt && \text{(mod $\bar{d}/3$)}
\\
& = x I - 3 z G && \text{(mod $\bar{d}/3$)}
\end{align}
where 
\be
G = \bmt 0 & 3k+2 \\ 4k+1 & 3k \emt
\label{eq:Gmt}
\ee
satisfies $3G = F_a-I$.  

Suppose, on the other hand, 
\be
L = n I + m G + \frac{\bar{d}}{3} H
\ee
where  $n,m\in\mathbb{Z}_{\bar{d}}$,  $G$ is a solution to $3G = F_a-I$ and $H$ is arbitrary.  We have
\begin{align}
(F_a-I)H - H (F_a-I) &= 3 (GH-HG)
\\
\intertext{and}
(F_a-I)G - G(F_a-I) & = 3G^2 - 3G^2 = 0
\end{align}
So $F_a-I$, and consequently $F_a$ commutes with $L$.
\end{proof}
The presence of the arbitrary matrix $H$ on the right hand side of Eq.~(\ref{eq:centTypeaFmla}) means that $C_F$ is non-Abelian.  So for orbits for which $\bstb{\Pi} = \langle F\rangle$ for some $F$ conjugate to $F_a$ the centralizer $\cbstb{\Pi}$ is non-Abelian.  Of the type $a$ orbits for which a fiducial is known (exact or numerical) this is true of $21e$, $30d$, $39g,h$,  $48e$.  But for orbits $12b$ and $48g$ the group $\langle F\rangle$ is a proper subgroup of $\bstb{\Pi}$.  As a result $\cbstb{\Pi}$ is a proper subgroup of $C_F$, which is how it comes about that $\cbstb{\Pi}$ is Abelian.

We actually find that for orbits $12b$, $48g$ the group $\cbstb{\Pi}$ consists of all matrices $\in\GL(2,\mathbb{Z}_{\bar{d}})$ which can be written in the form
\be
nI + mG
\ee
for some $n,m\in\mathbb{Z}_{\bar{d}}$, and fixed matrix $G$ satisfying $3G = F-I$, where $F\in\bstb{\Pi}$ is conjugate to $F_a$.  
Specifically:
\begin{align}
G &= 
\begin{cases}
\bmt 0 & 5 \\ 5 & 3\emt \qquad & \text{for the Scott-Grassl numerical fiducial on $12b$}
\\
\bmt 0 & 17 \\ 53 & 79 \emt \qquad & \text{for the Scott-Grassl numerical fiducial on $48g$}
\end{cases}
\end{align}

This suggests a  modification of conjectures~\ref{it:gZgZbQuotient}, \ref{it:glgZnormSeries}  which may perhaps hold for all type $a$ orbits.  If $\Pi$ is a simple fiducial on a type $a$ orbit with canonical trace $-1$ symplectic $F$ define, for  $G$ satisfying $3G = F-I$,
\be
\cbstb{\Pi}^G = \{ n I + m G \colon n,m\in\mathbb{Z}_{\bar{d}} \text{ and } \gcd(\Det(nI+mG),\bar{d}) =1 \}
\ee
For type $a$ orbits we then replace $\cbstb{\Pi}/\bstb{\Pi}$ in  conjectures~\ref{it:gZgZbQuotient}, \ref{it:glgZnormSeries} and~\ref{it:gElgEZnormSeries} with $\cbstb{\Pi}^G/\bstb{\Pi}^{\vphantom{G}}$. Note that there will, in general, be more than one candidate for the matrix $G$ (although the requirements that  $\cbstb{\Pi}^G$ is a group and that $\bstb{\Pi}^{\vphantom{G}}\subseteq \cbstb{\Pi}^G$ impose a constraint).

\section{Further remarks concerning the generators of $\el$ for $d>3$}
\label{sec:fldGen}
In Section~\ref{sec:conjectures} we presented a number of conjectures regarding the fields  and   groups for $d>3$.  The purpose of this section is to present two additional observations which, though perhaps less interesting, still seem worthy of note.

 In the first place it is noticeable that, for each of the known exact fiducials with $d >3$, the field $\el$  contains the square roots of all the prime divisors of $a^2$ and $d$.  However, with the exception of orbit $8a$ (where it contains $\sqrt{3}$, the generator denoted $b_2$ in the tables in Appendix~\ref{sec:tbles}), it does not contain the square roots of the prime divisors of $(d-3)(d+1)$ having even multiplicity. It is also noticeable that $\el$ never contains $\sqrt{p}$ for $p$  a prime which is not a divisor of $d(d-3)(d+1)$.

Our second remark concerns the fields for dimensions divisible by $3$.  With the exception of orbits $12b$ and $48g$ we find that $\el$ always has a generator of the form
\be
2\re\left((n+i m\sqrt{l})^{\frac{1}{3}}\right)
\ee
with minimal polynomial 
\be
x^3 - 3(n^2+m^2l)^{\frac{1}{3}}x - 2n
=
\begin{cases}
x^3 - d x - 2n \qquad & \text{$d$ even}
\\
x^3 - 2 d x - 2n \qquad & \text{$d$ odd}
\end{cases}
\ee
where $n$, $m$, $l$ are positive integers, $l$ is a divisor of $da^2$, and
\be
n^2+m^2l = 
\begin{cases}
\frac{d^3}{27} \qquad & \text{$d$ even}
\\
\frac{8d^3}{27} \qquad & \text{$d$ odd}
\end{cases}
\ee
 Specifically

\vspace{0.1 in}
\begin{center}

\begin{tabular}{|c|c|c|c|c|c|c|}
\hline
orbit & $n$ & $m$ & $l$ 
\\
\hline
$6a$ & 1 &1 & 7 
\\
\hline
$9a,b$ & 6 & 6 & 5
\\
\hline
$12a$ & 5 & 1 & 39
\\
\hline 
$15d$ & 10 & 30 & 1
\\
\hline
$24c$ & 8 & 8 & 7
\\
\hline
\end{tabular}

\end{center}

\vspace{0.1 in}

\noindent The fact that orbits $12b$ and $48g$ are exceptions to this rule could be related to the fact that these are type $a$ orbits.  It could also be related to the fact that for these orbits $\bstb{\Pi}$ is strictly larger than the cyclic subgroup generated by the matrix conjugate to $F_a$ (the results described in Section~\ref{sec:conjectures}  suggest that there is an inverse relation between the order of the group $\bstb{\Pi}$ and the degree of the field). 

\section{$g$-Unitaries}
\label{sec:Gunit}
By a definition a fiducial projector $\Pi$ is a joint eigenprojector of the group $\stb{\Pi}$.  In this section we show $\Pi$ is a joint eigenprojector of a larger group of $g$-unitaries.

The concept of a $g$-unitary is a generalization of the concept of an anti-unitary.  Given a Hilbert space $\mathcal{H}$  an anti-linear operator  is a map $\Gamma \colon \mathcal{H} \to \mathcal{H}$ such that 
\begin{align}
\Gamma (\alpha_1 \psi_1 + \alpha_2 \psi_2) &= \alpha_1^{*} \Gamma \psi_1 + \alpha_2^{*} \Gamma \psi_2
\\
\intertext{for all $\psi_1,\psi_2\in \mathcal{H}$ and $\alpha_1,\alpha_2 \in \mathbb{C}$ (switching temporarily from Dirac notation to standard mathematical notation).  Its adjoint  is the unique anti-linear operator $\Gamma^{\dagger}$ such that }
\langle \Gamma^{\dagger}\phi ,\psi\rangle & = \langle \phi, \Gamma \psi\rangle^{*}
\\
\intertext{for all $\phi, \psi \in\mathcal{H}$.  We can write}
\Gamma &= \tilde{\Gamma} U_J
\end{align}
where $\tilde{\Gamma}$ is a linear operator.  We then have
\be
\Gamma^{\dagger} = U^{\dagger}_J \tilde{\Gamma}^{\dagger} =U\vpu{\dagger}_J \tilde{\Gamma}^{\dagger} =  \tilde{\Gamma}^{\mathrm{T}} U\vpu{\dagger}_J
\ee
$\Gamma$ is said to be an anti-unitary if $\Gamma^{\dagger} \Gamma= I$.  $\Gamma$ is an anti-unitary  if and only if $\tilde{\Gamma}$ is a unitary.

We now define the concept of a $g$-unitary in a  way which  parallels these definitions. Let $\mathbb{F}$ be a normal extension of the rationals with Galois group $\graw^{\mathbb{F}}$.  The fact that the extension is normal means that  complex conjugation  $g_c$ is in the  group.  Let  $\graw^{\mathbb{F}}_c$ be the centralizer of $g_c$.  Let $\mathcal{H}$ be a vector space over $\mathbb{F}$.  Given $g\in\graw^{\mathbb{F}}_c$ a map $\Gamma \colon \mathcal{H}\to \mathcal{H}$ is $g$-linear if
\begin{align}
\Gamma (\alpha_1 \psi_1 + \alpha_2 \psi_2) &=g( \alpha_1)\Gamma \psi_1 + g(\alpha_2)\Gamma \psi_2
\\
\intertext{for all $\psi_1,\psi_2\in \mathcal{H}$ and $\alpha_1,\alpha_2 \in \mathbb{F}$ (so a $g_c$-linear operator is anti-linear, and an $e$-linear operator is linear).  Its adjoint  is the unique $g^{-1}$-linear operator $\Gamma^{\dagger}$ such that }
\langle \Gamma^{\dagger}\phi ,\psi\rangle & =g^{-1}\bigl( \langle \phi, \Gamma \psi\rangle\bigr)
\\
\intertext{Let $W_g$ be the $g$-linear operator whose action in the standard basis in}
\bigl(g(\psi)\bigr)_r & = g(\psi_r)
\\
\intertext{(so $W_{g_c} = U_J$).  Then}
\Gamma &= \tilde{\Gamma}W_g
\end{align}
where $\tilde{\Gamma}$ is a linear operator.  We have
\be
\Gamma^{\dagger} = W^{\dagger}_g \tilde{\Gamma}^{\dagger} = W\vpu{\dagger}_{g^{-1}} \tilde{\Gamma}^{\dagger}  = g^{-1}(\tilde{\Gamma}^{\dagger})W\vpu{\dagger}_{g^{-1}}
\ee
We say that $\Gamma$ is a $g$-unitary if $\Gamma^{\dagger}\Gamma = I$.   It is easily seen that $\Gamma$ is a $g$-unitary if and only if $\tilde{\Gamma}$ is a unitary.

Now specialize to the case of interest to us, $\mathbb{F} = \el$ and $\graw^{\mathbb{F}} = \gl$. We know from Section~\ref{sec:GenRes} that for each fiducial projector $\Pi$ and each $g\in \gZ$ there is a unitary $U_g$ such that
\be
g(\Pi) = U\vpu{\dagger}_g \Pi U^{\dagger}_g
\ee
So $\Pi$ is an eigenprojector of the $g$-unitary $V_g = U^{\dagger}_g W\vpu{\dagger}_g$:
\be
V\vpu{\dagger}_g \Pi V^{\dagger}_g = \Pi
\ee
 
 We conclude with some remarks concerning the action of the $V_g$ on the vectors in the subspace  onto which $\Pi$ projects, as this involves some subtleties which do not arise in the case of ordinary unitaries.
 
Let $|\Psi\rangle$ be an unnormalized vector corresponding to $\Pi$ (reverting to Dirac notation).  So
 \be
 \Pi = \frac{1}{\langle \Psi | \Psi\rangle} |\Psi\rangle \langle \Psi|
 \ee
 It is always possible to choose $|\Psi\rangle$ so that its standard basis components are in $\el$.  For instance one can define
 \be
 |\Psi\rangle = \sum_{r} \langle r |\Pi | s \rangle |r\rangle
 \ee
 for some fixed $s$ such that $|\Psi\rangle \neq 0$.  Assume that this has been done.  Then it is easily seen that
 \be
 V_g |\Psi\rangle = \lambda |\Psi\rangle
 \ee
 for some $\lambda$.  We have
 \be
 |\lambda|^2 = \frac{g^{-1}(\langle \Psi | \Psi\rangle)}{\langle \Psi | \Psi \rangle}
 \ee
   If $V_g$ was a unitary or anti-unitary we would have $|\lambda|^2= 1$.  But for an arbitrary $g$-unitary this will only be true if $\langle \Psi | \Psi \rangle$ is in the fixed field of $g$, which will typically not be the case.
    Of course we are free to  consider the normalized vector 
 \be
 |\psi\rangle = \frac{1}{\sqrt{\langle \Psi | \Psi \rangle}} |\Psi\rangle
 \ee
 For this one will have
 \be
 V_g |\psi\rangle = e^{i\theta} |\psi\rangle
 \ee
 for some phase $e^{i\theta}$.  However it will typically happen that the norm of $|\Psi\rangle$ is not in the field $\el$.  One is free to extend the field, so that it does contain $\sqrt{\langle \Psi | \Psi \rangle}$.  But the price one may have to pay is that the enlarged field is no longer an Abelian extension of a real quadratic field.
\section{WH SICs for  $d =2$ and $3$}
\label{sec:d2and3}
The WH SICs in dimensions $2$ and $3$ are exceptional in a number of ways.
In the first place for  $d=3$ there are infinitely  many WH SICs~\cite{Appleby1,RenesEtAl}.  The $\EC(d)$ orbits they generate are labelled by a real parameter $t \in [0,\frac{\pi}{6}]$.  An unnormalized fiducial vector on the orbit corresponding to parameter value $t$ is~\cite{Appleby1}
\be
\bmt 0 \\ 1 \\ e^{2i t}\emt
\ee
So the field $\elt$ will typically be transcendental, and even when it is algebraic it will typically not be solvable.  In this section we will therefore confine ourselves  to the two special cases $t=0$ and $\frac{\pi}{6}$  which are of particular interest~\cite{Hughston,Bengtsson1,Tabia2,LinDep}, and for which the Galois group is very  simple.  In the classification scheme of Scott and Grassl~\cite{ScottGrassl} they are orbits $3b$ (corresponding to $t=\frac{\pi}{6}$) and $3c$ (corresponding to $t=0$).   For these orbits the symmetry group is of higher order ($|\tstb{\Pi}| = 12, 48$ respectively, as opposed to $6$ for the generic orbit).  Also the SICs on these  orbits are related to the Hesse configuration known to projective geometers since the $19^{\mathrm{th}}$ century~\cite{Hughston,Bengtsson1,LinDep}.  

In the second place $(d-3)(d+1)$ is not a positive integer when $d=2$ or $3$, so we do not expect the number $a$ to have the same significance which it seems to have when $d>3$.  In fact, it turns out that $\gl$ is Abelian and $\eC = \mathbb{Q}$ for  orbits $2a$,  $3b$, $3c$.

In the third place the group $\bstb{\Pi}$ is non-Abelian for every orbit in dimensions $2$ and $3$, so  $\cbstb{\Pi}/\bstb{\Pi}$ is not defined.  In fact, it turns out that $\gZb/\gZ\cong \nbstb{\Pi}/\bstb{\Pi}$ for the three orbits $2a$, $3b$, $3c$.

The fields are as follows:

\vspace{0.1 in}

\begin{center}

\begin{tabular}{|c|c|c|c|c|c|}
\hline
orbit \myStrut{4} & $\eC$ & $\eZ$ & $\eZb$ & $\elt$ & $\el$
\\
\hline
$2a$ \myStrut{4} & $\mathbb{Q}$ & $\mathbb{Q}$ & $\eZ(\sqrt{3})$ & $\eZb(i)$ & $\elt(\sqrt{2})$
\\
\hline
$3b$, $3c$ \myStrut{4} & $\mathbb{Q}$ & $\mathbb{Q}$ & $\mathbb{Q}$ & $\eZb(i\sqrt{3})$ & $\elt(\sqrt{3})$
\\
\hline
\end{tabular}
\end{center}

\vspace{0.1 in}

For orbit $2a$   we have $\gl = \gc =\gZ =  \langle g_1,\bar{g}_1,\bar{g}_2\rangle$, $\gZb = \langle \bar{g}_1,\bar{g}_2\rangle$, $\glt = \langle \bar{g}_2\rangle$ where 

\vspace{0.1 in}

\begin{center}

\begin{tabular}{|c|c|c|c|}
\cline{2-4}
\multicolumn{1}{c|}{}  & \myStrut{4}$\sqrt{3}$ & $i$ & $\sqrt{2}$
\\
\hline
$g_1$ \myStrut{4} & $-\sqrt{3}$ & $i$ & $\sqrt{2}$
\\
\hline
$\bar{g}_1$ \myStrut{4} & $\sqrt{3}$ & $-i$ & $\sqrt{2}$
\\
\hline
$\bar{g}_2$ \myStrut{4} & $\sqrt{3}$ & $i$ & $-\sqrt{2}$
\\
\hline
\end{tabular}

\end{center}
and
\be
\gZb/\gZ \cong \nbstb{\Pi}/\bstb{\Pi} \cong \mathbb{Z}_2
\ee

For orbits $3b,c$ we have $\gl = \gc = \gZ = \gZb = \langle \bar{g}_1, \bar{g}_2\rangle$, $\glt = \langle \bar{g}_2\rangle$ where

\vspace{0.1 in}

\begin{center}

\begin{tabular}{|c|c|c|}
\cline{2-3}
\multicolumn{1}{c|}{} &\myStrut{4}  $i\sqrt{3}$ & $\sqrt{3}$
\\
\hline
$\bar{g}_1$ \myStrut{4} & $-i\sqrt{3}$ & $\sqrt{3}$
\\
\hline
$\bar{g}_2$ \myStrut{4} & $i\sqrt{3}$ & $-\sqrt{3}$
\\
\hline
\end{tabular}

\end{center}
and
\be
\gZb/\gZ \cong \nbstb{\Pi}/\bstb{\Pi} \cong \mathbb{Z}_1
\ee

\section{Conclusion}
Our  reason for undertaking this research was that we hoped that a study of the Galois group might suggest a solution to the SIC-existence problem.  That hope was not fulfilled.  Nevertheless we continue to feel that the striking simplicity of the Galois group, and the many structural features which our study has revealed, provide some clues which, combined with other insights, may eventually lead to a proof.

}

\section*{Acknowledgements}
We thank  Markus Grassl, David Gross and Jon Yard for many valuable discussions.
DMA was supported in part by the U.~S. Office of Naval Research (Grant
No.\ N00014-09-1-0247) and by the John Templeton Foundation.  DMA is also grateful to the Stellenbosch Institute for Advanced Study for their hospitality while carrying out some of the research for this paper.  Research at Perimeter Institute is supported
by the Government of Canada through Industry Canada and by the
Province of Ontario through the Ministry of Research \& Innovation.

\appendix

{\allowdisplaybreaks

\captionsetup[table]{font=scriptsize}

\section{Detailed description of the fields and groups for $d\ge 4$}
\label{sec:tbles}
We present our results in a series of tables and boxes.  We begin with an explanation of how to read this information.  

Many of the exact fiducials in refs.~\cite{ScottGrassl,Appleby2EtAl,Appleby3EtAl}  are not simple. By contrast the Scott-Grassl~\cite{ScottGrassl}  numerical fiducials are all simple. They are, besides,  eigenvectors of $U_{F_z}$ or $U_{F_a}$  (with the exception of orbits $8a,b$ where the numerical fiducials in ref.~\cite{ScottGrassl} are not in fact eigenvectors of $U_{F_z}$).    We have therefore worked with exact versions of the Scott-Grassl numerical fiducials (with the exception of orbits $8a,b$ where we have chosen fiducials which are eigenvectors of $U_{F_z}$).  Let $\Pi_0$ be the exact fiducial given in ref.~\cite{ScottGrassl,Appleby2EtAl,Appleby3EtAl} and let $\Pi$ be the exact fiducial to which the results in this appendix refer.  Then $\Pi$ is calculated using
\be
\Pi = D\vpu{\dagger}_{\mathbf{p}} U\vpu{\dagger}_F \Pi_0 U^{\dagger}_F D^{\dagger}_{\mathbf{p}}
\ee
where $\mathbf{p}$ and $F$ are tabulated in Table~\ref{tble:EtoNTransform}.  Exact fiducials corresponding to the projectors $\Pi$ are available online~\cite{fiducials}.

We choose the field generators to be
\begin{enumerate}
\item The  number $a$, defined to be  the square root  of the square-free part  of $(d-3)(d+1)$.
\item The  number $t$, defined to be either $\sin \frac{\pi}{d}$ or $\cos\frac{\pi}{d}$.
\item A set of numbers $r_1, \dots, r_{j} $ defined to be the prime factors of $d$ and $a^2$.
\item A set of numbers $b_1, \dots, b_{k} $ constructed recursively from $\mathbb{Q}(a,t,r_1,\dots, r_{k})$  by taking sums, products, square roots and  cube roots.
\item The  number $i=\sqrt{-1}$.
\end{enumerate}
With the exception of $i$  their values, and minimal polynomials when these are not obvious, are tabulated in Tables~\ref{tble:artGenerators}--\ref{tble:minPolyGt2}  for each dimension.  The various subfields are defined in terms of the field generators in Table~\ref{tble:fields}.  Expressions for $\tau$ in terms of the generators are given in Table~\ref{tble:tauVals}.  

The remaining information is tabulated orbit by orbit, in a series of boxes.  For each orbit box 1 specifies  the Galois group generators  in terms of the auxiliary quantities defined in box 2.   In terms of these generators $\gl$ and its subgroups are

\vspace{0.1 in}
 
\SMALL
\begin{center}
\begin{tabular}{|c|c|c|c|c|c|}
\hline
 \myStrut{4} & $\gl$ & $\gc$ & $\gZ$ & $\gZb$ & $\glt$
\\
\hline
 \myStrut{4} y, s& $\langle  g_a,g_1,\dots,g_n,\bar{g}_1\rangle$  & $\langle  g_1,\dots,g_n,\bar{g}_1\rangle$  & $\langle g_1,\dots,g_n, \bar{g}_1\rangle $ & $\langle \bar{g}_1\rangle$ & $\langle e\rangle$
\\
\hline
 \myStrut{4} y, d &$\langle g_a, g_s, g_1,\dots,g_n,\bar{g}_1\rangle$  & $\langle g_s, g_1,\dots,g_n,\bar{g}_1\rangle$  & $\langle g_1,\dots,g_n, \bar{g}_1\rangle $ & $\langle \bar{g}_1\rangle$ & $\langle e\rangle$
\\
\hline
 \myStrut{4} n, s & $\langle  g_a , g_1,\dots,g_n,\bar{g}_1,\bar{g}_2\rangle$& $\langle  g_1,\dots,g_n,\bar{g}_1,\bar{g}_2\rangle$  & $\langle g_1,\dots,g_n, \bar{g}_1,\bar{g}_2\rangle $ & $\langle \bar{g}_1,\bar{g}_2\rangle$ & $\langle \bar{g}_2\rangle$
\\
\hline
 \myStrut{4} n, d& $\langle g_a, g_s, g_1,\dots,g_n,\bar{g}_1,\bar{g}_2\rangle$  &$\langle g_s,  g_1,\dots,g_n,\bar{g}_1,\bar{g}_2\rangle$  & $\langle  g_1,\dots,g_n, \bar{g}_1,\bar{g}_2\rangle $ & $\langle \bar{g}_1,\bar{g}_2\rangle$ & $\langle \bar{g}_2\rangle$
\\
\hline
\end{tabular}
\end{center}
\normalsize

\vspace{0.1 in}

\noindent where in the left hand column  the notation ``y'' (respectively, ``n'')  signifies that $\sqrt{d}\in\elt$ (respectively, $\sqrt{d}\notin \elt$) and the notation ``s'' (respectively, ``d'') signifies that the orbit is a singlet (respectively, doublet).
 In the case of doublets $g_s$ switches between the $a$ orbit and the $b$ orbit while the generators $g_1,\dots,g_n,\bar{g}_1$ and (when defined) $\bar{g}_2$  are orbit-preserving. Aside from $g_a$ the generators are mutually commuting. For every orbit
 \begin{align}
 g_a^2 &= \bar{g}_1^2 = e  & g\vpu{-1}_a \bar{g}\vpu{-1}_1 g^{-1}_a& = g_c
\intertext{For those orbits for which $\bar{g}_2$ is defined we also have}
 \bar{g}_2^2 & = e  &   g\vpu{-1}_a \bar{g}\vpu{-1}_2 g^{-1}_a & = \bar{g}_2
 \end{align}
Remaining, orbit-specific relations are given in box 3 (aside from $g_s^2 = g_1$ which, in the case of those doublets for which it holds, is given in box 1).

For the $F$ and $G$ matrices corresponding to the generators $g_1, \dots, g_n$ we write
\begin{align}
F_{g_j} & = F_j  & G_{g_j} & = G_j
\end{align}
and tabulate them in box $4$.  In the case of a doublet we distinguish the two orbits with an additional label, writing $F_{aj}, G_{aj}$ for orbit $a$ and $F_{bj}$, $G_{bj}$ for orbit $b$. 
For the $F$ and $G$ matrices corresponding to the generator $\bar{g}_1$ we have, in every case,
\begin{align}
F_{\bar{g}_1} & =-J  & G_{\bar{g}_1} & = I
\\
\intertext{For those orbits for which $\bar{g}_2$ is defined we  also have}
F_{\bar{g}_2} &  = 
I 
& 
G_{\bar{g}_2} & = I
\end{align}
When $d=0$ (mod 3) we  write
\begin{align}
\qv_{g_j} & = \qv_j &  \qv_{\bar{g}_1} & = \bar{\qv}_1 & \rv_{g_j} & = \rv_j & \rv_{\bar{g}_1} & = \bar{\rv}_1
\end{align}
 and tabulate the vectors in box 7. As with the $F$ and $G$ matrices we add an additional label $a$ or $b$ in the case of a doublet.  When $\bar{g}_2$ is defined we have
\be
\qv_{\bar{g}_2} = \rv_{\bar{g}_2} = \boldsymbol{0}
\ee
in every case.
For doublets the orbit-switching automorphism $g_s$ acts according to
\begin{align}
g_s (\Pi_a) &=
\begin{cases}
U\vpu{\dagger}_{F_{as}} \Pi_bU^{\dagger}_{F_{as}} \qquad & d\neq 0 \text{ (mod $3$)}
\\
 D\vpu{\dagger}_{\qv_{as}}U\vpu{\dagger}_{F_{as}} \Pi_bU^{\dagger}_{F_{as}} D^{\dagger}_{\qv_{as}} \qquad & d = 0 \text{ (mod $3$)}
 \end{cases}
\\
g_s (\Pi_b) &= 
\begin{cases}
U\vpu{\dagger}_{F_{bs}} \Pi_aU^{\dagger}_{F_{bs}} & d\neq 0 \text{ (mod $3$)}
\\
D\vpu{\dagger}_{\qv_{bs}} U\vpu{\dagger}_{F_{bs}} \Pi_aU^{\dagger}_{F_{bs}}D^{\dagger}_{\qv_{bs}}  \qquad & d = 0 \text{  (mod $3$)}
\end{cases}
\\
g_s\Bigl(\Tr\left( D_{\mathbf{p}}\Pi_a\right) \Bigr)& =
\begin{cases}
\Tr\left( D_{G_{as}\mathbf{p}}\Pi_b\right) \qquad & d\neq 0 \text{ (mod $3$)}
\\
 \sigma^{\langle \rv_{as},\mathbf{p}\rangle} \Tr\left( D_{G_{as}\mathbf{p}}\Pi_b\right) \qquad & d = 0 \text{ (mod $3$)}
\end{cases}
\\
g_s\Bigl(\Tr\left( D_{\mathbf{p}}\Pi_b\right) \Bigr)& = 
\begin{cases}
 \Tr\left( D_{G_{bs}\mathbf{p}}\Pi_a\right)  \qquad & d \neq 0 \text{ (mod $3$)}
\\
\sigma^{\langle \rv_{bs},\mathbf{p}\rangle} \Tr\left( D_{G_{bs}\mathbf{p}}\Pi_a\right)  \qquad & d = 0 \text{ (mod $3$)}
\end{cases}
\end{align}
where $F_{as}$, $F_{bs}$, $G_{as}$, $G_{bs}$ are tabulated in box 4 and $\qv_{as}$, $\qv_{bs}$, $\rv_{as}$, $\rv_{bs}$ are tabulated in box 7 (when defined).

In every case the groups $\tstb{\Pi}$, $\bstb{\Pi}$ are cyclic:
\begin{align}
\tstb{\Pi} & = \langle F_0 \rangle  &  \bstb{\Pi} & = \langle G_0 \rangle 
\end{align}
The matrices $F_0$, $G_0$ are specified in box 4.   Note that in a doublet $F_0$, $G_0$ are the same for both orbits. For singlets
\begin{align}
\cbstb{\Pi} &= \langle G_0,G_1,\dots , G_n\rangle
\\
\intertext{while for doublets}
\cbstb{\Pi_a}& =\cbstb{\Pi_b} =  \langle G_0,G_{a1},\dots , G_{an}\rangle= \langle G_0,G_{b1},\dots , G_{bn}\rangle
\end{align}
The orders of the symmetry groups and centralizers are given in box 5.   Relations for the Abelian group $\cbstb{\Pi}$ ($\cbstb{\Pi_a}=\cbstb{\Pi_b}$ in the case of a doublet) are given in box 6. 

\vspace{0.2 in}

\begin{table}[H]
\SMALL
\begin{center}

 }
\end{document}